\documentclass[manuscript,screen,nonacm]{acmart}

\usepackage{hyperref}
\usepackage{xspace}

\usepackage{amsmath}
\usepackage{amsthm}
\usepackage{enumerate}
\usepackage{paralist} 
\usepackage{cleveref}
\usepackage{pifont}

\newcommand{\FO}{\mbox{\rm FO}}
\newcommand{\FOt}{\mbox{$\mbox{\rm FO}^2$}}
\newcommand{\TGF}{\mbox{$\mbox{\rm TGF}$}}

\newcommand{\GFU}{\mbox{$\mbox{\rm GFU}$}}

\newcommand{\GFTG}{\mbox{\rm GF+TG}}
\newcommand{\GFtTG}{\mbox{\rm GF$^2$+TG}}

\newcommand{\TGFTG}{\mbox{\rm TGF+TG}}

\newcommand{\GF}{\mbox{\rm GF}}
\newcommand{\GFt}{\mbox{$\mbox{\rm GF}^2$}}

\newcommand{\NP}{\textsc{NP}}
\newcommand{\PTime}{\textsc{PTime}}

\newcommand{\ExpTime}{\textsc{ExpTime}}

\newcommand{\NExpTime}{\textsc{NExpTime}}
\newcommand{\TwoExpTime}{2\textsc{ExpTime}}
\newcommand{\TwoNExpTime}{\textsc{N2ExpTime}}

\newcommand{\str}[1]{{\mathfrak{#1}}}
\newcommand{\restr}{\!\!\restriction\!\!}
\newcommand{\N}{{\mathbb N}}
 
\newcommand{\sss}{\scriptscriptstyle}

\newcommand{\type}[2]{{\rm tp}^{#1}(#2)}

\newcommand{\hD}{h}
\newcommand{\hS}{h'}

\newcommand{\UU}{\mathsf{U}}

\newcommand{\width}{\text{width}}

\usepackage{scalerel}
\newcommand{\AAA}{{\scaleobj{1.25}{\boldsymbol\alpha}}}

\newcommand{\fo}{\FO}
\newcommand{\tgf}{\TGF}
\newcommand{\fot}{\FOt}
\newcommand{\gf}{\GF}
\newcommand{\gftimes}{\GFU}
\newcommand{\gfu}{\gftimes}
\newcommand{\gfcross}{\GF^{\times_2}}
\newcommand{\ur}{\UU}

\newcommand{\ran}{\mathit{ran}}

\newcommand{\Smc}{\mathcal{S}}

\newcommand{\nexptime}{\NExpTime\xspace}
\newcommand{\twoexptime}{\TwoExpTime\xspace}
\newcommand{\twonexptime}{\TwoNExpTime\xspace}
\newcommand{\exptime}{\ExpTime\xspace}
\newcommand{\np}{\NP\xspace}

\renewcommand{\vec}[1]{\bar{#1}}

\newtheorem{theorem}{Theorem}
\newtheorem{lemma}[theorem]{Lemma}
\newtheorem{claim}[theorem]{Claim}

\newtheorem{proposition}[theorem]{Proposition}

\crefname{proposition}{Proposition}{Propositions}
\Crefname{proposition}{Proposition}{Propositions}

\newcommand{\defend}{\hfill$\Diamond$}

\newtheoremstyle{myexample}%
{3pt}%
{3pt}%
{\rm}%
{}%
{\it}%
{.}%
{.5em}%
{}%
\theoremstyle{myexample}
\newtheorem{xample}[theorem]{Example}

\usepackage{framed}
\definecolor{shadecolor}{rgb}{0.93,0.93,0.93}
\newenvironment{example}[1][]
{
	\begin{shaded*}	
		\vspace{-1.5ex}
		\begin{xample}[#1]
		}{
		\end{xample}
		\vspace{-1.5ex}
	\end{shaded*}
}

\newcommand{\footnotetextbefore}[1]{\addtocounter{footnote}{1}\footnotetext{#1}\addtocounter{footnote}{-1}
}

\usepackage{xcolor}
\definecolor{mygreen}{rgb}{0, 0.6, 0}

\setcopyright{acmcopyright}
\copyrightyear{2026}
\acmYear{2026}
\acmDOI{TBD}

\begin{document}

\title{The Triguarded Fragment}

\author{Emanuel Kiero\'nski}
\email{emanuel.kieronski@cs.uni.wroc.pl}
\orcid{0000-0002-8538-8221}
\affiliation{%
  \institution{Institute of Computer Science, University of Wroc\l{}aw}
  \country{Poland}
}

\author{Sebastian Rudolph}
\email{sebastian.rudolph@tu-dresden.de}
\orcid{0000-0002-1609-2080}
\affiliation{%
	\institution{Computational Logic Group, TU Dresden and ScaDS.AI Dresden/Leipzig}
    \country{Germany}
}

\author{Mantas \v{S}imkus}
\email{mantas.simkus@tuwien.ac.at}
\orcid{0000-0003-0632-0294 }
\affiliation{%
      \institution{Institute of Logic and Computation, TU Wien}
	\country{Austria}
}

\begin{abstract}
A prominent research question in computational logic is how to restrict first-order predicate logic ($\fo$) in such a way that the satisfiability problem becomes decidable. Among others, past efforts have identified two prominent decidable $\fo$ fragments of high expressivity: the \emph{guarded fragment} ($\gf$), and the \emph{two-variable fragment} ($\fot$). 
These fragments are of high interest and crucial importance as they provide significant insights into decidability and expressiveness of other prominent (computational) logics like \emph{Modal Logics (MLs)} and various \emph{Description Logics (DLs)}, which play a central role in  Verification, Knowledge Representation, and other areas.
In this article, we show that $\gf$ and $\fot$ can be combined into a new fragment that subsumes both, while maintaining decidability of the satisfiability problem. 
This fragment, called the \emph{triguarded fragment} (denoted $\TGF$), is obtained by relaxing the standard definition of $\gf$ by requiring guardedness of quantification only for subformulae with \emph{three} or more free variables.
We show that, when restricting the use of equality, satisfiability in $\TGF$ is \twonexptime-complete,
dropping to \nexptime-complete when the maximum predicate arity is fixed (a natural assumption in the context of MLs and DLs). We further establish that the problem is \np-complete in terms of \emph{data complexity}, which is again in line with data complexity results for basic expressive DLs.
We observe that many natural extensions of $\TGF$, including the liberal use of equality, lead to undecidability. 
We also establish that TGF has the \emph{finite model property} (providing a tight doubly exponential bound on the model size), whence finite satisfiability coincides with satisfiability. 

\end{abstract}

\keywords{decidability, computational complexity, finite model property}

\maketitle

\section{Introduction}\label{s:intro}
First-order predicate logic (abbreviated by $\fo$ in this article) plays a pivotal role in logic and exhibits many favorable properties in terms of expressivity and model theory. 
However, a well-known drawback is that satisfiability checking of arbitrary $\fo$ formulae is undecidable, which limits the use of full $\fo$ in contexts that require complete automated reasoning. 
This fact also motivates the search for expressive, yet decidable $\fo$ fragments.%
\footnote{In this paper, whenever we call a logic (un)decidable, we mean that the satisfiability problem for formulae of this logic is (un)decidable. 
This will avoid cumbersome phrases. 
There is no danger of confusion, since all logics considered here have a straightforward syntactic definition, thus deciding a formula's membership in the logic will always be trivial.} 
Such fragments play a crucial role in the study of various practically relevant logics, including modal logics (MLs) and the broader family of description logics (DLs).
The latter are logic-based knowledge representation languages, whose expressivity is usually suitably limited to ensure the decidability of basic reasoning problems~\cite{dlhandbook,DBLP:books/daglib/0041477,rudolph2011fodl}. 
They are used for ontological modeling in a variety of domains and supported by efficient automated reasoners.

\footnotetextbefore{We will not go into details regarding the syntax of DLs; the interested reader is referred to the literature. 
Equivalent FO counterparts of the complex DL statements in this example will be provided in Example 2.}
\begin{example}[Geographical Knowledge in Description Logics]
Let us consider an example for modeling knowledge by means of logical statements in DLs.\footnotemark{} 
Factoid statements can be expressed by ground literals just as in $\fo$. 
For instance, we may state (incomplete) information about certain geographic entities and their spatial relationships:
$$\begin{array}{c}
Country(germany) \qquad BalticState(lithuania) \qquad Country(poland) \qquad Sea(balticSea) \qquad Sea(northSea)\\[1ex] 
Borders(germany,northSea) \qquad  Borders(poland,balticSea) \qquad Borders(lithuania,poland))
\end{array}$$
DLs can also express terminological knowledge, that is, semantic relationships of the used vocabulary, e.g., that the ``borders'' relationship only holds between geographical entities ($GE$) and that it is symmetric and irreflexive:
\begin{align*}
	\exists Borders.\top \sqsubseteq GE \qquad \top \sqsubseteq \forall Borders.GE \qquad 
	Borders \sqsubseteq Borders^- \qquad \top \sqsubseteq \neg \exists Borders.\mathsf{Self}.
\end{align*}
More advanced DLs allow for the usage of ``nominals'', defining one-element classes of named entities. 
For example, we may use them to express properties of the Baltic states: 
First, they all border the Baltic Sea and this is also the only sea they border. 
Second, each of them borders a(nother) Baltic state, which does not border Poland:
\begin{align*}
	BalticState & \sqsubseteq \exists Borders.\{balticSea\} \sqcap \forall Borders.(\neg Sea \sqcup \{balticSea\})\\
	BalticState & \sqsubseteq \exists Borders.(BalticState \sqcap \forall Borders.\neg \{poland\}).	
\end{align*}
This knowledge, together with the inequality 
$northSea \neq balticSea$, allows us to infer the following statements:
\begin{align*}
	 \quad\!\!\!\!\! \neg Sea(poland), \qquad \neg BalticState(poland), \qquad \neg BalticState(germany), \qquad \neg Borders(lithuania,northSea).
\end{align*}
\end{example}

Many MLs and DLs can be conceived as ``semantic fragments'' of $\fo$, meaning that straightforward, semantically faithful translations into $\fo$ exist.%
\footnote{Notable exceptions are logics that use fixpoint expressions, such as the modal $\mu$-calculus or DLs allowing for regular expressions, in particular DLs of the $\mathcal{Z}$ family.}
For most decidable such MLs and DLs, the translations can be chosen such that they are \PTime-computable and the obtained formulae fall into well-known decidable fragments of $\fo$, implying not only decidability, but also complexity results, model-theoretic properties, and limits of expressiveness. 

For instance, many standard MLs and  DLs admit a straightforward equivalent translation into $\fot$, the fragment of $\fo$ with at most two variables~\cite{blackburn:inria-00119856,DBLP:journals/ai/Borgida96}. 
Intuitively, in $\fot$, expressivity is restricted to %
pairwise interrelationships between domain elements.
\begin{example}[Translation of DL Statements into $\fot$]
\newcommand{\p}{\hspace{-0.4pt}}
Revisiting Example 1, we find that the ground statements are already in valid $\fot$ syntax; the others can be translated as follows (note the variable re-use in the third line):
\begin{align*}
&\forall xy\p. Borders(\p x,\!y) {\,\Rightarrow\,} GE(\p x) \quad \forall xy\p. Borders(\p x,\!y) {\,\Rightarrow\,} GE(\p y) \quad \forall xy\p. Borders(\p x,\!y) {\,\Rightarrow\,} Borders(\p y,\!x) \quad \forall x.\! \neg Borders(\p x,\!x)\\
&\qquad\forall x. BalticState(x) \Rightarrow \Big(
\exists y.\big(Borders(x,y)\wedge y{\,=\,}balticSea\big)   
\wedge
\forall y.\big(Borders(x,y)\Rightarrow \neg Sea(y) \vee y{\,=\,}balticSea\big) 
\Big)\\
&\qquad\forall x. BalticState(x) \Rightarrow \Big(
\exists y.\big(Borders(x,y)\wedge BalticState(y)   
\wedge
\forall x.(Borders(y,x)\Rightarrow x{\,\neq\,}poland)\big) 
\Big).
\end{align*}
However, %
not every $\fot$ sentence has a counterpart in mainstream DLs: for instance, expressing the statement
$$\left.\right.\qquad\qquad\qquad\qquad\forall xy. \big( Country(x) \wedge Sea(y) \Rightarrow Borders(x,y) \vee Detached(x,y) \big) \qquad\qquad\qquad\qquad (\dag)$$
would require a highly peculiar DL with non-safe Boolean role constructors \cite{RudolphKH08}. 
\end{example}
For $\fot$ without equality, the satisfiability problem has been known to be decidable for over six decades thanks to Scott~\cite{scott1962decision}.  
The decidability of satisfiability in $\fot$ in the presence of equality was established in 1975 by Mortimer~\cite{DBLP:journals/mlq/Mortimer75} and the corresponding \nexptime upper complexity bound has been  known since over two decades~\cite{DBLP:journals/bsl/GradelKV97}. 

\smallskip

Another explanation for the decidability of many MLs and DLs is the fact that they can often be translated into the \emph{guarded fragment} ($\gf$) of $\fo$~\cite{ABN98:GF} -- see also Grädel's discussion on that ``robust'' relationship \cite{DBLP:conf/dlog/Gradel98}. 
Intuitively, $\gf$ requires quantification to be ``relativized'' to tuples of elements that are jointly co-occurring in a relationship.
\begin{example}
The formulae in Example 2 except $(\dag)$ are in $\gf$, but $\gf$ also admits formulae with more than 2 variables. For example the statement
\begin{align*}
\forall xy. Borders(x,y) \Rightarrow \big( Country(x) \wedge Sea(y) \Rightarrow \exists z. BorderBetween(z,x,y) \wedge Coast(z) \big) 
\end{align*} 
is in $\gf$, where the atom $Borders(x,y)$ is called a \emph{guard} for the preceding universal quantifier whereas the atom $BorderBetween(z,x,y)$ is the guard for the preceding existential quantifier.
\end{example}	
Satisfiability checking in $\gf$ is \twoexptime-complete in general, but it is just \exptime-complete under the assumption that the maximum arity of predicates is fixed~\cite{Gra99,DBLP:journals/jolli/CateF05}. 
The latter is relevant: it implies the  \exptime upper bound for consistency checking in many standard DLs, as their $\gf$-translations use predicate symbols of arity ${\leq}\,2$. 

\smallskip

Both \FOt{} and \GF{} possess the \emph{finite model property} (FMP), meaning that any satisfiable sentence has a finite model. 
\begin{example}
Considering all the statements of Example 2 and Example 3 together, we note that any model must contain more than just the elements which are named by constants (since none of the named elements but Lithuania can be a Baltic state, yet there must be another Baltic state neighboring Lithuania). 
In fact, one can even infer that there must be at least two additional Baltic states. 
However, it should be clear that taking all the (finitely many) countries, seas, and coasts on earth with their mutual relationships provides us with a finite model for the noted logical statements. 
Recall that the FMP does not hold true for $\fo$ in general as demonstrated by the sentence
\begin{align*}
SouthOf(southPole,northPole)\  & \wedge\ \forall xyz.\big( SouthOf(x,y) \wedge SouthOf(y,z) \Rightarrow \ SouthOf(x,z)) \big)\\ 
\wedge \ \ \neg \exists x.SouthOf(x,x) \ & \wedge\ \forall xz. \big(SouthOf(x,z) \Rightarrow  \exists y.(SouthOf(x,y) \wedge SouthOf(y,z))\big).
\end{align*}
\end{example}	
As a consequence of $\fot$ and $\gf$ exhibiting the FMP, the satisfiability and the finite satisfiability problems coincide for these fragments. 
For \FOt, existence of a finite model of only exponential size with respect to the sentence was actually the way to establish the above mentioned complexity. 
For \GF, the original FMP result gave rise to a triply exponential bound on the model size \cite{Gra99}, whereas  a tight doubly-exponential bound was established more recently \cite{BGO14}.

\bigskip

Given the importance of $\gf$ and $\fot$, this paper takes a deeper look at them, and proposes a new, very expressive fragment of $\fo$ that subsumes both, while still enjoying decidability and even the finite model property. 
The fragment, called the \emph{triguarded fragment} (denoted $\TGF$), is obtained by relaxing the standard definition of $\gf$.  
As illustrated in the examples, in $\gf$, existential and universal quantification can only be used in (sub)formulae of the form $\exists \vec{x}.(R(\vec{t})\land \psi)$ or $\forall\vec{x}.(R(\vec{t})\Rightarrow \psi)$, where $R(\vec{t})$ is an atomic formula such that $\vec{t}$ contains \emph{all} free variables of $\psi$ (one then says that the atom $R(\vec{t})$ ``guards'' the formula
$\psi$). 
In contrast, in $\TGF$, guardedness of quantification is required only in case $\psi$ has \emph{three} or more free variables (hence the name ``triguarded''). 
This entails that quantification can be used in an unrestricted way for formulae with at most two free variables, and hence $\fot$ gets included in $\TGF$ seamlessly.
Yet, \TGF{} also brings about a new dimension of expressivity, as it allows one to formulate properties expressible neither in \FOt{} nor in \GF{}.
\footnotetextbefore{Note that any subsequence of the described quantifier prefix is also permissible, since it can be extended by ``spurious quantifiers'' over fresh variables not occurring in $\varphi$.}
\begin{example}
The following sentences are not in \FOt{} nor in \GF{} but in \TGF{}: 
\begin{align*}
\forall xz.Location(x) \wedge Location(y) \Rightarrow & \ \exists z. PathBetween(z,x,y) \\
\forall xz.Location(x) \wedge Location(y) \Rightarrow & \ \big(\,\, \forall z_1.(PathBetween(z_1,northPole,x) \Rightarrow Crosses(z_1,equator)   \\[-1ex]
& \vee \forall z_2.(PathBetween(z_2,x,y) \Rightarrow Crosses(z_2,equator)   \\[-1ex]
& \vee \forall z_3.(PathBetween(z_3,y,southPole) \Rightarrow Crosses(z_3,equator) \, \big)
\end{align*}
Even higher expressivity can be obtained when using auxiliary predicates.
Recall that the expressive yet decidable $\fo$ fragment commonly referred to as \emph{Gödel's class} comprises all $\fo$ sentences $\psi$ whose prenex form has the shape 
$$\exists x_1 \ldots \exists x_n \forall y_1 \forall y_2 \exists z_1 \ldots \exists z_m. \varphi,$$
where $\varphi$ is an arbitrary quantifier-free $\fo$ formula not using the equality predicate.\footnotemark{}
For instance, it can be readily checked that the sentence
\begin{align*}
& \exists x_1x_2.Location(x_1) \wedge Location(x_2) \wedge \forall y_1y_2.\big(Location(y_1) \wedge \mathit{CircleOfLatitude}(y_2)\big)\\[-1ex] 
& \left.\right.\qquad\qquad\Rightarrow  \exists z_1. (ShortestPathBetween(z_1,x_1,y_1) \vee ShortestPathBetween(z_1,y_1,x_2)) \wedge Crosses(z_1,y_2)
\end{align*}
falls in Gödel's class. It is not too hard to see that the set of models of any such sentence $\psi$ is projectively characterized by the sentence $\psi'$ obtained from $\psi$ by Skolemizing all $x_1,\ldots,x_n$ and preceding  
$\varphi$ with an auxiliary guard containing all of $y_1,y_2,z_1,\ldots,z_m$. The $\psi'$ thus obtained is in fact in $\TGF{}$ and a model-conservative extension of $\psi$.  
For the example sentence above, this procedure would yield (using telling Skolem names and moving quantifiers for better readability)
\begin{align*}
	& Location(northPole) \wedge Location(southPole) \wedge \forall y_1y_2.\big(Location(y_1) \wedge \mathit{CircleOfLatitude}(y_2)\big)\\[-1ex] 
	& \left.\right.\Rightarrow  \exists z_1. Aux(y_1,y_2,z_1) \wedge (ShortestPath(z_1,northPole,y_1) \vee ShortestPath(z_1,y_1,southPole)) \wedge Crosses(z_1,y_2).
\end{align*}

\end{example}

This article will proceed as follows.
After \Cref{s:notation} provides a few basic remarks on the notation used, \Cref{s:syntax} formally defines $\TGF$ based on the ideas presented above. For reasons that will become clear later in the paper, the use of the equality predicate in $\TGF$ must be restricted. Furthermore, we introduce a convenient syntactic variant of $\TGF$: 
$\gf$ endowed with a built-in predicate $\ur$ that  must be interpreted as the set of all pairs of domain elements. 
In DL parlance, this corresponds to extending $\gf$ with the \emph{universal role}, and thus this fragment is denoted $\gftimes$. 
Since the predicate $\ur$ can be used to provide ``spurious'' guards to formulae with up to two free variables, $\gftimes$ adds to $\gf$ precisely the expressivity needed to capture
$\TGF$. 
In fact, in the rest of the article, we will then mainly focus on $\gftimes$ instead~of~$\TGF$, as this considerably simplifies our technical considerations.
\Cref{s:syntax} concludes by introducing a Scott-like normal form for $\gftimes$ sentences.
\Cref{s:tools} provides some more technical equipment: the well-known machinery of \emph{types}, which are meant to characterize structural configurations of a bounded number of domain elements, as they may occur in a model. Moreover we will introduce some useful model-theoretic constructions.

\Cref{s:compgfu} then provides the exact upper complexity bounds, showing that satisfiability of formulae in $\gftimes$ (and thus in $\TGF$) is decidable in \twonexptime.
This result comes at the cost of a more elaborate development, using a characterization of the satisfiability of a formula in $\gftimes$ via \emph{mosaics}, where a mosaic is a special (finite) collection of types that can be used to build a model for the input formula. The upper bound is then established via a procedure that guesses and verifies an appropriate mosaic. Along the same lines, we proceed to show that the problem is in \twoexptime if constants are disallowed and in \nexptime whenever the predicate arities are bounded by a constant.
Regarding the latter, we note that $\fot$ is already \nexptime-hard (even without equality), which means that in the bounded-arity setting, $\TGF$ and $\gftimes$ do not have higher complexity than their sublogic $\fot$. Last but not least, we also establish satisfiability to be in \NP{} in terms of data complexity.
On top of showing these upper complexity bounds, this approach establishes decidability from first principles and does not hinge on pre-existing results for $\gf$.
The matching lower bounds are easy to derive except for \twonexptime-hardness in the general case, which we obtain by a reduction from the tiling problem for a doubly exponential grid.  

\Cref{sec:datacomplexity} deals with a question inspired by the field of ontology-based querying: what is the data complexity of $\TGF$, that is the complexity of satisfiability of $\TGF$ theories of the shape $\mathcal{T} \cup D$, where $\mathcal{T}$ is arbitrary but fixed while $D$ is a set of ground facts (a ``database'') which is allowed to vary. Using techniques from earlier sections, we establish that $\TGF$ is \textsc{NP}-complete in data complexity. In a similar vein, \Cref{sec:disjdatalog} characterizes the expressivity of $\TGF$ from a querying perspective, by establishing that for any $\TGF$ theory, there is a disjunctive datalog theory which ``recognizes'' the same databases.

In \Cref{s:undec}, we briefly review the possibilities to generalize the obtained decidability result and find them quite limited. Some limitations are inherited from either $\fot$ (undecidability of conjunctive query answering) or $\gf$ (impossibility to decidably add functionality constraints or counting quantifiers), but others originate from the interplay of the components. In particular, in contrast to the fact that satisfiability in $\gf$ and $\fot$ is decidable (with the same complexity) regardless of the presence of equality, the satisfiability of $\TGF$ and $\gftimes$ formulae becomes undecidable when allowing for unrestricted use of equality. While we provide a direct proof, this observation also follows from an analogous result for Gödel's class.

In preparation to the model-theoretic considerations to follow, \Cref{s:decgfu} provides as a warm-up an argument that decidability of $\gftimes$ (and hence $\TGF$) can be established by a reduction to $\gf$ (yielding an alternative decidability proof, from which, however, it is cumbersome to extract tight complexity bounds).
The proof is conceptually rather simple and should help the reader familiarize themselves with the model-theoretic setting.

Finally, in \Cref{s:fmpgfu}, we turn to the question if \TGF{} has the finite model property (and thus, if finite model reasoning and the associated complexity is any different from the arbitrary-model case). Notably, neither technique used for establishing the FMP for \FOt{} and \GF{} directly lends itself to solving the question for \TGF{}. Indeed one of this article's core contributions is to answer this question in the positive. The corresponding results are established through rather elaborate, carefully crafted model constructions; while more complicated, they generally follow the scheme from \Cref{s:decgfu}. By coupling these with meticulous inspections of existing proofs, we are able to extract tight bounds for the model size, which is doubly exponential with respect to the size of the input formula. 

Summing up, in this article, we significantly advance the state of the art in computational logic by:
\begin{itemize}
	\item
	introducing the Triguarded Fragment: a new, highly expressive fragment of \FO{} subsuming \FOt{} and \GF{} with restricted use of equality,
	\item
	establishing decidability and tight complexities for satisfiability checking in \TGF,
	\item 
	characterizing complexity and expressivity of \TGF{} when seen as a query language,    
	\item 
	showing that a free use of equality in \TGF{} and several other attempted generalizations lead to undecidability,
	\item 
	showing the finite model property of \TGF{}, and thus that finite satisfiability coincides with satisfiability, and
	\item %
	providing a tight upper bound on the size of the finite model.
\end{itemize}
This article is a consolidated, revised, and significantly extended version of earlier work published at LPAR \cite{RS18} and LICS \cite{DBLP:conf/lics/KieronskiR21}.

\section{Notation}\label{s:notation}

\paragraph{Logics}
We assume the reader to be familiar with first-order predicate logic (FO).
We will write $x$, $y,\ldots$ for
variables, $c$ for constants, and $t$ for terms (comprising all variables and constants), all of these possibly with decorations.
We write $\bar{x}$, $\bar{y},\ldots$ for tuples of variables and $\bar{t}$ for tuples of terms. For convenience, such tuples will sometimes be conceived as sets, justifying statements like $\bar{x} \subseteq \bar{y}$. We use lower case Greek letters $\varphi, \psi, \ldots$ to denote logical formulae and we sometimes write $\psi(\bar{x})$ to indicate that all free variables of the formula (subformula) $\psi$ are contained in $\bar{x}$. Given a tuple $\vec{x}$ of variables, an
 \emph{$\vec{x}$-substitution} is any function $f$ from terms to terms such that $f(t)=t$ for all $t\not \in \vec{x}$. Given a tuple
 $\vec{t}= \langle t_1,\ldots,t_n \rangle$ of terms and an
 $\vec{x}$-substitution $f$, we let
 $f(\vec{t}) = \langle f(t_1),\ldots,f(t_n)\rangle $. %

\paragraph{Structures}
We work with finite signatures (typically denoted by $\sigma$, possibly with decorations) containing relation symbols %
of arbitrary nonnegative %
arity %
and, possibly, constant symbols (we do not consider functions of arity greater than~0). 
We refer to structures over finite signatures using Fraktur capital letters $\str{A}, \str{B}, \str{C}, \ldots$ while
the corresponding Roman capital letters $A, B, C, \ldots$ are used to denote their domains. Given a structure $\str{A}$ and some $D \subseteq A$ we
denote by $\str{A} \restr D$  the restriction of $\str{A}$ to its subdomain $D$. 

Given a structure $\str{A}$ interpreting a signature $\sigma$%
, we call the subset $\hat{A} \subseteq A$  consisting of all the elements interpreting the constants of $\sigma$ the \emph{named part} of $\str{A}$. %
The \emph{unnamed} part is defined as $\check{A}  := A \setminus \hat{A}$.

We usually use $a, b$ to denote domain elements of structures, and $\bar{a}$, $\bar{b}$ for tuples of domain elements.
As above, tuples of elements may sometimes be used to denote the sets consisting of their members.

Given a tuple $\vec{x}=\langle x_1,\ldots,x_n \rangle$ of variables,  
an \emph{$\vec{x}$-assignment} is a function $g: \{x_1,\ldots,x_n\} \to A$ mapping the variables of $\vec{x}$ to the domain of some structure $\str{A}$. As usual, we then write $\str{A},g \models \psi(\vec{x})$ to indicate that $\str{A}$ satisfies $\psi(\vec{x})$ under the assignment $g$. Given $\vec{x}=\langle x_1,\ldots,x_n \rangle$ and an $n$-tuple $\vec{a}=\langle a_1,\ldots,a_n \rangle \in A^n$, we let $\vec{x} \mapsto \vec{a}$ denote the  $\vec{x}$-assignment mapping $x_i$ to $a_i$ for every $i \in \{1,\ldots,n\}$.

\section{Syntax of the Triguarded Fragment}\label{s:syntax}

\subsection{Definition and Subsumption of Other Fragments}

We are now ready to introduce the \emph{triguarded fragment} of
$\fo$. Essentially, it is a relaxed variant of the well-known guarded fragment ($\gf$), with the difference that guards are
only required when one quantifies over (sub)formulae with three or more free
variables. The use of equality, on the other hand, is somewhat more restricted than in $\gf$.

\begin{definition} \label{def:tgf} The \emph{triguarded fragment}
	$\TGF$ of first-order logic is defined as the smallest
	set of formulae closed under the following rules:
\begin{enumerate}
\item All atomic formulae of the shape $R(\bar{t})$, $c=c'$, $x=c$, $c=x$, and $x=x$ belong to $\TGF$.
\item If $\varphi$ and $\psi$ are formulae in $\TGF$, then so are $\neg \varphi$, $\varphi \wedge \psi$, $\varphi \vee \psi$, $\varphi \Rightarrow \psi$, and $\varphi \Leftrightarrow \psi$.  		
\item If $\bar{x} \subseteq \bar{y} \subseteq \bar{z}$ are tuples of variables, 
   $\varphi(\bar{y})$ is a formula in $\TGF$, and
   $\gamma(\bar{z})$ is an atomic formula,
   then the formulae $\exists \bar{x}.\gamma(\bar{z}) \wedge \varphi(\bar{y})$
   and $\forall \bar{x}.\gamma(\bar{z}) \Rightarrow \varphi(\bar{y})$ also belong to
   $\TGF$.
\item If $\bar{x}\subseteq\bar{y}$ are tuples of variables,
	$\varphi(\bar{y})$ is a formula in $\TGF$, and $|\bar{y}|\leq 2$, then the formulae
	$\exists \bar{x}.\varphi(\bar{y})$
	and $\forall \bar{x}. \varphi(\bar{y})$ also belong to
	$\TGF$. \defend
\end{enumerate}
\end{definition}

As usual, the atoms $\gamma$ relativising quantifiers in item (3) of the above definition are called the \emph{guards} of the quantifiers.

We observe that, indeed, if we restricted the definition to the items (1), (2), and (3), we would obtain a version of the guarded fragment $\gf$, where equality statements between distinct variables (such as $x=y$) are disallowed.
In what follows, when dealing with $\gf$ formulae, we sometimes take the liberty to omit guards from quantified subformulas having at most one free variable, noting that any formula without free variables or with only one free variable $x$ can be trivially provided with the guard $x=x$.

We further observe that, if we restricted Definition~\ref{def:tgf} to the items (1), (2), and (4), we obtain 
$\fot$, denoting all formulae of $\fo$ that use at most $2$ variables (with the use of equality restricted as above),\footnote{Strictly speaking, we also obtain formulae using more than $2$ variables like $\exists y (R(x,y) \wedge \exists z R(y,z))$, but all these are syntactic variants of $\fot$ formulae as can be shown by an easy variable renaming.} and thus
$\fot\subseteq \TGF$. %

Beyond all $\gf$ and $\fot$ formulae, the syntax of $\TGF$ also allows us to build formulae that are contained in neither of the two fragments, witnessed by formulae like
\begin{equation}\label{eq:ex-tgf-formula}
	\forall x\forall y.( (R_1(x,c)\land R_2(y,c')) \Rightarrow \exists
	z.R_3(x,y,z,c'')).
\end{equation}

It is also noteworthy that, as demonstrated in the introduction, $\TGF$ subsumes one of the most important
\emph{prefix classes}, namely G\"odel's class without equality, consisting of prenex sentences of the shape $\exists \bar{x} \forall y_1 y_2 \exists \bar{z} \varphi$.
We note that, given such a sentence, we can construct a model-theoretic conservative extension in \TGF{} by eliminating the initial prefix of existential quantifiers, replacing the variables $\bar{x}$ by constants (a straightforward application of the technique of Skolemisation), and guarding the block of quantifiers $\exists \bar{z}$ by a ``dummy'' guard $Aux(y_1, y_2, \bar{z})$ using a fresh predicate $Aux$. Along the same lines, decidability of satisfiability of \TGF{} sentences also provides a positive answer to an open question by ten Cate and Franceschet \cite{DBLP:journals/jolli/CateF05}, who asked if the class of $\fo$ sentences of the shape $\exists \bar{x} \forall y_1 y_2 \exists \bar{z} \psi$ with $\psi$ a guarded formula has a decidable satisfiability problem. As can be easily observed, the construction of the model-conservative extension sketched above for sentences of Gödel's class easily extends to ten Cate's and Franceschet's class, and it will produce sentences in $\TGF$. Then, by means of the results in this article, this fragment is decidable and even has the finite model property.

\subsection{$\gfu$: a Convenient Alternative Representation of $\tgf$}\label{ss:gfu}

One main goal in this paper is to investigate model-theoretic properties and the computational
complexity of satisfiability in $\TGF$. To this end, we find it helpful to consider 
a slightly different logic, which is effectively equivalent to $\TGF$,
but which facilitates presentation and understanding. 
To motivate this choice, we observe that there is a simple extension of $\gf$ that allows us to capture $\TGF$.
As we have seen, the reason why $\TGF$ is strictly bigger than $\gf$ is that $\TGF$ allows
``unguarded'' quantification in front of any formula $\varphi$ that has no more than 2 free variables. Now, if we could endow $\gf$ with a ``built-in'' binary predicate whose extension is fixed and always contains all pairs of domain elements, we could use it to provide a guard for any such $\varphi$. 
Consequently, it is convenient to assume that our signature $\sigma$ contains a binary \emph{universal role} predicate
$\ur$, satisfying 
\begin{equation}\label{eq:condition-ur}
\ur^{\str{A}}= A \times A 
\end{equation}
for any
structure $\str{A}$. 
Structures interpreting $\UU$ in this desired way will be called \emph{$\UU$-biquitous}.
Naturally, $\ur$ is not allowed to be used in classical $\gf$ formulae, but note that in $\fo$, $\fot$ and $\TGF$, such a 
built-in predicate $\ur$ would not add expressiveness, because it can be
axiomatized using a fresh ordinary binary predicate $U$ and asserting the $\fot$ sentence
\begin{equation}\label{eq:axiom-ur}
	\phi = \forall x\forall y.U(x,y)
\end{equation}
Thus, we may safely allow the usage of $\ur$ as a predicate symbol in formulae of $\fo$, $\fot$ and
$\TGF$. Since $\phi$ is not in $\gf$, the addition of the built-in
$\ur$ to $\gf$ makes a big difference (as will also become clear from our complexity
results). We now formally define $\gftimes$, which extends $\gf$ with
$\ur$, and in fact adds to $\gf$ the necessary expressivity to capture
$\TGF$.

\begin{definition}
	Let $\gftimes$ be the set of formulae of $\TGF$ that can be built
	using the items (1), (2) and (3) of Definition~\ref{def:tgf} only,
	while possibly using the predicate $\ur$ in atomic formulae. \defend
\end{definition}

By using the $\ur$ predicate as a guard for formulae with at most 2
free variables, we can convert any $\TGF$ formula into an equivalent
formula in $\gftimes$. For instance, the above example formula \eqref{eq:ex-tgf-formula} can be transformed into the equivalent $\gftimes$ 
formula
\begin{equation}
	\forall x\forall y. (\ur(x,y)\Rightarrow ((R_1(x,a)\land R_2(y,b))
	\Rightarrow \exists z.R_3(x,y,z,c))).
\end{equation}

\begin{proposition}\label{prop:twoguarded}
	For any $\varphi\in \TGF$, we can build in polynomial
	time a formula $\varphi'\in \gftimes$ such that $\varphi$ and
	$\varphi'$ are equivalent over $\UU$-biquitous structures and (except for $\UU$) the signature elements used by $\varphi$ and $\varphi'$ coincide.
\end{proposition}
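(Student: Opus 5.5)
We define a translation $\tau$ by structural induction on the derivation of $\varphi$ according to Definition~\ref{def:tgf}, and set $\varphi' := \tau(\varphi)$. Atomic formulae are left unchanged; this covers item (1), and the admissible equalities are the same in both fragments. Boolean connectives are translated homomorphically, $\tau(\neg\psi) = \neg\tau(\psi)$ and similarly for the binary connectives, covering item (2). Guarded quantification (item (3)) is translated by keeping the guard and translating the body: $\tau(\exists\bar x.\gamma\wedge\psi) = \exists\bar x.\gamma\wedge\tau(\psi)$, and analogously for $\forall$. The only nontrivial case is item (4), where $\bar x\subseteq\bar y$ with $|\bar y|\le 2$, and we supply a $\UU$-guard. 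If $\bar y = \langle y_1,y_2\rangle$, we put $\tau(\exists\bar x.\psi) = \exists\bar x.\UU(y_1,y_2)\wedge\tau(\psi)$ and $\tau(\forall\bar x.\psi) = \forall\bar x.\UU(y_1,y_2)\Rightarrow\tau(\psi)$. If $\bar y = \langle y_1\rangle$, we use the guard $\UU(y_1,y_1)$. If $\bar y$ is empty, then $\bar x$ is empty as well, so the quantifier is vacuous and we may simply drop it, or equivalently guard with a trivial atom. In each case the guard contains exactly the variables of $\bar y$, so the side condition $\bar x\subseteq\bar y\subseteq\bar z$ of item (3) holds with $\bar z := \bar y$. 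Hence $\tau(\varphi)$ is built by items (1)--(3) only, possibly using $\UU$, and therefore lies in $\gftimes$.

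Equivalence over $\UU$-biquitous structures is proved by a simultaneous induction. The claim is that for every subformula $\psi(\bar y)$, every $\UU$-biquitous $\str{A}$ and every assignment $g$, we have $\str{A},g\models\psi$ iff $\str{A},g\models\tau(\psi)$. The atomic, Boolean and guarded cases are immediate from the induction hypothesis. For item (4), the point is that by \eqref{eq:condition-ur} the added guard $\UU(y_1,y_2)$ (or $\UU(y_1,y_1)$) holds under every assignment. A conjunct or premise that is always true changes nothing, and the induction hypothesis then finishes the case.

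Finally, $\tau$ only adds $\UU$ to the signature, so the other signature elements of $\varphi$ and $\varphi'$ coincide. It inserts at most one atom of constant size per quantifier, which gives a linear-size output computable in polynomial time. The only step requiring any care is the variable bookkeeping in item (4). In particular, the guard must mention exactly the free variables of the body plus the quantified ones, so that the formula is syntactically guarded; choosing $\bar z=\bar y$ ensures this.
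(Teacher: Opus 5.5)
Your proposal is correct and follows essentially the same route as the paper: every unguarded quantification from item (4) gets a $\UU$-atom over its at most two relevant variables as a guard, and that guard is always true over $\UU$-biquitous structures. Your induction on derivations is more careful than the paper's sketch; one small nit is that the guard's variables should be $\bar x$ together with the actual free variables of the body, not a possibly padded $\bar y$ from the derivation, since otherwise a sentence such as $\exists x.P(x)$ derived with $\bar y=\langle x,z\rangle$ would be translated into a formula with $z$ free.
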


\begin{proof} (Sketch)
	We obtain $\varphi'$ from $\varphi$ by replacing 
	\begin{itemize}
		\item every subformula of the form $\exists x.\varphi(x)$ by
		$\exists x.(\ur(x,x) \wedge \varphi(x))$,
		\item every subformula of the form $\exists x.\varphi(x,y)$ by
		$\exists x.(\ur(x,y) \wedge \varphi(x,y))$,
		\item every subformula of the form $\exists xy.\varphi(x,y)$ by
		$\exists xy.(\ur(x,y) \wedge \varphi(x,y))$,
		
		\item every subformula of the form $\forall x.\varphi(x)$ by
		$\forall x.(\ur(x,x) \Rightarrow \varphi(x))$,
		\item every subformula of the form $\forall x.\varphi(x,y)$ by
		$\forall x.(\ur(x,y) \Rightarrow \varphi(x,y))$,
		\item every subformula of the form $\forall xy.\varphi(x,y)$ by
		$\forall xy.(\ur(x,y) \Rightarrow \varphi(x,y))$.
	\end{itemize}
	In other words, we replace every unguarded quantification
	by a $\ur$-guarded one. It is easy to see that this translation does not
	change the meaning of the formula.
\end{proof}
Conversely, we note that  any $\gftimes$ theory can be efficiently  
translated---while preserving satisfiability---into $\TGF$ by replacing $\ur$ by $U$ everywhere and adding the axiomatization of the binary universal predicate (\ref{eq:axiom-ur}) above. 
Note that the transformations do not just preserve satisfiability but also
modelhood (modulo the freshly introduced $U$). Hence $\TGF$ and $\gftimes$
are both equally expressive and equally succinct.   

In our subsequent constructions, it will sometimes be convenient to interpret \GFU{} formulae over non-$\UU$-biquitous structures. In this case, they are treated as usual \GF{} formulae with $\UU$ being an arbitrary predicate.

\subsection{The Guarded Normal Form}\label{ss:nf}

To simplify the structure of $\gftimes$  formulae, we will make use of a
suitable \emph{(guarded) 
normal form}, which is not much different from normal forms 
used in prior work
\cite{scott1962decision,Gra99}. As this normal form does not treat the symbol $\UU$ in any special way, it can be also
used for ordinary $\gf$.

\begin{definition}[Guarded Normal Form]\label{def:normalform}
		A sentence $\varphi\in \gftimes $ (\GF) is in \emph{guarded normal form} (or just \emph{normal form}, if there is no danger of confusion) if it has
		the form
		$\bigwedge_{\psi \in \mathbf{A}} \psi \land \bigwedge_{\psi \in
			\mathbf{E}} \psi $, where $\mathbf{A}$ contains sentences of the form
		\begin{equation}
			\forall \bar{x}. ( 
			R(\bar{t})   \Rightarrow (\neg H_1(\bar{v}_1)\lor \ldots \lor \neg H_n(\bar{v}_n) \lor H_{n+1}(\bar{v}_{n+1})\lor \ldots \lor H_m(\bar{v}_m))),\label{def:A}
		\end{equation}
		\noindent
		and 
		$\mathbf{E}$ contains sentences of the form
		\begin{equation}
			\forall \bar{x}. (R(\bar{u}) \Rightarrow \exists
			\bar{y}. H(\bar{v})).\label{def:E}
		\end{equation}  %
        Thereby, $R$, $H$, and $H_i$ are predicates from the underlying signature or equality. %
		We use $\mathbf{A}(\varphi)$ and $\mathbf{E}(\varphi)$ to denote the
		sets $\mathbf{A}$ and $\mathbf{E}$ of a formula $\varphi$ as
		above. For a sentence
		$\psi = \forall \bar{x}. (R(\bar{u}) \Rightarrow \exists
		\bar{y}. H(\bar{v})) $ from $\mathbf{E}$, we let $\width(\psi)$ denote the number of
		variables that appear in $\bar{v}$. For a formula $\varphi$ as above,
		$\width(\varphi)$ is the maximal $\width(\psi)$ over all
		$\psi\in \mathbf{E}(\varphi) $. \defend
\end{definition}
	As usual, in case $m=0$, the empty disjunction in
	(\ref{def:A}) stands for $\bot$. Note that since (\ref{def:A}) and (\ref{def:E}) are in $\gftimes$,  each variable that appears in $\bar{v}_1,\ldots,\bar{v}_m$ also appears in $\bar{t}$, and  each variable that appears in $\bar{v}$ also appears in $\bar{u}$. 
	Observe that the sentence in (\ref{def:A}) can be equivalently written as 
	\begin{equation}
		\forall \bar{x}. (
		R(\bar{t})\land   H_1(\bar{v}_1)\land \ldots \land  H_n(\bar{v}_n)   \Rightarrow  H_{n+1}(\bar{v}_{n+1})\lor \ldots \lor H_m(\bar{v}_m)).\label{eq:newnormal}
	\end{equation}
	For presentation reasons, in what follows we may sometimes use the form
	(\ref{eq:newnormal}) instead of (\ref{def:A}) when speaking about
	sentences in $\mathbf{A}$. %
	The informed reader might notice that (\ref{eq:newnormal}) closely
	resembles a (guarded) disjunctive Datalog rule with guard atom $R(\bar{t})$.   
The following statement establishes that in our investigations of $\gftimes$, we can indeed restrict our attention to formulae in normal form.
\begin{proposition}\label{prop:normalization}
	For any sentence $\varphi\in \gftimes \; (\GF)$, we can construct in polynomial
	time a sentence $\varphi'\in \gftimes \; (\GF)$ in normal form (over an extended signature) such that 
\begin{enumerate}[(a)]	
	\item 
	$\varphi'$ is a conservative extension of $\varphi$, i.e., every model of $\varphi'$ is a model of $\varphi$ and every model of $\varphi$ can be expanded to a model of $\varphi'$, and  
	\item 
	the
	translation does not increase the arity of predicates, i.e., there is
	no predicate symbol in $\varphi'$ whose arity is strictly greater
	than the arity of every predicate in $\varphi$.
\end{enumerate}
\end{proposition}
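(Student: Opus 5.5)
We will use the standard structural renaming (Scott/Tseitin-style) transformation, proceeding bottom-up over the subformulae of $\varphi$. As preprocessing, we push negations inward and rewrite $\Rightarrow$ and $\Leftrightarrow$ using $\wedge,\vee,\neg$. Since the result is a sentence, we may also assume that every quantified subformula is of the guarded shape $\exists \bar{x}.\gamma(\bar{z})\wedge\psi$ or $\forall \bar{x}.\gamma(\bar{z})\Rightarrow\psi$. Where needed, a universal quantifier is rewritten as $\neg\exists\bar{x}.\gamma\wedge\neg\psi$; this changes the size only linearly.

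\emph{Renaming step.} For every guarded quantified subformula $\chi(\bar{w})=\exists\bar{x}.\gamma(\bar{z})\wedge\psi(\bar{y})$ with free variables $\bar{w}=\bar{z}\setminus\bar{x}$, we introduce a fresh predicate $P_\chi$ of arity $|\bar{w}|$. Because $\bar{w}\subseteq\bar{z}$, this arity is at most the arity of the guard $\gamma$, which gives (b). The only exception would be a guard that is an equality atom, but then $|\bar{w}|\le 1$; this is harmless as long as $\varphi$ contains at least one predicate of arity at least $1$, and if it contains only nullary predicates we use nullary or unary fresh symbols and treat the degenerate case directly.

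Proceeding innermost-first, we replace $\chi$ by $P_\chi(\bar{w})$. We then add the two sentences
\[\forall\bar{z}.\big(\gamma(\bar{z})\wedge\psi^*(\bar{y})\Rightarrow P_\chi(\bar{w})\big)\quad\text{and}\quad\forall\bar{w}.\big(P_\chi(\bar{w})\Rightarrow\exists\bar{x}.\gamma(\bar{z})\wedge\psi^*(\bar{y})\big),\]
where $\psi^*$ is $\psi$ with its inner quantified subformulae already replaced by atoms, so $\psi^*$ is quantifier-free.

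The second sentence is guarded by $P_\chi(\bar{w})$ itself. For $\bar{w}$ empty, $P_\chi$ is nullary; a nullary atom is still a legal guard, or one can use $x=x$ with a dummy variable.

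\emph{Bringing sentences into the shapes (\ref{def:A}) and (\ref{def:E}).} The first sentence, and also the top-level sentence $\varphi^*$, is guarded with a quantifier-free matrix. We convert the matrix to CNF and split it into one sentence per clause, all sharing the same guard; each clause sentence has shape (\ref{def:A}).

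To avoid an exponential blow-up in the CNF conversion, we first name each Boolean connective node of $\psi^*$ by a fresh predicate $Q(\bar{v})$ over the node's free variables. These variables are all covered by the guard $\gamma$, so the arity is again at most that of $\gamma$. We then add the Tseitin clauses for each node, all guarded by $\gamma(\bar{z})$; this has polynomial size.

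For the existential sentence, the body $\gamma\wedge\psi^*$ is replaced by a fresh atom $H(\bar{z})$. We add $\forall\bar{w}.(P_\chi(\bar{w})\Rightarrow\exists\bar{x}.H(\bar{z}))$, which has shape (\ref{def:E}), together with $\forall\bar{z}.(H(\bar{z})\Rightarrow\gamma(\bar{z}))$ and the clauses of $\forall\bar{z}.(H(\bar{z})\Rightarrow\psi^*)$, which have shape (\ref{def:A}). Here $H$ has the arity of $\gamma$.

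\emph{Correctness.} We verify the conservative-extension property (a) in two directions.
\begin{itemize}
\item Given a model of $\varphi$, we interpret each fresh predicate exactly as the subformula it names, restricted to tuples satisfying the relevant guard. All added sentences then hold, since each is equivalent to a definitional biconditional.
\item Conversely, in any model of $\varphi'$ we show by induction on subformulae that each fresh atom implies the named subformula under guarded assignments, and the reverse implication where polarity requires it. Since we impose both directions of every definition, full equivalence follows by induction, and hence the model satisfies $\varphi$.
\end{itemize}
The whole construction is linear in the number of subformulae, and $\UU$ is treated like any other predicate, so it applies verbatim to $\GF$ as well.

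The main obstacle is bookkeeping rather than a conceptual difficulty. We must check that every introduced predicate's variables are covered by an existing guard, so that arity does not grow and guardedness is preserved. Equality guards and nullary/zero-free-variable cases need the small special treatment noted above.
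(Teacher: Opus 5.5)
Your proposal is correct and is essentially the paper's own proof. The paper likewise introduces one fresh predicate per subformula over its free variables, bounds its arity by the covering guard, and obtains the expansion by interpreting each fresh predicate as the extension of the subformula it names. The one real difference is that you impose full biconditional definitions, so NNF is actually unnecessary for you and $\Leftrightarrow$ could be handled directly by Tseitin clauses instead of being expanded. The paper instead works on the NNF and imposes only the one-directional implications $p_\psi \Rightarrow \psi$, guarding each Boolean-connective axiom by the parent's fresh predicate rather than by the original guard.
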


In order to prove Proposition~\ref{prop:normalization}, we provide the following normalization:  

\begin{definition}\label{def:normalization}
	Given a $\gftimes$ ($\gf$) sentence $\chi$, we let $\mathsf{gnf}(\chi)$ denote the sentence $\bigwedge_{\psi \in \mathbf{A}_\chi} \psi \land \bigwedge_{\psi \in
			\mathbf{E}_\chi} \psi $, where the sets $\mathbf{A}_\chi$ and $\mathbf{E}_\chi$ are obtained as follows (introducing fresh predicates $p_\psi$ of arity
			$|\bar{z}|$ for certain subformulae $\psi[\bar{z}]$ with
	free variables $\bar{z}$):
	
	\begin{itemize}
		\item Let $\chi'$ denote the result of eliminating $\Rightarrow$ and $\Leftrightarrow$  from $\chi$ in the standard way and putting the result into negation normal form. 	
		\item We let $\mathbf{A}_\chi$ contain the sentence 
		\begin{equation}\label{gnf:base}
			\Rightarrow p_{\chi'}
		\end{equation}
		\item For every subformula
		$\varphi_1[\bar{x}] \wedge \varphi_2[\bar{y}]$ of $\chi'$, we add
		to $\mathbf{A}_\chi$ the sentences
		\begin{eqnarray}
			\forall \bar{x} \bar{y}. (p_{\varphi_1 \wedge \varphi_2}(\bar{x} \cup \bar{y}) \Rightarrow p_{\varphi_1}(\bar{x}))\\
			\forall \bar{x} \bar{y}. (p_{\varphi_1 \wedge \varphi_2}(\bar{x} \cup \bar{y}) \Rightarrow p_{\varphi_2}(\bar{y}))
		\end{eqnarray}
		
		\item For every subformula
		$\varphi_1[\bar{x}] \vee \varphi_2[\bar{y}]$ of $\chi'$, we add to
		$\mathbf{A}_\chi$ the sentence
		\begin{equation}
			\forall \bar{x}\bar{y}. (p_{\varphi_1 \vee \varphi_2}(\bar{x} \cup \bar{y}) \Rightarrow p_{\varphi_1}(\bar{x}) \vee p_{\varphi_2}(\bar{y}))
		\end{equation}
		
		\item For every subformula of $\chi'$ that is a non-guard atom
		$\alpha(\bar{x})$ we add to $\mathbf{A}_\chi$ the sentence
		\begin{equation}
			\forall \bar{x}.(p_{\alpha(\bar{x})}(\bar{x}) \Rightarrow \alpha(\bar{x}))
		\end{equation}		
		
		\item For every subformula of $\chi'$ that is a negated non-guard atom
		$\neg\alpha(\bar{x})$ we add to $\mathbf{A}_\chi$ the sentence
		\begin{equation}
			\forall \bar{x}.(p_{\neg\alpha(\bar{x})}(\bar{x}) \Rightarrow \neg\alpha(\bar{x}))
		\end{equation}		
		
		\item For every subformula of $\chi'$ with the shape
		$\forall \bar{x}. \neg\alpha(\bar{x},\bar{y}) \vee
		\varphi(\bar{x},\bar{y})$ we add to $\mathbf{A}_\chi$ the sentence
		\begin{equation}
			\forall \bar{x}\bar{y}.(\alpha(\bar{x},\bar{y}) \Rightarrow \neg p_{\forall \bar{x}. \neg\alpha(\bar{x},\bar{y}) \vee \varphi[\bar{x},\bar{y}]}(\bar{y})  \vee  p_{\varphi[\bar{x},\bar{y}]}(\bar{x},\bar{y}))
		\end{equation}		
		
		\item For every subformula of $\chi'$ with the shape
		$\exists \bar{x}. \varphi[\bar{x},\bar{y}]$ we add to $\mathbf{E}_\chi$ the sentence
		\begin{equation}
			\forall \bar{y}.(p_{\exists \bar{x}. \varphi[\bar{x},\bar{y}]}(\bar{y})  \Rightarrow \exists \bar{x}. p_{\varphi[\bar{x},\bar{y}]}(\bar{x},\bar{y}))
		\end{equation}		
		~\defend
	\end{itemize}
	
\end{definition}

\begin{proof}(Proof sketch for Proposition~\ref{prop:normalization})
	We show that setting $\varphi'=\mathsf{gnf}(\varphi)$, which is clearly polytime computable, satisfies all the conditions of Proposition~\ref{prop:normalization}.
    First, we readily observe that $\varphi'$ is in guarded normal form.
	Second, as can be shown by an easy structural induction over GFU formulae, $\mathsf{gnf}(\varphi)$ ensures that every fresh predicate $p_\psi$ occurring in it implies the formula in its subscript, that is, in technical terms:
	  \begin{equation}\label{gnf:ind}
	  \mathsf{gnf}(\varphi) \models \forall \bar{z}.\left( p_\psi(\bar{z}) \Rightarrow \psi[\bar{z}]\right).
	  \end{equation}
    The fact that $\mathsf{gnf}(\varphi)$ is a model-theoretic conservative extension of $\varphi$  follows from the following two insights: 
\begin{itemize}	
	\item[(i)]
	Every model of $\mathsf{gnf}(\varphi)$ is a model of $\varphi$, that is, $\mathsf{gnf}(\varphi) \models \varphi$. This is an easy consequence from (\ref{gnf:base}) and (\ref{gnf:ind}).
	\item[(ii)]	
	Every model $\str{A}$ of
	$\varphi$ can be extended to a model $\str{A}'$ of $\mathsf{gnf}(\varphi)$ by choosing the
	interpretation $p_\psi^{\str{A}'}$ of each auxiliary predicate $p_\psi$ such that it coincides
	with the valid variable assignments of the corresponding subformula, that is, we let 
	$p_{\psi[\bar{z}]}^{\str{A}'}$ contain all those tuples $\bar{a}$  for which $\str{A}$ satisfies 
	$\psi[\bar{z}]$ under the assignment $\bar{z} \mapsto \bar{a}$. 
\end{itemize}
	
	Preservation of the maximal arity follows from the fact that by
	definition, for any set of variables that occur freely in some
	subformula of $\varphi$, there must be a joint guard for all of
	them, hence, for each freshly introduced predicate in $\varphi'$ we
	find a guard predicate in $\varphi$ with the same or higher arity.
\end{proof}

We note that the output of the normalization procedure provided by \Cref{def:normalization} would allow for the definition of an even more constrained normal form, but for our purposes, the one given in \Cref{def:normalform} is sufficient. Moreover, for some technical reasons,
it will be convenient for us to have additionally an even weaker variant.

\begin{definition}[Weak guarded normal form]\label{def:weaknormalform}
We say that a $\GF{}$ ($\GFU$) sentence is in \emph{weak normal form} if
it is a conjunction of sentences of the form $\forall \bar{x} (R(\bar{t}) \Rightarrow \psi)$ and $\forall \bar{x} (R(\bar{u}) \Rightarrow \exists \bar{y} (H(\bar{v}) \wedge \psi))$,
for some quantifier-free $\psi$.
\end{definition}

Obviously, any normal form sentence is a weak normal form sentence, so Proposition \ref{prop:normalization} works also for
weak normal form.

\color{black}

\section{Some Technical Tools}\label{s:tools}

\subsection{The Standard Name Assumption} \label{ss:sna}

The common FO semantics allows for assigning distinct constants to the same domain element, that is, in general, in a structure $\str{A}$, it may well be the case that $c^\str{A} = d^\str{A}$ for constant names $c \neq d$. This fact blocks the convenient move to let constant names interpret themselves (i.e., $c^\str{A} = c$), which would make constant names both syntactic and semantic objects at the same time. %
However, we now show that in our setting it is possible to impose such a self-interpreting treatment of constants, referred to as the \emph{standard name assumption (SNA)} -- at the reasonable cost of some extra nondeterminism. In return, the SNA will simplify our arguments regarding the upper bounds on the complexity of satisfiability and the size of minimal models.%

Assume any FO
formula $\varphi$ over a signature $\sigma$. We say $\varphi$ is
\emph{satisfiable under the SNA}, if $\varphi$ has a model $\str{A}$
such that $c\in A$ and $c^{\str{A}}=c $ for all constants $c\in
\sigma$. We argue that, for our upper complexity bounds, we can focus on satisfiability in $\gftimes$
under the SNA, thanks to the following observation.
\begin{proposition}\label{prop:SNA}
There is a
  non-deterministic polynomial-time transformation mapping $\gftimes$ formulae to $\gftimes$ formulae such that: $\varphi$ is satisfiable without SNA %
  iff there is a run of the transformation that produces a
  formula $\varphi^\sim$ that is satisfiable under the SNA. %
\end{proposition}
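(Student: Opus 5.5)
The transformation guesses how the constants are identified and then rewrites the formula so that the guessed identification is built in. Concretely, on input $\varphi$ over $\sigma$ with constant set $C$, the transformation nondeterministically guesses an equivalence relation $\sim$ on $C$ and a representative function $r: C \to C$ choosing one constant from each $\sim$-class. This guess has polynomial size and can be produced in nondeterministic polynomial time. It then sets $\varphi^\sim$ to be the result of the following two steps.
\begin{enumerate}[(i)]
\item Replace every occurrence of a constant $c$ in $\varphi$ by $r(c)$.
\item Replace every remaining atom of the form $c=c'$, which after step (i) has $c,c'$ both representatives, by $\top$ if $c=c'$ syntactically and by $\bot$ otherwise. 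Here $\top$ and $\bot$ can be encoded inside $\gftimes$ as, e.g., $c=c$ and $\neg(c=c)$.
\end{enumerate}
Since substituting constants for constants and replacing equalities between constants by $\top$ or $\bot$ do not affect guardedness, $\varphi^\sim$ is still in $\gftimes$, and the construction is clearly polynomial. Atoms $x=c$ and $R(\bar t)$ with constants simply have their constants renamed, so they stay allowed by Definition~\ref{def:tgf}.

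For the ``only if'' direction, let $\str{A}\models\varphi$. Take $\sim$ to be the relation $c\sim c'$ iff $c^{\str{A}}=c'^{\str{A}}$, and fix any choice of representatives $r$. Build $\str{A}'$ from $\str{A}$ by renaming, for each representative $c$, the element $c^{\str{A}}$ to $c$ itself, moving any element already called $c$ out of the way. The structure $\str{A}'$ is isomorphic to $\str{A}$ (restricted to the representative constants), and it interprets each representative by itself. By a routine induction on formulae, $\str{A},g\models\psi$ iff $\str{A}',g'\models\psi^\sim$, where $g'$ is $g$ composed with the renaming. The key points of this induction are the following.
\begin{itemize}
\item Every constant $c$ denotes the same element as $r(c)$, which justifies step (i).
\item An atom $c=c'$ holds in $\str{A}$ exactly when $r(c)=r(c')$, which justifies step (ii).
\end{itemize}
The built-in $\ur$ is preserved under the isomorphism. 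Hence $\varphi^\sim$ is satisfiable under the SNA for the signature of representative constants. The non-representative constants no longer occur in $\varphi^\sim$, so we may drop them from the signature, or alternatively interpret them by themselves as fresh elements.

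For the ``if'' direction, let $\str{B}\models\varphi^\sim$ under the SNA. Expand $\str{B}$ by interpreting every constant $c$ of $\sigma$ as $r(c)^{\str{B}}=r(c)$. Then $c^{\str{B}}=c'^{\str{B}}$ iff $r(c)=r(c')$ iff $c\sim c'$. The same induction as before shows that $\str{B}\models\varphi$, because evaluating $\varphi$ in the expansion agrees subformula-by-subformula with evaluating $\varphi^\sim$ in $\str{B}$.

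The only subtle point is the bookkeeping around the SNA. Only representatives can be required to name themselves, since distinct constants cannot all denote distinct self-elements when they are forced to be equal. Restricting the signature of $\varphi^\sim$ to the representatives resolves this. Beyond that, the inductive equivalence is straightforward, so I expect no real obstacle.
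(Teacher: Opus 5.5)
Your proof is correct and is essentially the paper's. The paper also guesses $\sim$ on the constants, collapses each class to a single name, and restricts the signature to the constants occurring in $\varphi^\sim$; the only differences are that it uses a fresh constant $b_{[c]_\sim}$ where you use a chosen representative, and that your step (ii) is harmless but redundant, since under the SNA distinct constants already denote distinct elements. One caveat: drop your aside that the removed constants could instead be interpreted as fresh self-named elements, because adding elements can falsify $\varphi^\sim$ (e.g.\ a conjunct $\forall x.(x=x \Rightarrow x=c)$); removing them from the signature is the right choice.
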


\begin{proof} 
  Assume a formula $\varphi$ in \gftimes, and suppose $C$ is the set of all
  constants that appear in $\varphi$. The desired non-deterministic
  polynomial transformation  is as follows:
  \begin{enumerate}
  \item Guess an equivalence relation $\sim$ over $C$.
  \item For every equivalence class $D$ induced by $\sim$, introduce a new
    constant $b_{D}$.
  \item Construct a new formula $\varphi^\sim$ from $\varphi$ by replacing
    every constant $c$ with the constant $b_{[c]_\sim}$, where ${[c]_\sim}$ is the
    (unique) $\sim$-equivalence class that contains $c$. Let $\sigma^\sim$ be the signature of all constant and predicate names occurring in $\varphi^\sim$.
    
  \end{enumerate}
  It is not difficult to see the following:

  \begin{enumerate}[-]
  \item If $\varphi$ is satisfiable, then there is a run of the above
    procedure such that the resulting $\varphi^\sim$ is satisfiable under the
    SNA. Indeed, assume $\varphi$ is satisfiable in a model
    $\str{A}$. Take the equivalence relation $\sim$ such that
    $c\sim d$ iff $ c^{\str{A}} = d^{\str{A}}$, where $c,d\in C$. It
    is easy to see that the $\varphi^\sim$ constructed based on this 
    specific relation $\sim$ is satisfiable under the SNA. In particular, $\varphi^\sim$ is
    satisfied in the structure $\str{A}^\sim$ obtained from $\str{A}$ by, 
    for all $d \in C$, renaming 
    $d^{\str{A}}$ into $b_{[c]_\sim}$.

  \item If $\varphi^\sim$ is satisfiable under the SNA for some
    equivalence relation $\sim$ over $C$, then the original formula
    $\varphi$ is satisfiable. Indeed, if $\str{A}$ is a structure that
    witnesses satisfiability of $\varphi^\sim$ under the SNA, then
    $\varphi$ is satisfied by $\str{A}$ that is modified by setting
    $c^{\str{A}}=b_{[c]_\sim}$ for all constants $c$ occurring in $\varphi$.\qedhere
  \end{enumerate}
\end{proof}

We will impose the SNA in the rest of this paper, with the exception 
of Sections \ref{s:compgfu} and \ref{s:undec} as far as lower bounds are concerned (for our lower bounds, it is not relevant whether or not the SNA is imposed). %

\subsection{Types}\label{ss:types}

Let $\sigma$ be a finite signature and $c_1, \ldots, c_m$ be the list of all constants in $\sigma$.
For some natural number $\ell >0$, an ({\em atomic}) $\ell$-{\em type} (or just \emph{type}, if $\ell$ is irrelevant or clear from the context) over $\sigma$
is a maximal consistent set of atomic or negated atomic formulae over $\sigma$, using $\ell$ variables $x_1, \ldots, x_\ell$ and containing $x_i \not= x_j$, $x_i\not=c_g$ and $c_g \not=c_h$ for all $1 \le i, j \le \ell, 1\le g, h \le m$, $i \not=j, g \not=h$.
We often identify a type with the formula obtained by taking the conjunction over its elements. 
A type $\alpha$ induces also a structure $\overline{\alpha}$ such that (i) the domain of $\overline{\alpha}$ is is the set $\{x_1, \ldots, x_\ell\} \cup \{c_1, \ldots, c_m\}$, and (ii) for each predicate $P \in \sigma$, we have
$P^{\overline{\alpha}}=\{\vec{t}\mid P(\vec{t})\in \alpha\}$.
 
  If $x_i$ is a variable in a type $\alpha$, we let $\alpha\restr x_i$ denote the restriction of $\alpha$ to those formulae containing only terms that are $x_i$ or constants.  We say a type $\alpha_1$ \emph{agrees with a type $\alpha_2$ on} $(x_i,x_j)$, where $x_i$ is a variable in $\alpha_1$ and $x_j$ is a variable in $\alpha_2$, if $\alpha_1\restr x_i$ and $\alpha_2\restr x_j$ coincide after remaining the only  variable in either expression is replaced by $x$. %
We say that a set $\AAA$ of types over a signature $\sigma$ is \emph{consistent}  if all the types in $\AAA$ agree on the ground atoms, that is, for any tuple of (not necessarily distinct) 
constant symbols  $\bar{c}$, a relation symbol $P$ of arity $|\bar{c}|$,
and every $\alpha_1, \alpha_2 \in \AAA$ we have that $P(\bar{c}) \in \alpha_1$ iff $P(\bar{c}) \in \alpha_2$,
and for every nullary relation symbol $P$ we have $P \in \alpha_1 \text{ iff } P \in \alpha_2$.
We say that an $\ell$-type $\alpha$ is $\UU$-\emph{biquitous}, 
if the structure $\overline{\alpha}$ is $\UU$-biquitous.

Let $\str{A}$ be a structure, and let $\bar{a}$ be a tuple of its distinct unnamed elements with $|\bar{a}|=\ell$. 
We denote by $\type{\str{A}}{\bar{a}}$ the unique $\ell$-type \emph{realized} in $\str{A}$ by the tuple  $\bar{a}$, \emph{i.e.}, the unique type $\alpha(\bar{x})$ such that $\str{A},\vec{x}\mapsto \vec{a} \models \alpha(\bar{x})$. We say that $\bar{a}$ is \emph{guarded} in  $\str{A}$ if  it is of length $1$ or  there is a tuple of (not necessarily distinct) elements $\bar{b}$ containing all the
 elements of $\bar{a}$ and a relation symbol $P \in \sigma$ such that $\bar{b} \in P^{\str{A}}$. A tuple of distinct unnamed elements $a_1, \ldots, a_k$ in a structure $\str{A}$ is $\UU$-\emph{biquitous} if $(a_i,a_j) \in \UU^{\str{A}}$ for all $1 \le i,j \le k$.

We will be often interested in types over signatures $\sigma$ consisting of the
relation symbols (and constants, if present) used in some given formula. A particularly important role will be played by $1$-types and $2$-types.
Observe that, in the absence of constants, the number of $1$-types is bounded by a function which is exponential in $|\sigma|$, and hence also in the length
of the formula. This is because any $1$-type essentially just corresponds to a subset of $\sigma$. On the other hand, 
when at least one constant $c$ is present, then the number of $1$-types may be doubly exponentially large. This is because a $1$-type must completely  describe the substructure on a given element and the interpretation of $c$,
and  there are $2^{2^n}$ relations of arity $n$ on a pair of elements.
By an analogous argument, the number of $2$-types may be doubly exponential in the length of the formula even
in the absence of constants.

\subsection{Doublings and joins}

In this subsection we present two basic transformations that, given some models of a \GF{} sentence in weak normal form,
allow us to create another model. The first of these transformations, called ``(partial) doublings'',
is rather typical -- similar constructions were already used in other contexts, see for instance Lemma 6.2.26 in \cite{BGG}. The second one, called ``joins'' is, up to our knowledge, not present in the literature in the form we formulate it, though, 
as we believe, it is quite natural for guarded logics and its correctness proof is rather straightforward.

\begin{definition}
Let $\str{A}$ be a $\sigma$-structure and $A_0 \subseteq \check{A}$. The  \emph{partial doubling} of $\str{A}$ \emph{duplicating} $A_0$, is the structure $\str{B}$
with domain $B:=(A \setminus A_0) \times \{ 0 \} \cup A_0 \times \{0, 1\}$ in which 
(i) for each $P \in \sigma$ and $((a_1, \ell_1), \ldots, (a_k, \ell_k)) \in B$, we have $((a_1, \ell_1), \ldots, (a_k, \ell_k)) \in P^{\str{B}}$ iff $(a_1, \ldots, a_k) \in P^{\str{A}}$, and (ii) if $c \in \hat{A}$ is a constant
then $(c,0)$ interprets $c$ in $\str{B}$.\footnote{Technically, $\str{B}$ does not respect the SNA, since its domain 
does not contain constant symbols. However, as different constant symbols are interpreted by different domain elements, SNA can
easily be restored: it just suffices to rename $(c,0)$ into $c$ for every constant symbol $c$.} %
For $a \in A_0$ the elements $(a, 0)$ and $(a, 1)$ from $B$ are called \emph{twins}.
If $A_0=\check{A}$ then $\str{B}$ is called just the  \emph{doubling} of $\str{A}$.
\end{definition}

\begin{lemma}\label{l:doubling}
	Let $\varphi$ be a  $\GFU{} \; (\GF{})$ sentence in weak normal form 
	using the equality symbol only as allowed in point (1) of Definition \ref{def:tgf},
	let $\str{A}$ be a model of $\varphi$, and let $A_0 \subseteq \check{A}$. Then the partial doubling $\str{B}$ of $\str{A}$ duplicating $A_0$ is still a model of $\varphi$.
\end{lemma}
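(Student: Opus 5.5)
The natural tool is the projection $\pi\colon B\to A$, $\pi((a,\ell))=a$. By definition of $\str{B}$, for every predicate $P\in\sigma$ (including $\UU$) and every tuple $\bar{b}$ of elements of $B$, we have $\bar{b}\in P^{\str{B}}$ iff $\pi(\bar{b})\in P^{\str{A}}$. Moreover $\pi$ maps $(c,0)$ to $c^{\str{A}}$ for each constant $c$. So $\pi$ is a surjective strong homomorphism that preserves constants, but it is not injective. The only atoms whose truth is not transported along $\pi$ are therefore equalities between terms.

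I would first establish a transfer claim for quantifier-free formulas $\psi$ that use equality only in the shapes allowed by point (1) of Definition~\ref{def:tgf}: for every assignment $g$ into $B$, $\str{B},g\models\psi$ iff $\str{A},\pi\circ g\models\psi$. This is by induction on $\psi$. Relational atoms are handled by the defining property of $\str{B}$. The atoms $x=x$ and $c=c'$ are trivial; for $c=c'$ use the SNA, or simply that $(c,0)$ and $(c',0)$ coincide iff $c^{\str{A}}=c'^{\str{A}}$.

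The main point is the atom $x=c$. In $\str{B}$ it holds iff $g(x)=(c^{\str{A}},0)$. In $\str{A}$ it holds iff $\pi(g(x))=c^{\str{A}}$, and the two agree because named elements are never duplicated, as $A_0\subseteq\check{A}$. This is precisely where the restriction on equality is needed. An atom $x=y$ would fail to transfer, since two twins $(a,0)\neq(a,1)$ project to the same element. Boolean connectives then go through directly.

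Next I would handle the two conjunct shapes of the weak normal form. Take $\forall\bar{x}(R(\bar{t})\Rightarrow\psi)$ and any assignment $g$ into $B$ with $\str{B},g\models R(\bar{t})$. By the transfer claim, $\str{A},\pi\circ g\models R(\bar{t})$; since $\str{A}\models\varphi$, also $\str{A},\pi\circ g\models\psi$, and transferring back gives $\str{B},g\models\psi$.

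For $\forall\bar{x}(R(\bar{u})\Rightarrow\exists\bar{y}(H(\bar{v})\wedge\psi))$, take $g$ satisfying the guard in $\str{B}$. Transferring, $\str{A},\pi\circ g\models R(\bar{u})$, so there is a witness assignment $h$ for $\bar{y}$ in $\str{A}$ extending $\pi\circ g$. Lift it by letting $\bar{y}$ take the values $(h(y),0)$; each lies in $B$ because every element of $A$ has a $0$-copy. The combined assignment $g'$ satisfies $\pi\circ g'=h$, so by the transfer claim $\str{B},g'\models H(\bar{v})\wedge\psi$. One must also check that $\bar{x}$ and $\bar{y}$ are disjoint and that the free variables of $\psi$ lie among them; this is standard.

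Since $\varphi$ is a conjunction of such sentences, $\str{B}\models\varphi$. If $\str{A}$ is $\UU$-biquitous then so is $\str{B}$, because $\UU^{\str{B}}=\pi^{-1}(\UU^{\str{A}})=B\times B$; hence the argument also covers the $\GFU$ reading. The only delicate step is the equality case of the transfer claim; everything else is routine homomorphism bookkeeping.
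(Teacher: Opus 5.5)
Your proposal is correct and follows essentially the same route as the paper. Both arguments project along the origin map $(a,\ell)\mapsto a$, transfer atoms including the permitted equalities $x=x$, $x=c$, $c=c'$ (but not $x=y$), and lift existential witnesses to their $0$-copies. Your explicit justification of the $x=c$ case via $A_0\subseteq\check{A}$, and your remark that $\UU$-biquity is preserved, fill in details the paper leaves implicit.
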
 

\begin{proof}
Consider any $\mathbf{A}$-conjunct $\zeta$ of $\varphi$, $\zeta = \forall \bar{x} (R(\bar{t}) \Rightarrow \psi)$. 
Take an assignment $f:\bar{x} \rightarrow B$ such that $\str{B},f \models R(\bar{t})$. Let $g:B \rightarrow A$ be the function
returning for an element $(a,i)$ its ``origin'' $a$, so $gf$ is an assignment $gf:\bar{x} \rightarrow A$. By the definition of the doublings
we have $\str{A}, gf \models R(\bar{t})$.
In particular, note that this also holds if $R$ is an equality in a permitted form $x=x$ or $x=c$.
As $\str{A} \models \zeta$, we have that $\str{A}, gf \models \psi$. 
As, by the definition of the partial doublings, $\str{A}, gf \models \delta$ iff $\str{B}, f \models \delta$ for any atom $\delta$ of $\psi$,
we have that $\str{B}, f \models \psi$ and we conclude that $\str{B}, f \models \zeta$. Again, the argument about $\delta$ works also if $\delta$ is an equality $x=c$ or $x=x$, but would not work if it was of the form $x=y$ (consider, e.g., the formula $\forall xy (P(x,y) \Rightarrow x=y)$, which is not preserved by doublings).  
		
		Consider now any $\mathbf{E}$-conjunct $\zeta$ of $\varphi$, $\zeta=\forall \bar{x} (R(\bar{u}) \Rightarrow \exists \bar{y} (H(\bar{v}) \wedge \psi))$.
		Take an assignment $f:\bar{x} \rightarrow B$ such that $\str{B}, f \models R(\bar{u})$. Let $g:B \rightarrow A$ be as above. By the definition of the partial doublings, we have that $\str{A}, gf \models R(\bar{u})$.
As $\str{A} \models \zeta$ there is an assignment $g':\bar{x} \cup \bar{y} \rightarrow A$ extending $g f$ such that $\str{A}, g' \models H(\bar{v}) \wedge \psi$. Let $f':\bar{x} \cup \bar{y} \rightarrow B$ be the assignment such that $f'(x)=f(x)$ for $x \in \bar{x}$ (so $f'$ extends $f$)  and $f'(y)=(g'(y), 0)$ for $y \in \bar{y}$. 
Again, reasoning about $H$ and the atoms of $\psi$ as in the case of $\mathbf{A}$-conjuncts  we have that $\str{B}, f' \models H(\bar{v}) \wedge \psi$, so $\str{B} \models \zeta$.  
\end{proof}

We next establish one straightforward but important property of doublings. For a structure $\str{A}$, we call elements $a, a' \in A$  \emph{indistinguishable} in $\str{A}$ if for any relation symbol $P \in \sigma$,
any tuple $\bar{a} \subseteq A$ and any tuple $\bar{a}'$ obtained from $\bar{a}$ by replacing some occurrences of $a$ by $a'$ and 
some occurrences of $a'$ by $a$ we have that $\bar{a} \in P^{\str{A}}$ iff $\bar{a}' \in P^{\str{A}}$. With this, we obtain the following lemma.
\begin{lemma}\label{c:uconnected} Let $\alpha$ be an unnamed $1$-type realized by a domain element $b$ in a (not necessarily $\UU$-biquitous) structure $\str{B}$  which is 
the doubling of some structure $\str{A}$ duplicating the whole $\check{A}$. Then there is a distinct domain element $b'$ realizing $\alpha$ in $\str{B}$ such 
that $b, b'$ are indistiguishable. 
Moreover, if $\alpha$ is $\UU$-biquitous then $b'$ can be chosen so that 
$(b,b'),(b',b)\in \UU^{\str{B}}$.
\end{lemma}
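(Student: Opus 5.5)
The plan is to take $b'$ to be the twin of $b$. Since $\alpha$ is unnamed, $b$ is not the interpretation of a constant. In the doubling, the constants are interpreted by the elements $(c,0)$, and every element of $\check{A}\times\{0,1\}$ is unnamed. Hence $b=(a,i)$ for some $a\in\check{A}$ and $i\in\{0,1\}$, and its twin $b'=(a,1-i)$ exists because the whole of $\check{A}$ is duplicated. Clearly $b'\neq b$.

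The central step is to show that $b$ and $b'$ are indistinguishable in $\str{B}$. Let $g:B\to A$ be the origin map $(a,\ell)\mapsto a$, as in the proof of Lemma~\ref{l:doubling}. By clause (i) of the definition of partial doubling, membership of a tuple $\bar{e}$ in $P^{\str{B}}$ depends only on $g(\bar{e})$. Replacing some occurrences of $b$ by $b'$ (or vice versa) in a tuple $\bar{e}$ does not change $g(\bar{e})$, because $g(b)=g(b')=a$. Therefore $\bar{e}\in P^{\str{B}}$ iff $\bar{e}'\in P^{\str{B}}$, which is exactly the definition of indistinguishability.

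Next I will check that $b'$ realizes $\alpha$. The $1$-type of an element consists of atoms over that element and the constants. Constants are interpreted by $(c,0)$, which differ from both $b$ and $b'$. Applying indistinguishability to all tuples built from $b$ and constants shows that these atoms hold for $b$ iff they hold for $b'$. The non-equality literals $x\neq c_g$ and $c_g\neq c_h$ hold for both elements, since both are unnamed and distinct from the constants. Hence $\type{\str{B}}{b'}=\type{\str{B}}{b}=\alpha$.

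For the final clause, suppose $\alpha$ is $\UU$-biquitous. Then $\UU(x_1,x_1)\in\alpha$, so $(b,b)\in\UU^{\str{B}}$. Applying indistinguishability to the tuple $(b,b)$, replacing one occurrence of $b$ by $b'$, gives $(b,b')\in\UU^{\str{B}}$ and $(b',b)\in\UU^{\str{B}}$. (Equivalently, $(a,a)\in\UU^{\str{A}}$ by clause (i), and the same clause then yields both pairs.)

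The only point needing care is the bookkeeping that $b$ lies in the duplicated set $\check{A}$, i.e., that an unnamed $1$-type cannot be realized by some $(c,0)$. This is immediate from the type containing $x_1\neq c_g$ for every constant $c_g$. Everything else follows directly from clause (i) of the definition of partial doubling.
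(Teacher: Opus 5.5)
Your proposal is correct and takes the same approach as the paper, whose proof is the one line ``assume $b=(a,0)$ with $a\in\check{A}$ and take $b'=(a,1)$.'' You simply spell out the verification the paper leaves implicit: indistinguishability via the origin map, equality of the $1$-types, and obtaining $(b,b'),(b',b)\in\UU^{\str{B}}$ from $(b,b)\in\UU^{\str{B}}$.
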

\begin{proof} 
  Assume w.l.o.g. that $b=(a,0)$ for some $a \in \check{A}$ . Take $b'=(a,1)$.
\end{proof}

We remark that Lemma~\ref{l:doubling} actually holds for any first-order sentence using equality in the restricted way as allowed in point (1) of Definition~\ref{def:tgf}. 
The construction in the next subsection is more specific to \GF{}.

The next construction allows one, under some natural conditions, to ``glue'' a collection of models with overlapping domains into a single model.

\begin{definition}
Let $\str{A}_1$ and $\str{A}_2$ be structures over the same signature sharing a common named domain $\hat{A}$, and let $B$ be the intersection of their
unnamed domains. We say that $\str{A}_1$ and $\str{A}_2$ are \emph{joinable} if:
\begin{enumerate}[(i)]
	\item $\str{A}_1 \restr \hat{A} = \str{A}_2 \restr \hat{A}$ and $\str{A}_1$ and $\str{A}_2$ agree on $0$-ary predicates,
  \item for every tuple of distinct unnamed elements $\bar{b} \subseteq {B}$ it holds that $\type{\str{A}_1}{\bar{b}} = \type{\str{A}_2}{\bar{b}}$ or $\bar{b}$ is unguarded in at least one of $\str{A}_1$, $\str{A}_2$. 
	\end{enumerate}
A (possibly infinite) sequence of structures $\str{A}_1, \str{A}_2, \ldots$  sharing a common named domain is \emph{joinable} if every pair of structures in it is joinable.
\end{definition}

\begin{definition}
Let $\str{A}_1, \str{A}_2, \ldots$ be a joinable sequence of structures, and $\hat{A}$ be their common named part. 
We define their \emph{join} as the structure $\str{A}$ with domain $\bigcup_{i} A_i$ 
such that:
\begin{enumerate}
\item  $\str{A} \restr \hat{A} = \str{A}_1 \restr \hat{A}$,  
\item  for any $0$-ary predicate $Q$ we have $Q^{\str{A}} = Q^{\str{A}_1}$,
\item  for every tuple of distinct unnamed elements $\bar{a} \subseteq A$,
if $\bar{a}$ is guarded in some $\str{A}_i$ then  $\type{\str{A}}{\bar{a}} = \type{\str{A}_i}{\bar{a}}$, 
\item  the types of the tuples that are not fully defined in the previous points are completed by setting the truth values of all undefined atoms to $\tt false$.
\end{enumerate}
\end{definition}

It is straightforwad to check that the definition of the join is sound, that is the types of all tuples are defined without conflicts.

\begin{lemma}\label{l:join}
Let $\varphi$ be a \GF{} sentence in weak normal form.
Let $\str{A}_1, \str{A}_2, \ldots$ be a sequence of models of $\varphi$ sharing their named domain.
If they are joinable then their join  $\str{A}$ is also a model of $\varphi$.
\end{lemma}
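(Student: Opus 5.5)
The plan is to check each conjunct of $\varphi$ directly in the join $\str{A}$, using one key observation. Any tuple of elements that is guarded in $\str{A}$ lies entirely within a single $A_i$, and on that tuple $\str{A}$ looks exactly like $\str{A}_i$.

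\emph{Step 1: the observation.} Let $\bar{b}$ be a tuple, possibly with repetitions and named elements, such that $\bar{b}\in P^{\str{A}}$ for some $P\in\sigma$. Let $\bar{a}$ be the distinct unnamed elements occurring in $\bar{b}$.
\begin{itemize}
\item If $\bar{a}$ is empty, then $\bar{b}\subseteq\hat{A}$, and by clause (1) of the join the atom comes from $\str{A}_1\restr\hat{A}$, which equals every $\str{A}_i\restr\hat{A}$ by joinability.
\item Otherwise, the atom $P(\bar{b})$ was not set to true by clause (4). So it was fixed by clause (3) for some tuple $\bar{a}'\supseteq\bar{a}$ of distinct unnamed elements that is guarded in some $\str{A}_i$. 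Then $\bar{a}\subseteq\bar{a}'\subseteq A_i$.
\end{itemize}
In the second case I claim that $\str{A}$ and $\str{A}_i$ agree on every atom over $\bar{a}\cup\hat{A}$. By clause (3), $\type{\str{A}}{\bar{a}'}=\type{\str{A}_i}{\bar{a}'}$. This type covers all atoms over $\bar{a}'$ together with the constants, and it restricts to the corresponding subtuple $\bar{a}$. I also need that types of subtuples are consistent across the different defining structures; this is exactly what the soundness of the join definition gives, and it uses condition (ii) of joinability. Nullary predicates agree by clause (2) and joinability (i). Hence
\[
\str{A}\restr(\bar{a}\cup\hat{A})=\str{A}_i\restr(\bar{a}\cup\hat{A}).
\]

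\emph{Step 2: universal conjuncts.} Take $\forall\bar{x}(R(\bar{t})\Rightarrow\psi)$ and an assignment $f$ with $\str{A},f\models R(\bar{t})$. If $R$ is equality of an allowed form, or $R$ is nullary, handle it the same way with $\bar{a}$ being the (at most one) unnamed element. Otherwise Step 1 yields some $i$ in which all elements of $f(\bar{t})$ live and on which $\str{A}$ and $\str{A}_i$ agree. Every variable of $\psi$ occurs in $\bar{t}$, so every atom of $\psi$ has the same truth value under $f$ in $\str{A}$ and in $\str{A}_i$. Since $\str{A}_i\models\varphi$, the conjunct holds in $\str{A}$.

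\emph{Step 3: existential conjuncts.} Take $\forall\bar{x}(R(\bar{u})\Rightarrow\exists\bar{y}(H(\bar{v})\wedge\psi))$. As in Step 2, move to a suitable $\str{A}_i$ and obtain there a witness assignment $f'$ extending $f$. The tuple $f'(\bar{v})$ is in $H^{\str{A}_i}$, so its distinct unnamed elements are guarded in $\str{A}_i$. By clause (3), $\str{A}$ has the same type on them as $\str{A}_i$, and constants and nullary predicates agree as before. Therefore $H(\bar{v})$ and all atoms of $\psi$ (whose variables lie in $\bar{v}$) keep their truth values in $\str{A}$, and $f'$ is also a witness in $\str{A}$.

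\emph{Main obstacle.} The delicate point is Step 1: justifying that a guarded tuple of $\str{A}$ has its type inherited wholesale from one single $\str{A}_i$. In particular, I must show that atoms over a sub-tuple cannot be set by a different $\str{A}_j$ in a conflicting way. Clause (4) sets only false atoms, and joinability (ii) forces agreement whenever a tuple is guarded in two structures, so this reduces to the soundness of the join definition, which I will spell out carefully.
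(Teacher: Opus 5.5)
Your proposal is correct and follows essentially the same route as the paper. You pick a component $\str{A}_i$ on which the guard's unnamed elements have the same type as in $\str{A}$, evaluate the conjunct there, and transfer the existential witness back to $\str{A}$ via clause (3) of the join, because $H(\bar{v})$ guards it in $\str{A}_i$. Your Step~1 only unpacks the paper's one-line appeal to ``the definition of the join'', and it needs nothing beyond the well-definedness of the join, which the paper treats as straightforward.
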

\begin{proof} 
Consider any $\mathbf{A}$-conjunct $\zeta$ of $\varphi$, $\zeta = \forall \bar{x} (R(\bar{t}) \Rightarrow \psi)$. 
Take an assignment $f:\bar{x} \rightarrow A$ such that $\str{A},f \models R(\bar{t})$. 
Let $\bar{a}$ be the tuple of distinct unnamed elements contained in $f(\bar{t})$ (in some order).
As $\bar{a}$ is guarded in $\str{A}$ (by $R$), by the definition of
the join there is $i$ such that $\bar{a} \subseteq A_i$ and $\type{\str{A}_i}{\bar{a}}=\type{\str{A}}{{\bar{a}}}$. 
This means that in particular
$\str{A}_i, f \models R(\bar{t})$. As $\str{A}_i \models \zeta$ we have that
$\str{A}_i, f \models \psi$. As the set of the free variables of $\psi$ is contained in $\bar{t}$ we have that $\str{A}, f \models \psi$. Hence $\str{A} \models \zeta$.

Consider now any $\mathbf{E}$-conjunct $\zeta$ of $\varphi$, $\zeta=\forall \bar{x} (R(\bar{u}) \Rightarrow \exists \bar{y} (H(\bar{v}) \wedge \psi))$.
Take an assignment $f:\bar{x} \rightarrow B$ such that $\str{A}, f \models R(\bar{u})$. 
Let $\bar{a}$ be the tuple of distinct unnamed elements contained in $f(\bar{u})$ (in some order).
 As $\bar{a}$ is guarded in $\str{A}$ (by $R$), by the definition of
the join there is $i$ such that $\bar{a} \subseteq A_i$ and
$\type{\str{A}_i}{\bar{a}}=\type{\str{A}}{\bar{a}}$. In particular
$\str{A}_i, f \models R(\bar{u})$. As $\str{A}_i \models \zeta$ there is $f': \bar{x} \cup \bar{y} \rightarrow A_i$, extending $f$,
such that $\str{A}_i, f' \models H(\bar{v}) \wedge \psi$.
Let $\bar{b}$ be the tuple of distinct unnamed elements contained in $f'(\bar{v})$ (in some order). 
As $\bar{b}$ is guarded in $\str{A}_i$ (by $H$) we have, by the definition of the join, that $\type{\str{A}}{\bar{b}}=\type{\str{A}_i}{\bar{b}}$.
And since the set of the free variables of $\psi$ is contained in $\bar{v}$ we infer that $\str{A} , f' \models H(\bar{v}) \wedge \psi$. Hence $\str{A} \models \zeta$.
We note that the above reasoning works without problems if $R$, $H$ or some atoms of $\psi$ are equalities in the permitted form $x=x$ or $x=c$, since $\str{A}$ and all the $\str{A}_i$
agree on their named parts.
\end{proof}

We remark that the joins of $\UU$-biquitous structures are usually not $\UU$-biquitous, so in contrast to Lemma \ref{l:doubling},
Lemma \ref{l:join} does not work for \GFU. %
One may also note that Lemma \ref{l:join}  works even if (a guarded use of) equality between variables is permitted.

\section{Tight Complexity Bounds for Satisfiability of TGF (GFU)} \label{s:compgfu}

We recall that any type $\alpha$ induces a
structure $\overline{\alpha}$ such that (i) the domain of $\overline{\alpha}$ is is the set of terms that
appear in $\alpha$, and (ii) for each predicate $P\in \sigma$, we have
$P^{\overline{\alpha}}=\{\vec{t}\mid P(\vec{t})\in \alpha\}$.

\begin{definition}[Type overlap]
    Assume types $\alpha_1,\alpha_2$ over a signature with constants $C$ and let $\vec{x}=(x_{i_1},\ldots,x_{i_k})$ be a tuple of variables from
  $\alpha_1$. We say  $\alpha_1$ \emph{overlaps with  $\alpha_2$  at $\vec{x}$}, if the following conditions are satisfied:
  \begin{enumerate}[(i)]
  \item If $P(\vec{t})\in\alpha_1$ with 
    $\vec{t}\subseteq C\cup \vec{x}$, then
    $P(\vec{v})\in \alpha_2$, where $\vec{v}$ is obtained from $\vec{t}$ by replacing each $x_{i_j}$ with $x_j$.
   \item If $P(\vec{t})\in\alpha_2$ with  $\vec{t}\subseteq C\cup \{x_1,\ldots,x_k\}$, then
    $P(\vec{v})\in \alpha_1$, where $\vec{v}$ is obtained from $\vec{t}$ by replacing each $x_j$ by $x_{i_j}$.   
  \end{enumerate}
\end{definition}

  \begin{definition}[Mosaic]\label{def:mosaic} 
    A \emph{mosaic} $\mathcal{M}$ for a sentence $\varphi\in \gftimes$  in normal form is a set  of $\UU$-biquitous types over the signature of
  $\varphi$, satisfying the following:

  \begin{enumerate}[(A)]

  \item \label{itm:X-size} Each $\alpha\in \mathcal{M}$ is an $\ell$-type with $\ell\leq \width(\varphi)$;  
    
  \smallskip\item \label{itm:const-agreement} $\mathcal{M}$ is consistent, i.e., for all
  $\alpha, \alpha'\in \mathcal{M}$, $\alpha$ and $\alpha'$ agree on ground atomic formulae and negations of ground atomic formulae, i.e.\,they can only differ on formulae that involve variables;
  
  \smallskip\item \label{itm:local-consistency} $\overline{\alpha}\models \psi$ for all
    $\alpha \in \mathcal{M}$ and all $\psi\in \mathbf{A}(\varphi) $;

  \item \label{itm:existential}   Assume a type $\alpha\in \mathcal{M}$,  a formula
    $\forall \vec{x}. (R(\vec{t}) \rightarrow \exists
    \vec{y}. H(\vec{v}) )\in
    \mathbf{E}(\varphi)$ such that  $R(g(\vec{t}))\in \alpha$ for some
    $\vec{x}$-substitution $g$, and let $\vec{z}=(x_{i_1},\ldots,x_{i_\ell})$ be  some enumeration of the variables in $\alpha$ that appear in  $\{g(x)\mid x\in
      \vec{x}\cap \vec{v}\} $. Then  there exists some $\alpha'\in \mathcal{M}$
    such that:
    \begin{enumerate} 
    \item
      $ H(h(\vec{v}'))\in  \alpha'$ for some  $\bar{y}$-substitution $h$, where $\vec{v}'$ is obtained from $g(\vec{v})$ by replacing each $x_{i_j}$ with $x_j$.
    \item $\alpha$ overlaps with $\alpha'$ at $\vec{z}$.
    \end{enumerate}
    
    \item \label{itm:cross-types} If $\alpha_1,\alpha_2\in \mathcal{M}$ such that $y_1$   is a variable in $\alpha_1$ and $y_2$   is a variable in $\alpha_2$, then there exists a 2-type $\alpha\in \mathcal{M}$  such that 
      \begin{inparaenum}[(i)]
        \item $\alpha_1$ agrees with $\alpha$ on $(y_1,x_1)$,
      \item $\alpha_2$ agrees with $\alpha$ on $(y_2,x_2)$.
      \end{inparaenum}

  \end{enumerate}

\end{definition}

Intuitively, the conditions (A-E) ensure the following:
(\ref{itm:X-size}) requires that only a small number of
``placeholder'' variables are used in types. Note that types in mosaics
only refer to the original signature of the formula and the small number of placeholder
variables. The condition (\ref{itm:X-size}) is important to ensure the
relatively small size of mosaics. The condition
(\ref{itm:const-agreement}) forces the types to agree on the
participation of constants in predicates.  The condition
(\ref{itm:local-consistency}) requires each type to (locally) satisfy
all sentences from $\mathbf{A}(\varphi)$. The condition
(\ref{itm:existential}) ensures that for each type locally satisfying
the body of some sentence from $\mathbf{E}(\varphi)$, we find a
matching type where also the head of that sentence is
satisfied. Recall that all types in a mosaic are $\UU$-biquitous, i.e. we
require $\ur$ to be correctly interpreted locally (i.e., within the
individual types). Using (\ref{itm:cross-types}) we make sure that any
two representatives of unnamed domain elements (in terms of unary
types) found across the types also occur together in one type.

The following soundness and completeness theorems show that mosaics
properly characterize satisfiability of equality-free $\gftimes$ formulae (and, due to Proposition~\ref{prop:twoguarded}, of equality-free $\TGF$ formulae).

\begin{theorem}[Completeness]\label{thm:completeness}
  Let $\varphi\in \gftimes$ be a formula in normal form. If $\varphi$
  is satisfiable, under the SNA
  then there exists a mosaic $\mathcal{M}$ for
  $\varphi$.
\end{theorem}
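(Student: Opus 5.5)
The natural approach is to read the mosaic off a model. I would fix a $\UU$-biquitous model $\str{A}$ of $\varphi$ satisfying the SNA. As a preliminary step, I would replace $\str{A}$ by its doubling $\str{B}$; by Lemma~\ref{l:doubling}, $\str{B}$ is still a model of $\varphi$. Since doubling preserves $\UU$ on all pairs, it is still $\UU$-biquitous. By Lemma~\ref{c:uconnected}, every unnamed element $b$ of $\str{B}$ has a distinct indistinguishable twin $b'$. The twin is there to handle condition (\ref{itm:cross-types}) when the two representatives happen to be the same element.

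I would then set $\mathcal{M}$ to be the set of all types $\type{\str{B}}{\bar{a}}$, where $\bar{a}$ ranges over tuples of distinct unnamed elements with $1\le|\bar{a}|\le\width(\varphi)$. I would also include, for each unnamed 2-tuple, its 2-type; this needs $\width(\varphi)\ge 2$, which can be ensured w.l.o.g. by adding a dummy $\mathbf{E}$-conjunct of width $2$ guarded by $\UU$. Since the paper's definitions, such as that of $\type{\str{A}}{\bar{a}}$, are phrased over unnamed elements, the types automatically contain the inequalities $x_i\neq x_j$ and $x_i\neq c_g$ by the SNA and distinctness.

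Conditions (\ref{itm:X-size})--(\ref{itm:local-consistency}) are then routine:
\begin{itemize}
\item (\ref{itm:X-size}) holds by construction.
\item (\ref{itm:const-agreement}) holds because all types come from the single structure $\str{B}$, so they agree on ground atoms and on nullary predicates.
\item $\UU$-biquity of each type is inherited from $\str{B}$.
\item For (\ref{itm:local-consistency}), note that $\overline{\alpha}$ is isomorphic to the substructure of $\str{B}$ on $\bar{a}\cup\hat{B}$. Every $\mathbf{A}$-conjunct is a universal guarded clause, so it is preserved under substructures.
\end{itemize}

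For (\ref{itm:existential}), take $\alpha=\type{\str{B}}{\bar{a}}$ and an $\mathbf{E}$-conjunct whose body $R(g(\bar{t}))$ lies in $\alpha$. Interpreting $g$ via $x_i\mapsto a_i$ gives an assignment making the guard true in $\str{B}$. Hence there are witnesses for $\bar{y}$ with $\str{B}\models H(\ldots)$. Let $\bar{d}$ be the distinct unnamed elements occurring in the $H$-atom, ordered so that the images of $\bar{z}$ come first. Then $|\bar{d}|\le\width(\varphi)$, and $\alpha'=\type{\str{B}}{\bar{d}}$ lies in $\mathcal{M}$. The substitution $h$ is read off by sending each $y$ to the variable naming its witness, or to the constant if the witness is named. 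Part (a) holds by construction. Part (b) holds because both types are computed in the same structure, so they restrict identically to the shared elements.

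For (\ref{itm:cross-types}), take $\alpha_1,\alpha_2$ realized by tuples in which $y_1,y_2$ correspond to elements $b_1,b_2$. If $b_1\ne b_2$, then $\type{\str{B}}{b_1,b_2}\in\mathcal{M}$ agrees with $\alpha_1$ on $(y_1,x_1)$ and with $\alpha_2$ on $(y_2,x_2)$, since 1-type restrictions are determined by single elements in the same structure. If $b_1=b_2$, I would use the twin $b_1'$ from Lemma~\ref{c:uconnected}. It realizes the same 1-type (including its relations to constants), so $\type{\str{B}}{b_1,b_1'}$ works. This coincident case is the only real obstacle, and it is exactly why the doubling is needed.
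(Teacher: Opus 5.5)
Your proposal is correct and follows the paper's proof essentially step for step. Both pass to the doubling $\str{B}$, take $\mathcal{M}$ to be the types of tuples of distinct unnamed elements of $\str{B}$ of length at most $\width(\varphi)$, check (A)--(D) directly, and use the twin from the doubling for the coincident case of (E). Your explicit insistence on $\width(\varphi)\ge 2$ is a real improvement, since the paper's argument for (E) silently uses $2$-types and therefore needs it.

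There is one small slip. You start at $|\bar{a}|\ge 1$, whereas the paper also admits $k=0$. So in (D) the tuple $\bar{d}$ can be empty, namely when $\bar{z}$ is empty and all witnesses are named, and then $\type{\str{B}}{\bar{d}}\notin\mathcal{M}$. In that case simply take $\alpha'=\alpha$. This works because $H(h(\bar{v}'))$ is then a ground atom true in $\str{B}$, and overlap at the empty tuple is just agreement on ground atoms.
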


\begin{proof}
  Let $\sigma$ be the signature of $\varphi$, i.e.  $\sigma$ is the
  set of all predicate symbols and constants that appear in
  $\varphi$. Assume a structure $\str{A}$ over $\sigma$ such that
  $\str{A}$ is a model of $\varphi$.  Let $\str{B}$ be the doubling of
  $\str{A}$. Recall that $\str{B}$ is over $\sigma$, and that it is a
  model of $\varphi$ due to Lemma~\ref{l:doubling}. We show how to
  extract from $\str{B}$ a mosaic $\mathcal{M}$ for $\varphi$. We let
  \[\mathcal{M}=\big\{\type{\str{B}}{b_1,\ldots,b_k} \ \big|\   \{b_1,\ldots,b_k\} \subseteq \check{B} ,\ |\{b_1,\ldots,b_k\}| = k ,\   0 \leq k\leq
    \width(\varphi) \big\}.\] 
  We now proceed to show that the constructed
  $\mathcal{M}$ is a mosaic for $\varphi$. Note that, since $\str{B}$
  is a model of $\varphi$ and is over the signature $\sigma$, we
  immediately obtain that all types in $\mathcal{M}$ are
  $\UU$-biquitous types over the signature of $\varphi$.

      Now, we separately verify the properties from
      Definition~\ref{def:mosaic}:
  \begin{enumerate}[(A)]
  	
  \item  Due to the construction of $\mathcal{M}$ above, each
    $\alpha\in \mathcal{M}$ is trivially an $\ell$-type for which 
    $\ell\leq \width(\varphi) $ holds.
  	
  \item Independently of the specific $\bar{b}\in (\check{B})^k$ picked, 
    any $\type{\str{B}}{\bar{b}}$ will contain all constants that appear
    in $\varphi$ by definition. Then, due
    to the way types are extracted from $\str{B}$, it must be the case that $\alpha$
    and $\alpha'$ agree on ground atomic formulae and negations of
    ground atomic formulae, for any $\alpha, \alpha'\in \mathcal{M}$.

 \item 	$\overline{\alpha}\models \psi$ for all
  	$\alpha \in \mathcal{M}$ and all $\psi\in \mathbf{A}(\varphi) $.
  	
        Suppose $\alpha= \type{\str{B}}{\bar{b}} $ for a tuple
        $\bar{b}=(b_1,\ldots,b_k)$ of elements in $\check{B}$, where
        $k\leq \width(\varphi)$. Observe that
        $\str{B}\restr (\hat{B} \cup \bar{b})\models \psi$ iff
        $\overline{\alpha}\models \psi$. Recall that $\str{B} \models \psi$ by
        assumption. Since $\psi$ is a formula in prenex form with only
        universal quantifiers, it holds that
        $\str{B}\restr D \models \psi $ for any $D\subseteq B$. Thus
        $\overline{\alpha}\models \psi$.

      \item Assume a type $\alpha\in \mathcal{M}$ and a formula
    $\forall \vec{x}. (R(\vec{t}) \Rightarrow \exists
    \vec{y}. H(\vec{v}) )\in
    \mathbf{E}(\varphi)$ such that  $R(g(\vec{t}))\in \alpha$ for some
    $\vec{x}$-substitution $g$. Let $\vec{z}=(x_{i_1},\ldots,x_{i_\ell})$ be  some enumeration of the variables in $\alpha$ that appear in  $\{g(x)\mid x\in
      \vec{x}\cap \vec{v}\} $. 
    We must show that there exists  $\alpha'\in \mathcal{M}$
    such that:
    \begin{enumerate} 
    \item
      $ H(h(\vec{v}'))\in  \alpha'$ for some  $\bar{y}$-substitution $h$, where $\vec{v}'$ is obtained from $g(\vec{v})$ by replacing each $x_{i_j}$ with $x_j$.
    \item $\alpha$ overlaps with $\alpha'$ at $\vec{z}$.
    \end{enumerate}
   
    Suppose $\alpha= \type{\str{B}}{\bar{b}} $ for a tuple
    $\bar{b}=(b_1,\ldots,b_k)$ of elements in $\check{B}$, where
    $k\leq \width(\varphi)$. Let $u$ be the function that maps every
    constant $c$ in $\varphi$ to $c^{\str{B}}$ and every variable
    $x_1,\ldots,x_k$ to $b_1,\ldots,b_k$, respectively. From
    $R(g(\vec{t}))\in \alpha$ it follows that
    $u(g(\bar{t})) \in R^\str{B}$.  Then, since $\str{B}$ satisfies
    the above formula, there exists an extension $f$ of $u$ that
    additionally maps every existentially quantified variable $y$ in
    $\bar{y}$ to an object $f(y)\in B$, and such that
    $f(\bar{v})\in H^{\str{B}}$. Let
    $\bar{d}=(b_{i_1},\ldots,b_{i_{\ell}},d_{1},\ldots,d_w)$ be such
    that $d_{1},\ldots,d_w$ is an enumeration of domain elements that
    appear in
    $\{e\in \check{B} \mid f(y)=e\land y\in \bar{y}\cap\bar{v} \}$.
    Let $\alpha'=\type{\str{B}}{\bar{d}}$. It is not difficult to
    verify that $\alpha'$ is a desired type that satisfies (a) and
    (b). For (a), $\alpha$ overlaps with $\alpha'$ at $\vec{z}$ due to
    fact that $\alpha= \type{\str{B}}{\bar{b}} $ and
    $\alpha'=\type{\str{B}}{\bar{d}}$ with the first $\ell$ elements
    of $\vec{d}$ being $b_{i_1},\ldots,b_{i_{\ell}}$.  For (b), take
    the $\vec{y}$-substitution $h$ such that, for all $y\in \vec{y}$ we
    have: (i) if $f(y)=d_i$ for some $1\leq i \leq w$, then
    $h(y)=x_{\ell + i}$, and (ii) if $f(y)=c^{\str{B}}$ for some
    constant $c$, then $h(y)=c$. Then $ H(h(\vec{v}'))\in \alpha'$
    follows from the fact that $f(\bar{v})\in H^{\str{B}}$.

        \item %
        Assume types $\alpha_1,\alpha_2\in \mathcal{M}$ such that $x_i$  is a variable in $\alpha_1$ and $x_j$   is a variable in $\alpha_2$. We must show that there  exists some $\alpha\in \mathcal{M}$ with variables $x_{i'},x_{j'}$ such that 
      \begin{inparaenum}[(i)]
      \item $x_{i'}\neq x_{j'}$,
      \item $\alpha_1$ agrees with $\alpha$ on $(x_{i},x_{i'})$,
      \item $\alpha_2$ agrees with $\alpha$ on $(x_{j},x_{j'})$.
      \end{inparaenum}
      Suppose $\alpha_1= \type{\str{B}}{\bar{c}} $ and
      $\alpha_2= \type{\str{B}}{\bar{d}} $ for tuples
      $\bar{c}=(c_1,\ldots,c_k)$ and $\bar{d}=(d_1,\ldots,d_{\ell})$
      of elements in $\check{B}$, where $k,\ell \leq \width(\varphi)$.
      
      First, assume  $c_i\neq d_j$.  Then take the type $\alpha= \type{\str{B}}{(c_i,d_j)}$ and let
      $x_{i'}=x_1$, $x_{j'}=x_2$. Then $\alpha$ immediately satisfies
      all requirements (i)-(iii).

      Now, suppose $c_i = d_j$. Since $\str{B}$ is the doubling of $\str{A}$,  we can find an element $e\in \check{B}$ such that $d_j\neq e$ while $d_j$ and $e$ realize the same 1-type in $\str{B}$.  Then take the type $\alpha= \type{\str{B}}{(c_i,e)}$ and let
      $x_{i'}=x_1$, $x_{j'}=x_2$. Then again $\alpha$ immediately satisfies
      all requirements (i)-(iii). \qedhere
      	
  \end{enumerate}
  
\end{proof}

\begin{theorem}[Soundness]\label{thm:soundness}
  Let $\varphi\in \gftimes$ be a formula  in normal form. If there
  exists a mosaic $\mathcal{M}$ for $\varphi$, then $\varphi$ is satisfiable under the SNA.
\end{theorem}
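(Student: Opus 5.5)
We build a model from the mosaic by a tree-like unravelling, followed by a completion step that makes the structure $\UU$-biquitous.

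\textbf{Unravelling.} The named part consists of the constants of $\varphi$. By condition (\ref{itm:const-agreement}), all types in $\mathcal{M}$ agree on ground atoms, so these ground atoms define the structure on the named part. Pick any type $\alpha_0\in\mathcal{M}$ and create fresh unnamed elements for its variables. This gives the root ``bag'', carrying the relations given by $\overline{\alpha_0}$.

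We then proceed inductively. Each bag $\bar b$ carries a label $\alpha\in\mathcal{M}$. For every $\mathbf{E}$-conjunct, every $\vec{x}$-substitution $g$ with $R(g(\vec t))\in\alpha$, and every choice of the witness type $\alpha'$ from condition (\ref{itm:existential}), we add a child bag. The child reuses the elements of $\bar b$ corresponding to $\vec z$ and adds fresh elements for the remaining variables of $\alpha'$. Its relations are those of $\overline{\alpha'}$. The overlap condition ensures that the shared elements get identical atoms from parent and child: atoms over $\vec z\cup C$ coincide.

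Let $\str{T}$ be the union of all bags, where all atoms not asserted by any bag are false. Every guarded tuple in $\str{T}$ lies inside a single bag. As usual in tree unravellings, an atom over elements of several bags must be asserted in one bag, because shared elements form exactly the overlaps. Hence each $\mathbf{A}$-conjunct holds in $\str{T}$ by condition (\ref{itm:local-consistency}), and each $\mathbf{E}$-conjunct holds because its witnesses were added. This argument mirrors the proof of Lemma~\ref{l:join}.

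\textbf{Main obstacle: completing $\UU$.} $\str{T}$ is generally not $\UU$-biquitous. For two elements $a,b$ that do not share a bag, $\UU(a,b)$ must hold, and the pair $(a,b)$ then becomes guarded by $\UU$. The $\mathbf{A}$- and $\mathbf{E}$-conjuncts guarded by $\UU$ must therefore be satisfied on $(a,b)$ as well.

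Each element $a$ is a variable of some type in $\mathcal{M}$, namely the label of a bag containing it. Let $a$ have its variable in $\alpha_1$ and $b$ its variable in $\alpha_2$. Condition (\ref{itm:cross-types}) provides a 2-type $\beta\in\mathcal{M}$ that agrees with $\alpha_1$ on $a$'s 1-type and with $\alpha_2$ on $b$'s 1-type. We impose $\beta$ on $(a,b)$, which sets all atoms over exactly $\{a,b\}$ and constants.

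Because $\beta\in\mathcal{M}$, the pair now locally satisfies every $\mathbf{A}$-conjunct; these checks reduce to checks in $\overline{\beta}$. However, $\beta$ may trigger $\mathbf{E}$-conjuncts that need new witnesses. To handle this, we treat $(a,b)$ as a new bag labelled $\beta$ and continue the unravelling from it, adding children exactly as above.

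\textbf{Iteration and correctness.} The construction proceeds in $\omega$ rounds. Each round first adds the children required by condition (\ref{itm:existential}) for all bags. It then, for every pair of elements not yet sharing a bag, adds a $\UU$-bag using condition (\ref{itm:cross-types}). The limit structure $\str{A}$ is $\UU$-biquitous.

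It remains to check that every guarded tuple of $\str{A}$ lies within a single bag. Here we must ensure that a newly imposed $\beta$ never creates atoms involving three or more elements drawn from distinct bags. This holds because $\beta$ is a 2-type, so it only fixes atoms whose non-constant arguments come from $\{a,b\}$. We must also ensure that imposing $\beta$ does not conflict with atoms already present on $a$ or $b$. This holds because $\beta$ agrees with their 1-types, and $a,b$ previously shared no bag, so no atom over both existed.

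With these two facts, any tuple satisfying a guard atom is covered by a bag, whose label is in $\mathcal{M}$. The $\mathbf{A}$-conjuncts then hold by condition (\ref{itm:local-consistency}), and the $\mathbf{E}$-conjuncts have witnesses from the next round. Finally, constants are interpreted by themselves, so the SNA is respected.
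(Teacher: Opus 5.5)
Your proposal is correct and follows essentially the paper's construction: the model is a union of instantiated mosaic types, where witnesses for $\mathbf{E}$-conjuncts come from condition (D) and reuse the overlap elements, every pair of unnamed elements not yet co-occurring is closed off by a 2-type from condition (E), and these 2-type bags are themselves subject to further witness generation. The differences are that you schedule in fair rounds, whereas the paper repairs the earliest violation and then connects each fresh element to all older ones, and that your restriction of the $\UU$-bags to pairs not yet sharing a bag is slightly more careful than the paper's enumeration of pairs $(d,d')$, which as written also pairs fresh elements with the reused overlap elements of the same new bag and could thus impose a 2-type conflicting with that bag.
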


\begin{proof}
  Assume a mosaic $\mathcal{M}$ for $\varphi$. %
  Let $C$ be the set of constants
  that appear in $\varphi$ and let $D$ be a countably infinite set
of elements such that $C\subseteq D$. One may think of $D$ as a pool of potential domain elements. An \emph{instantiation} for a type
  $\alpha\in \mathcal{M}$ is any injective function $\delta$ that maps
  all terms of $\alpha$ to objects in $D$ such that $\delta(c)=c$ for
  all $c\in C$. We let $\delta(\alpha)$ denote the result of replacing in $\alpha$ every variable $x$ with $\delta(x)$. Given a sequence
  $\Smc = (\alpha_0,\delta_0),(\alpha_1,\delta_1),\ldots$ of pairs
  $(\alpha_j,\delta_j)$, where $\alpha_j\in \mathcal{M}$ and $\delta_j$ is
  an instantiation for $\alpha_j$, we use $\str{A}_\Smc$ to denote the
  following structure:
  \begin{enumerate}[-]
  \item the domain of $\str{A}_\Smc$ is
    $A_\Smc=C\cup \bigcup_{i \geq 0 } ran(\delta_i)$,
  \item $c^{\str{A}_\Smc}=c$, for all $c\in C $, and
  \item
    $R^{\str{A}_\Smc}= \bigcup_{i \geq 0 } \{\delta_i(\vec{t}) \mid
    R(\vec{t})\in \alpha_i \} $, for all predicates $R$ of $\varphi$.
  \end{enumerate}
  For a finite initial segment $(\alpha_0,\delta_0),\ldots,(\alpha_i,\delta_i)$ of $\Smc$, we define $\str{A}_{\Smc}^{i}$ in the obvious way: it is the substructure of $\str{A}_{\Smc}$ obtained by restricting its domain to $C\cup \bigcup_{0 \leq k \leq i} ran(\delta_k)$. %
  
  Our goal is to inductively construct a possibly infinite sequence $\Smc$ as above such that  $\str{A}_\Smc \models \varphi$.

  \smallskip
  
  As base case, the first element $(\alpha_0,\delta_0)$ of $\Smc$
  is obtained by setting $\alpha_0$ to be an arbitrary type from
  $\mathcal{M}$, and letting $\delta_0$ be any instantiation for
  $\alpha_0$.

  \smallskip For the inductive step, suppose
  $(\alpha_0,\delta_0),\ldots,(\alpha_{i-1},\delta_{i-1})$ have been
  defined, where $i > 0$. We show how to define the next segment
  $(\alpha_i,\delta_i),\ldots,(\alpha_m,\delta_m)$ of $\Smc$, where $m\geq i$
  (we indeed may attach to $\Smc$ multiple new elements in one step). To
  this end, choose the smallest index $0 \leq j \leq i -1$ satisfying
  the following condition: there is
  $\forall \vec{x}. (R(\vec{t}) \Rightarrow \exists
  \vec{y}. H(\vec{v}))\in
  \mathbf{E}(\varphi)$ such that $g(\vec{t})  \in R^{\str{A}_\Smc^{j}}$ for
  some $\vec{x}$-assignment $g$ and there is no $\vec{x}{\cup}\vec{y}$-assignment $u$ extending $g$ such that $u(\vec{v})\in H^{\str{A}_\Smc^{i-1}}$. Intuitively, $j$ corresponds to the ``earliest'' violation of some formula from $\mathbf{E}(\varphi)$. If such $j$ does not exist, the
  construction of $\Smc$ is complete, and we can proceed to $(\star)$
  below, where we argue that $\str{A}_\Smc \models \varphi$. We
  assume that the above $j$ exists. We first show in ($\dagger$) how
  to define $(\alpha_{i},\delta_{i})$, and then in ($\ddagger$) how to
  define the remaining
  $(\alpha_{i+1},\delta_{i+1}),\ldots,(\alpha_m,\delta_m)$.

  \smallskip ($\dagger$) Take an  $\vec{x}$-substitution $h$ such that $h(x)=\delta_j^{-1}(g(x))$ for all variables $x$ in $\vec{x}$. Due to the definition of $\str{A}_\Smc$, we have that $R(\delta_j^{-1}(g(\vec{t})))\in \alpha_j$. Thus also $R(h(\vec{t}))\in \alpha_j$. Then the condition (D) in Definition~\ref{def:mosaic} holds. Let $\vec{z}=(x_{i_1},\ldots,x_{i_\ell})$ be some enumeration of the variables in $\alpha_j$ that appear in  $\{h(x)\mid x\in
      \vec{x}\cap \vec{v}\} $. Then, by definition, there must exist a  $\alpha'\in \mathcal{M}$
    such that:
    \begin{enumerate} 
    \item $ H(w(\vec{v}'))\in \alpha'$ for some $\bar{y}$-substitution
      $w$, where $\vec{v}'$ is obtained from $h(\vec{v})$ by replacing
      each $x_{i_j}$ with $x_j$.
    \item $\alpha_j$ overlaps with
      $\alpha'$ at $\vec{z}$.
    \end{enumerate}
    We let $\alpha_i=\alpha'$, and define an instantiation $\delta_i$ for $\alpha_i$
    as follows. First, we set $\delta_{i}(x_1)=g(x_{i_{1}}),\ldots,\delta_{i}(x_\ell)=g(x_{i_{\ell}})$. For the remaining variables in $x $ in $\alpha_i$, we set $\delta_i(x)=c_x$ to a a fresh constant from $D\setminus C$.

  \smallskip ($\ddagger$) Let $N$ be the set of all constants that
  were freshly introduced in $\Smc$ by $\delta_i$, i.e., $N$ is the set
  of all $\delta_i(x)$ such that the variable $x$ appears in $\alpha_i$ but not among $x_1,\ldots,x_{\ell}$. Intuitively, in order to properly deal with the
  $\ur$ predicate, we need to find in $\mathcal{M}$ appropriate types to
  connect every $c\in N$ with the each of the constants of the
  sequence $\Smc$ constructed so far.
  Let $(d_1,d_1'),\ldots,(d_n,d_n')$ be an enumeration of all pairs
  $(d,d'$) such that $d\in N$ and
  $d'\in (\bigcup_{0 \leq k \leq i-1} ran(\delta_{k}))\setminus C$. That is, $d'$ is
  any constant that appears in the sequence $\Smc$ constructed so far but
  $d'\not\in N\cup C$.  The definition of the
  segment $(\tau_{i+1},\delta_{i+1}),\ldots,(\tau_m,\delta_m)$ of $S$
  in this inductive step is as follows. We let $m=i+n$, and for each
  $1 \leq k\leq n$, we select $(\tau_{i+1+k},\delta_{i+1+k})$ as
  described next.

  \smallskip  Assume an arbirary $1 \leq k \leq n$.  Note that
  $\delta_{i}^{-1}(d_k)$ is a variable in $\alpha_i$ and thus let
  $y=\delta_{i}^{-1}(d_k)$.  Let $\alpha = \alpha_r$ for some
  $0 \leq r \leq i$ such that $d_k'\in ran(\delta_{r})$. Again,
  $\delta_{r}^{-1}(d_k')$ is a variable of $\alpha_r$, thus let
  $y'=\delta_{r}^{-1}(d_k')$. Due to Condition (\ref{itm:cross-types})
  in the definition of mosaics, there exists a 2-type $\alpha^{*}\in \mathcal{M}$  such that 
      \begin{inparaenum}[(i)]
        \item $\alpha_i$ agrees with $\alpha^{*}$ on $(y,x_1)$,
        \item $\alpha$   agrees with $\alpha^{*}$ on $(y',x_2)$.
      \end{inparaenum}
  Then we set $\alpha_{i+1+k}=\alpha^*$, and let
  $\delta_{i+1+k} = \{x_1 \mapsto d_k, x_2 \mapsto d_k')\}$.
  
  \medskip

 $(\star)$  The above completes the construction of a candidate model
  $\str{A}_\Smc$ for $\varphi$. We now show that  $\str{A}_\Smc$ is indeed a model of $\varphi$.
  
  First, observe that $\vec{v} \in P^{ \str{A}_\Smc}$  if and only if there exists some $i$ such that  $P(\vec{t}) \in \alpha_i$ and $\delta_i(\vec{t})=\vec{v}$. Due to the definition of mosaics and the construction of $\Smc$, we have $\vec{v} \in P^{\str{A}_\Smc}$  if and only if  $P(\delta_j^{-1}(\vec{v})) \in \alpha_j$  for \textbf{all} $j$  such that $\vec{v} \subseteq \ran(\delta_j)$.

We note two consequences of our construction:  
\begin{enumerate}[(i)]
	\item Every $\delta_i(\tau_i)$ is an induced substructure of $\str{A}_\Smc$.
	\item \label{props:twotogether} For any two elements $e_1,e_2 \in A_{\Smc}$, there is at least one $(\alpha_i,\delta_i)$ in $\Smc$ with $\{e_1,e_2\} \subseteq \ran(\delta_i)$.
\end{enumerate}

We now show that the predicate $\ur$ is interpreted in the intended
way. Indeed, we obtain
$\ur^{ \str{A}_\Smc} = A_{\Smc} \times A_{\Smc}$ from the fact that
all types in $\Smc$ are $\UU$-biquitous and the fact
(\ref{props:twotogether}) above, i.e., that any two elements of
$A_{\Smc}$ co-occur in some type of $\Smc$.

Next we show $ \str{A}_\Smc \models \varphi$. 

We start with some sentence $\psi=\forall \vec{x}. (B_1(\vec{t}_1)\land \ldots \land B_n(\vec{t}_n)
\Rightarrow H_1(\vec{v}_1)\lor \ldots \lor H_m(\vec{v}_m)) $ coming from $\mathbf{A}(\varphi)$.
W.l.o.g. we assume $B_1(\vec{t}_1)$ to be the guarding atom, i.e. $\vec{x} \subseteq \vec{t}_1$.
Now assume there is an arbitrary $\vec{x}$-assignment $f$ such that $f(\vec{t}_i) \in B_i^{\str{A}_\Smc}$ for $1 \leq i \leq n$.
Let $(\alpha,\delta)$ be the  element in $\Smc$ for which $B_1(\delta^{-1}(f(\vec{t}_1))) \in \alpha$. Due to the construction of $\str{A}_\Smc$, and, in particular, the fact that $\ran(\delta)$ includes $f(\vec{t})$ and $C$, we have that $B_i(\delta^{-1}(f(\vec{t}_i))) \in \alpha$ for $i>1$. Due to Condition~(\ref{itm:local-consistency}), we have $\alpha \models \psi$, therefore
$H_j(\delta^{-1}(f(\vec{v}_j))) \in \alpha$ for some $j$ with $1\leq j\leq m$, which implies $f(\vec{v}_j) \in H_j^{\str{A}_\Smc}$. Hence we have shown $\str{A}_\Smc \models \psi$.

Now consider some sentence $\psi= \forall \vec{x}. (R(\vec{t}) \Rightarrow \exists
\vec{y}. H(\vec{v}))$ coming from $\mathbf{E}(\varphi)$. 
Assume  an arbitrary $\vec{x}$-assignment $g$ such that $g(\vec{t}) \in R^{\str{A}_\Smc}$.
Let $(\alpha_i,\delta_i)$ be the  element in $\Smc$ for which $R(\delta_i^{-1}(g(\vec{t}))) \in\alpha_i$. By construction ($\dag$) of $\Smc$ there must exist some $j>i$ such that  $(\alpha_j,\delta_j)$ is an element in $\Smc$ for which 
$H(\delta_j^{-1}(f(g(\vec{v}))))\in \alpha_j$ for some $\vec{y}$-assignment $f$. Hence we have shown $\str{A}_\Smc \models \psi$.
\end{proof}

Based on the above characterization of satisfiability via mosaics, we can infer worst-case optimal upper bounds for satisfiability checking in $\gftimes$, and thus in $\TGF$. 
\begin{theorem}[Complexity]\label{thm:upper-bound}
  Deciding satisfiability of $\,\TGF$ and of $\,\gftimes$ formulae 
  is \twonexptime-complete.  The problem is \nexptime-complete under
  the assumption that predicate arities are bounded by a constant.
\end{theorem}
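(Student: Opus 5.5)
For the upper bounds I would combine the reductions already in place with the mosaic characterization. Given a $\TGF$ sentence, Proposition~\ref{prop:twoguarded} yields an equivalent $\gftimes$ sentence in polynomial time. Proposition~\ref{prop:normalization} then gives a normal form sentence $\varphi$ of polynomial size with no increase in predicate arity. Finally, Proposition~\ref{prop:SNA} nondeterministically produces $\varphi^\sim$ such that satisfiability reduces to satisfiability under the SNA. By Theorems~\ref{thm:completeness} and~\ref{thm:soundness}, this holds iff a mosaic for $\varphi^\sim$ exists. The algorithm therefore guesses a set $\mathcal{M}$ of $\UU$-biquitous types and verifies conditions (A)--(E).

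The key step is bounding $|\mathcal{M}|$ and the verification cost. By (A), every type in $\mathcal{M}$ has at most $w=\width(\varphi)\le|\varphi|$ variables. Let $k$ be the number of constants and $r$ the maximal arity. Then an $\ell$-type is determined by at most $|\sigma|\cdot(w+k)^{r}$ atoms, which is $2^{O(|\varphi|\log|\varphi|)}$, i.e.\ single exponential. Hence there are at most doubly exponentially many types, and a mosaic is a doubly exponential object. Checking (B), (C) and (E) is clearly polynomial in $|\mathcal{M}|$ times the type size. For (C), $\overline{\alpha}\models\psi$ is evaluated by enumerating substitutions of the at most $|\varphi|$ universal variables into $w+k$ terms, which costs $(w+k)^{|\varphi|}$ and is single exponential. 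Condition (D) likewise enumerates substitutions $g,h$ and candidate $\alpha'\in\mathcal{M}$. Altogether the procedure runs in \twonexptime.

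When $r$ is a constant, each type has only polynomially many atoms, so it has polynomial size and there are only exponentially many types. The mosaic is then exponential, and guessing and checking it gives \nexptime. The normalization preserves arity, which is exactly what makes this bound go through. I expect the main subtlety here to be making sure the checks of (C) and (D) remain exponential rather than worse. This holds because substitutions range over $w+k\le|\varphi|$ terms, giving $|\varphi|^{|\varphi|}$ options.

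Lower bounds. \nexptime-hardness for bounded arity is inherited from $\fot\subseteq\TGF$, which is \nexptime-hard already without equality and with binary predicates. For \twonexptime-hardness in the unbounded-arity case I would reduce from tiling a $2^{2^n}\times 2^{2^n}$ grid, and this is the main obstacle. The idea is to encode grid positions as elements carrying two $2^n$-bit coordinates. Such coordinates cannot be stored in unary predicates, so the plan uses a guarded $n$-ary predicate over $n$ elements together with two constants $0,1$ to address bits. Concretely, atoms $X(e,\bar b)$ with $\bar b\in\{0,1\}^n$ give the $\bar b$-th bit of $e$, so each element carries $2^{2^n}$ possible values, as in the standard \twoexptime-hardness of $\gf$. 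The unguarded two-variable quantification $\forall xy$ available in $\TGF$ lets me compare any two elements. Using it, I would force that elements with equal coordinates have equal tiles, which makes the model behave like a grid. Comparing coordinates bitwise and expressing successor via a guarded formula over the bit addresses uses universally quantified bit vectors guarded by an auxiliary predicate of arity $n+2$. I would also use existential axioms to generate a horizontal and a vertical successor for every element. The delicate part is enforcing successor relations between $2^n$-bit counters using only guarded quantification over bit indices combined with $\forall xy$; I would first try the classical carry-bit characterization over addresses.
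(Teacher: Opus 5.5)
Your upper bounds and the bounded-arity lower bound follow the paper's proof. You reduce via Propositions~\ref{prop:twoguarded}, \ref{prop:normalization} and~\ref{prop:SNA}, then guess and check a mosaic whose types have at most $|\sigma|(w+k)^r$ atoms, and you inherit \nexptime-hardness from \fot. Your extra accounting for the cost of checking (C) and (D) is a harmless refinement of the paper's ``checkable in polynomial time in $|\mathcal{M}|$''.

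For \twonexptime-hardness you take a genuinely different route, and it is viable.

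The paper never creates successor elements. It first generates every grid position as a leaf of a binary tree of exponential depth, built with $\mathit{ToSelect}$, $\mathit{Next}$ and $\mathit{Sel}_b$, with bit addresses written as tuples of the constants $0,1$. Then, for every pair of leaves, it starts from $\ur(x,y)$ a chain $\mathit{ChkH}\to\mathit{ChkH}'\to\mathit{ChkH}''$ over the bit addresses. This chain \emph{derives} $H(x,y)$ exactly for consecutive coordinates, and $V$ is handled the same way. All chain predicates occur only positively in rule heads, so over-approximating them in a model merely adds $H$/$V$ edges, i.e.\ more tiling constraints. Soundness is therefore automatic, and no ``equal coordinates imply equal tiles'' axiom is needed.

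You instead compose the two classical hardness proofs. From \GF{} you take existential creation of $H$/$V$-successors, whose $2^n$-bit counters are checked by a guarded carry chain along $H(x,y)$. From \fot{} you take the gluing step, in which $\ur$ forces equal tiles on elements with equal coordinates. This makes the role of $\ur$ very transparent: it is exactly the \fot{} ingredient that plain \GF{} with constants lacks. The cost is two families of bitwise chains (increment along $H$/$V$, equality along $\ur$) instead of one.

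Three points that you only allude to must be made explicit:
\begin{enumerate}
\item The $(n+2)$-ary auxiliary guards have to be forced onto all $2^n$ constant address tuples. One way is to start at $0^n$ from $H(x,y)$, respectively $\ur(x,y)$, and propagate with the $n$ increment rules $(\bar z,0,1^i)\mapsto(\bar z,1,0^i)$, as the paper does for $\mathit{ChkH}$. Otherwise your universally quantified bitwise checks are vacuous.
\item The carry bit must be pinned down exactly by a guarded biconditional, not merely propagated upward. In your construction, unlike the paper's monotone one, an over-approximated carry would corrupt the successor's coordinate.
\item Successor creation must be suppressed at the maximal coordinate, or a torus variant of the tiling problem used.
\end{enumerate}
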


\begin{proof}
Due to Propositions~\ref{prop:twoguarded} and
\ref{prop:normalization}, it suffices to show the two upper bounds for
$\gftimes$ formulae in normal form. Due to Proposition~\ref{prop:SNA},
we can focus on the satisfiability of $\gftimes$ formulae in normal
form under the SNA.  Due to Theorems \ref{thm:completeness} and
\ref{thm:soundness}, we can decide the satisfiability of a formula
$\varphi\in \gftimes$ in normal form under the SNA by checking the
existence of a mosaic $\mathcal{M}$ for $\varphi$. Our approach is to
non-deterministically guess a set $\mathcal{M}$ containing $\ell$-types over
the the signature of $\varphi$ with $\ell \leq \width(\varphi)$, and
then verify that $\mathcal{M}$ is indeed a mosaic for $\varphi$. It is easy to see that, given a candidate $\mathcal{M}$ as input, we can check in
polynomial time whether $\mathcal{M}$ satisfies all the conditions
given in Definition~\ref{def:mosaic}. Note that, if $C$ is the set of
constants in $\varphi$ and $\ell \leq \width(\varphi)$, then the
number of ground atoms over the signature of $\varphi$ with arguments
from $T=C\cup \{x_1,\ldots,x_\ell\}$ is bounded by $m \cdot |T|^k$,
where $m$ is the number of predicates in $\varphi$ and $k$ is the
maximal arity of predicates in $\varphi$. Consequently, we can
restrict ourselves to candidates $\mathcal{M}$ that have no more than
$2^{m \cdot |T|^k}$ types. Since this bound is double exponential in
the size of $\varphi$, but only single exponential under the
assumption that $k$ is a constant, the two upper bounds follow.

  The matching lower bound for the bounded arity case follows from the
  complexity of $\fot$
  \cite{DBLP:journals/bsl/GradelKV97}. For the case of unbounded arity, we establish the lower bound directly hereafter.
\end{proof}  

We dedicate the remainder of the section to the proof of the lower complexity bound in the case of unbounded arities. 
For the case where the predicate arities are not bounded, we will show \twonexptime-hardness 
via a polynomial reduction from the tiling problem of a grid of doubly exponential
size \cite{BGG}.

Let $k$ be a natural number. We will construct a $\gfu$ sentence $\varphi_k$ that has polynomial length wrt. $k$ and describes a tiling of a $2^{(2^k)} \times 2^{(2^k)}$ grid. The sentence will enforce the presence of domain elements corresponding to elements of this grid. To identify the position of each of those grid elements, we have to assign them x- and y-coordinates between $0$ and $ 2^{(2^k)}-1$. We will express these coordinates in binary encoding, i.e., as $2^k$-dimensional bitvectors. In order to express that the $\ell$th position in the bitvector corresponding to the x-coordinate of some grid element $e$ carries a $0$, we let
$\mathit{Sel}_0(e,0,\ell_\mathrm{binary})$ hold, where $\ell_\mathrm{binary}$ is a list of length $k$ containing $0$s and $1$s, expressing the binary encoding of $\ell$. That is, the arity of $\mathit{Sel}_0$ is $k+2$. In order to express that the $\ell$th bit is $1$, we use $\mathit{Sel}_1(e,0,\ell_\mathrm{binary})$. To express the corresponding information for the y-coordinate, we use $\mathit{Sel}_0(e,1,\ell_\mathrm{binary})$ and $\mathit{Sel}_1(e,1,\ell_\mathrm{binary})$, respectively.

As can be seen in the description above, the GFU sentence we create makes use of two distinguished constants, $0$ and $1$.  
In the following, we omit leading universal quantifiers; all formulae are sentences.
The final GFU sentence $\varphi_k$ is then obtained as the conjunction of all the sentences provided below. 

First, we will make sure that for every pair of x- and y-coordinates in the considered range, a corresponding grid element exists. This is achieved by creating a binary tree structure of exponential depth, where at the $\ell$th level two $\mathit{Next}$-children are created: one where the $\ell$th bit is set to $0$ and one where it is set to $1$. All the previously set bits are propagated toward the leaves of the tree. These leaf elements then are endowed with full bit representation of x- and y-coordinates, and hence they will serve as representatives of the grid elements.

We create the tree root.
{\small
	\begin{equation}
		\exists x. \mathit{ToSelect}(x,0^{k+1})
	\end{equation}
}%
At the $\ell$th level ($\vec{z}=\ell_\mathrm{binary}$), two child nodes are created with the $\ell$th bit set to $0$ and $1$, respectively. For guardedness reasons, the $\mathit{Next}$ predicate -- whose purpose is to link parent with child nodes in the tree -- is of arity $k+3$, although only the two leading positions really matter. 
{\small
	\begin{eqnarray}
		\mathit{ToSelect}(x,\vec{z}) & \Rightarrow & \exists y. \mathit{Next}(x,y,\vec{z})
\wedge \mathit{Sel}(y,\vec{z}) \wedge \mathit{Sel}_0(y,\vec{z})\\ 
		\mathit{ToSelect}(x,\vec{z}) & \Rightarrow & \exists y. \mathit{Next}(x,y,\vec{z})
\wedge \mathit{Sel}(y,\vec{z}) \wedge \mathit{Sel}_1(y,\vec{z}) 
	\end{eqnarray}
}%

At the $\ell{+}1$st level (where the $\ell$th bit has just been selected), we indicate that in the next step, the $\ell{+}1$st bit is to be selected ($0\leq i \leq k-1$).
{\small
	\begin{equation}
		\mathit{Sel}(x,\vec{z},0,1^i) \Rightarrow \mathit{ToSelect}(x,\vec{z},1,0^i)
	\end{equation}
}%
The next two rules create versions of the $\mathit{Next}$ predicate which carry all possible $(k+1)$ary bitvectors as additional parameters. Again, this is required for guardedness reasons. 
{\small
	\begin{eqnarray}
		\mathit{Next}(x,y,\vec{z}) & \Rightarrow & \mathit{Next}(x,y,0^{k+1})\\ 
		\mathit{Next}(x,y,\vec{z},0,1^i) & \Rightarrow & \mathit{Next}(x,y,\vec{z},1,0^i) 
	\end{eqnarray}
}%
We propagate earlier choices made for the bits along the $\mathit{Next}$ predicate, making use of the auxiliary positions, to keep everything guarded.
{\small
	\begin{equation}
		\mathit{Sel}_b(x,\vec{z}) \wedge  \mathit{Next}(x,y,\vec{z})
		\Rightarrow \mathit{Sel}_b(y,\vec{z})
	\end{equation}
}%
Once the last bit is set, we indicate that we have reached a leaf.
{\small
	\begin{equation}
		\mathit{Sel}(x,1^{k+1}) \Rightarrow \mathit{Complete}(x)
	\end{equation}
}%
This finishes the creation (and ``coordinatization'') of the grid elements. In the next step, we enforce that any two grid elements with the same y-coordinate and subsequent x-coordinates are connected via a $H$ predicate ($b \in \{0,1\}$ and $0\leq i \leq k-1$). 
{\small
	\begin{eqnarray}
		\ur(x,y)\wedge\mathit{Complete}(x) \wedge \mathit{Complete}(y) & \Rightarrow & \mathit{ChkH}(x,y,0,0^k)
		\label{h22}
		\\
		\mathit{ChkH}(x,y,0,\vec{z},0,1^i) 
		\wedge \mathit{Sel}_1(x,0,\vec{z},0,1^i)
		\wedge \mathit{Sel}_0(y,0,\vec{z},0,1^i)
		&\Rightarrow&
		\mathit{ChkH}(x,y,0,\vec{z},1,0^i)
		\label{h23}
		\\
		\mathit{ChkH}(x,y,0,\vec{z},0,1^i) 
		\wedge \mathit{Sel}_0(x,0,\vec{z},0,1^i)
		\wedge \mathit{Sel}_1(y,0,\vec{z},0,1^i)
		&\Rightarrow&
		\mathit{ChkH}'(x,y,0,\vec{z},1,0^i)
		\label{h24}
		\\
		\mathit{ChkH}'(x,y,0,\vec{z},0,1^i) 
		\wedge \mathit{Sel}_b(x,0,\vec{z},0,1^i)
		\wedge \mathit{Sel}_b(y,0,\vec{z},0,1^i)
		&\Rightarrow&
		\mathit{ChkH}'(x,y,0,\vec{z},1,0^i)
		\label{h25}
		\\
		\mathit{ChkH}'(x,y,0,1^k) 
		\wedge \mathit{Sel}_b(x,0,1^k)
		\wedge \mathit{Sel}_b(y,0,1^k)
		&\Rightarrow&
		\mathit{ChkH}''(x,y,1,0^k)
		\label{h26}
		\\
		\mathit{ChkH}''(x,y,1,\vec{z},0,1^i) 
		\wedge \mathit{Sel}_b(x,1,\vec{z},0,1^i)
		\wedge \mathit{Sel}_b(y,1,\vec{z},0,1^i)
		&\Rightarrow&
		\mathit{ChkH}''(x,y,1,\vec{z},1,0^i)
		\label{h27}
		\\
		\mathit{ChkH}''(x,y,1,1^k) 
		\wedge \mathit{Sel}_b(x,1,1^k)
		\wedge \mathit{Sel}_b(y,1,1^k)
		&\Rightarrow& H(x,y)
		\label{h28}
	\end{eqnarray}
}%
Thereby, the atom $\mathit{ChkH}(x,y,0,\ell_\mathrm{binary})$ is supposed to hold for all grid elements $x$ and $y$ for which the lowest $\ell$ bits of their x-coordinate
have the shape $1^\ell$ and $0^\ell$, respectively. 
Moreover, the atom $\mathit{ChkH}'(x,y,0,\ell_\mathrm{binary})$ is supposed to hold for all grid elements $x$ and $y$ for which the lowest $\ell$ bits of $x$ and $y$ represent consecutive binary numbers. Finally, $\mathit{ChkH}''(x,y,1,\ell_\mathrm{binary})$ is supposed to hold for any $x$ and $y$
with consecutive x-coordinates and coinciding lowest $\ell$ bits of the y-coordinate.
Consequently, $H(x,y)$ follows for every $x$ and $y$ with consecutive x-coordinates and coinciding y-coordinates.

In the same way, we make sure that any two grid elements with the subsequent y-coordinates and equal x-coordinates are connected via a $V$ predicate ($b \in \{0,1\}$ and $0\leq i \leq k-1$). 
{\small
	\begin{eqnarray}
		\ur(x,y)\wedge\mathit{Complete}(x) \wedge \mathit{Complete}(y) 
		&\Rightarrow& \mathit{ChkV}(x,y,1,0^k)
		\\
		\mathit{ChkV}(x,y,1,\vec{z},0,1^i) 
		\wedge \mathit{Sel}_1(x,1,\vec{z},0,1^i)
		\wedge \mathit{Sel}_0(y,1,\vec{z},0,1^i)
		&\Rightarrow&
		\mathit{ChkV}(x,y,1,\vec{z},1,0^i)
		\\
		\mathit{ChkV}(x,y,1,\vec{z},0,1^i) 
		\wedge \mathit{Sel}_0(x,1,\vec{z},0,1^i)
		\wedge \mathit{Sel}_1(y,1,\vec{z},0,1^i)
		&\Rightarrow&
		\mathit{ChkV}'(x,y,1,\vec{z},1,0^i)
		\\
		\mathit{ChkV}'(x,y,1,\vec{z},0,1^i) 
		\wedge \mathit{Sel}_b(x,1,\vec{z},0,1^i)
		\wedge \mathit{Sel}_b(y,1,\vec{z},0,1^i)
		&\Rightarrow&
		\mathit{ChkV}'(x,y,1,\vec{z},1,0^i)
		\\
		\mathit{ChkV}'(x,y,1,1^k) 
		\wedge \mathit{Sel}_b(x,1,1^k)
		\wedge \mathit{Sel}_b(y,1,1^k)
		&\Rightarrow&
		\mathit{ChkV}''(x,y,0,0^k)
		\\
		\mathit{ChkV}''(x,y,0,\vec{z},0,1^i) 
		\wedge \mathit{Sel}_b(x,0,\vec{z},0,1^i)
		\wedge \mathit{Sel}_b(y,0,\vec{z},0,1^i)
		&\Rightarrow&
		\mathit{ChkV}''(x,y,0,\vec{z},1,0^i)
		\\
		\mathit{ChkV}''(x,y,0,1^k) 
		\wedge \mathit{Sel}_b(x,0,1^k)
		\wedge \mathit{Sel}_b(y,0,1^k)
		&\Rightarrow& V(x,y)
	\end{eqnarray}
}%
This way, we have established a doubly exponential grid where $V$ indicates vertical connections and $H$ horizontal connections between grid elements. Encoding a tiling on top of such a grid is standard.

\section{Data Complexity}\label{sec:datacomplexity}

We now consider the so-called \emph{data complexity} of \TGF~and
\GFU. Assume any formula $\varphi$ in \TGF~or \GFU. We consider the
problem of checking whether the formula $\varphi\land \gamma$ is
satisfiable, where $\gamma$ is a conjunction of ground literals (that is, negated or unnegated atoms ) given
as input. Technically speaking, we consider a family of decision
problems (one for every formula $\varphi$), where one instance of the
problem is a conjunction $\gamma$ of ground literals. It is not difficult to see
that there exists some $\varphi$ such that the above problem is
\np-hard. For instance, consider the formula $\varphi_{\mathrm{3COL}}$
defined as follows:
\begin{align*}
  \varphi_{\mathrm{3COL}} = & (\forall x. G(x)\lor R(x) \lor B(x)) \land  \\
                            & (\forall x,y. G(x)\land E(x,y)\Rightarrow \neg G(y)) \\
                            & (\forall x,y. R(x)\land E(x,y)\Rightarrow \neg R(y)) \\
                            & (\forall x,y. B(x)\land E(x,y)\Rightarrow \neg B(y)) 
\end{align*}
It is immediate that checking the satisfiability of
$\varphi_{\mathrm{3COL}}\land \gamma$, where $\gamma$ is an input
conjunction of ground atoms, is an \np-hard problem. Indeed, this
problem allows us to decide 3-colorability of a given graph $G$ that
can be easily converted in polynomial time into a conjunction $\gamma$
using the binary predicate $E$ to represent the edges of $G$. In
common parlance, this shows the \np-hardness of data complexity of
satisfiability \TGF~and \GFU. Our next goal is to show that all of the
above problems belong to \np, and thus to establish \np-completeness
of data complexity. We note that the \np-hardness proof above is mainly
for illustration purposes; the result also directly follows from the data
complexity of satisfiability in expressive Description
Logics~\cite{Scha94}.

The key to the \np membership result is the following notion of
\emph{compatibility} between a structure and a mosaic.

\begin{definition}\label{def:compatibility}
  Assume a signature $\sigma$, a structure $\str{A}$ over $\sigma$,
  and a mosaic $\mathcal{M}$ for a formula $\varphi$ over $\sigma$. We
  say $\str{A}$ and $\mathcal{M}$ are \emph{compatible}, if
  $\{\type{\str{A}}{e_1,\ldots,e_k}\mid  \{e_1,\ldots,e_k\} \subseteq \check{A},\  |\{e_1,\ldots,e_k\}| = k ,\  0 \leq k\leq
  \width(\varphi) \}\subseteq \mathcal{M}$.
\end{definition}

\begin{theorem}\label{thm:compatibility}
Let $\varphi$ be a GFU formula and let $\str{A}$ be a structure, both over the signature $\sigma$.
Then $\varphi$ has a model $\str{B}$ with $\str{B}\restr A=\str{A}$ adhering to the SNA 
iff there is a mosaic $\mathcal{M}$ for $\varphi$ that is 
  compatible with $\str{A}$.
\end{theorem}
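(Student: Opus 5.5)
The plan is to prove both directions by reusing the constructions behind Theorems~\ref{thm:completeness} and~\ref{thm:soundness}. The seed of the soundness construction will now be the structure $\str{A}$ instead of a single type. Throughout, $\varphi$ is assumed to be in normal form, which is harmless by Proposition~\ref{prop:normalization}, and we work under the SNA.

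For the direction ($\Rightarrow$), let $\str{B}\models\varphi$ with $\str{B}\restr A=\str{A}$. Let $\str{B}'$ be the doubling of $\str{B}$; by Lemma~\ref{l:doubling} it is again a model of $\varphi$. Exactly as in the proof of Theorem~\ref{thm:completeness}, the set $\mathcal{M}$ of all types of distinct unnamed tuples of length at most $\width(\varphi)$ in $\str{B}'$ is a mosaic. Compatibility then follows because each unnamed element $e$ of $A$ is represented in $\str{B}'$ by $(e,0)$. The map $e\mapsto(e,0)$, together with $c\mapsto(c,0)$ on constants, is an isomorphic embedding of $\str{A}$ into $\str{B}'$. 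Hence every type $\type{\str{A}}{\bar e}$ equals $\type{\str{B}'}{(\bar e,0)}$, which lies in $\mathcal{M}$.

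For the direction ($\Leftarrow$), we rerun the soundness construction but start from $\str{A}$. Take the pool $D\supseteq A$, with $A\setminus C$ and the fresh elements disjoint. Instead of the single seed $(\alpha_0,\delta_0)$, the initial segment of $\Smc$ lists one pair $(\alpha,\delta)$ for every set of at most $\width(\varphi)$ distinct unnamed elements $\bar e$ of $A$, where $\alpha=\type{\str{A}}{\bar e}\in\mathcal{M}$ and $\delta$ maps the variables to $\bar e$. If $A$ has no unnamed elements, we add one arbitrary type, as before. The structure $\str{A}^{j}_\Smc$ on this initial segment is then exactly $\str{A}$. Every atom of $\str{A}$ over unnamed and named elements appears in some listed type, provided every atom's unnamed arguments number at most $\width(\varphi)$. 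If arities exceed the width, I would simply take all types of tuples up to the maximal arity, or enlarge the width bound in Definition~\ref{def:compatibility} accordingly. Ground atoms of $\str{A}$ agree with $\mathcal{M}$ by consistency, condition~(B). Any two elements of $A$ already co-occur in some listed type, since $\width(\varphi)\ge 2$ may be assumed by adding a dummy $\mathbf{E}$-conjunct if needed. Repair steps~($\dagger$) and~($\ddagger$) then proceed exactly as before. They only introduce fresh elements and new types linking fresh elements to old ones, so they never add tuples consisting solely of old elements.

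The main obstacle is verifying that $\str{B}\restr A=\str{A}$ still holds, and that the $\mathbf{A}$-conjunct argument survives. That argument used the property that if $\bar v\in P^{\str{A}_\Smc}$, then $P$ holds in every type containing $\bar v$. This coherence must be checked for overlaps between seed types and newly attached types. The new types are attached via overlap (condition~(D)) or agreement on $1$-types (condition~(E)), and these control exactly the shared sub-tuples. The seed types are mutually coherent because they all come from the single structure $\str{A}$. Consequently every tuple inside $A$ receives its atoms only from seed types, so the restriction to $A$ is $\str{A}$, and the rest of the proof of Theorem~\ref{thm:soundness}, including $\UU$-biquity, goes through verbatim.
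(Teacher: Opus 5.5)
Your route is the paper's. The ``only if'' direction extracts the mosaic from the doubling of $\str{B}$ as in Theorem~\ref{thm:completeness}, and your embedding $e\mapsto(e,0)$ gives compatibility. The ``if'' direction seeds the construction of Theorem~\ref{thm:soundness} with $\str{A}$ instead of a single type. You are more explicit than the paper's two-sentence sketch.

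\textbf{The arity caveat.} The caveat you raise is a real defect of the statement, not only of your write-up. Take $\mathbf{A}(\varphi)=\{\forall xyz\,(P(x,y,z)\Rightarrow\bot)\}$ and $\mathbf{E}(\varphi)=\{\forall x\,(\UU(x,x)\Rightarrow\exists y.\,\UU(x,y))\}$, with no constants and $\width(\varphi)=2$. Let $\str{A}$ have unnamed elements $a,b,c$, $\UU$ total and $P^{\str{A}}=\{(a,b,c)\}$. The types of $\str{A}$ of length at most $2$ contain no positive $P$-atom, and they form a mosaic compatible with $\str{A}$. Yet no model of $\varphi$ extends $\str{A}$. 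Neither of your repairs works as phrased:
\begin{itemize}
\item Seeding with longer types of $\str{A}$ is unjustified, because compatibility says nothing about them; here the $3$-type of $(a,b,c)$ violates (C).
\item Raising only the bound in Definition~\ref{def:compatibility} clashes with condition (A) of Definition~\ref{def:mosaic}, and then the ``only if'' direction fails.
\end{itemize}
The clean fix is to assume w.l.o.g.\ that $\width(\varphi)$ is at least $2$ and at least the maximal arity. You can arrange this by adding the trivial $\mathbf{E}$-conjuncts $\forall\bar x\,(P(\bar x)\Rightarrow P(\bar x))$ with empty $\bar y$. This changes neither the models nor the complexity analysis.

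\textbf{Coherence in step ($\ddagger$).} Your claim that (D) and (E) ``control exactly the shared sub-tuples'' is false for step ($\ddagger$) as the paper writes it. Its enumeration includes pairs $(d,d')$ with $d\in N$ and $d'$ one of the carried-over elements $\delta_i(x_1),\ldots,\delta_i(x_\ell)$. Then $\{d,d'\}\subseteq\ran(\delta_i)$, and (E) only matches the two $1$-types. It does not match the $2$-type that $\alpha_i$ already fixes on $\{d,d'\}$. Two types that each satisfy $\forall xy\,(R(x,y)\wedge S(y,x)\Rightarrow\bot)$ in isolation can jointly violate it in $\str{A}_\Smc$. So the coherence invariant that the $\mathbf{A}$-conjunct argument rests on breaks. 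The fix is to restrict the enumeration to $d'\notin\ran(\delta_i)$. The skipped pairs are already $\UU$-connected, since $\alpha_i$ is $\UU$-biquitous. This glitch is inherited from the paper's soundness proof, and it does not affect your argument for $\str{B}\restr A=\str{A}$.

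\textbf{Infinite $\str{A}$.} If $\str{A}$ is infinite, the seed cannot be an initial segment of an $\omega$-sequence. Treat the seed as stage $0$ and schedule the repairs fairly. In the data-complexity application $\str{A}$ is finite anyway.
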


\begin{proof}
  While the ``only if'' part is immediate from the proof of
  Theorem~\ref{thm:completeness}, the ``if'' part follows from a small
  adaptation of the model construction in the proof of
  Theorem~\ref{thm:soundness}. Specifically, instead of starting with
  an instantiation of an arbitrary initial type $\alpha_0$ and
  expanding it to a model of the the formula, we start with $\str{A}$
  and using a sequence of type instantiations we expand $\str{A}$ to a
  model of $\varphi$.
\end{proof}

\begin{theorem}
  Assume a formula $\varphi$ in \TGF~or \GFU.  There is a
  non-deterministic polynomial time procedure to decide, given as
  input a conjunction $\gamma$ of ground literals, whether
  $\varphi\land \gamma$ has a model adhering to the SNA.
\end{theorem}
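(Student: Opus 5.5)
The plan is to reduce the question to the mosaic compatibility criterion of Theorem~\ref{thm:compatibility}. The key point is that the constants of $\gamma$ are not treated as constants of the signature. They are treated as ordinary (unnamed) domain elements that happen to be given. Since $\varphi$ is fixed, we first convert it once and for all into normal form via Propositions~\ref{prop:twoguarded} and~\ref{prop:normalization}. This does not depend on $\gamma$, so the resulting signature $\sigma_\varphi$, its constants $C_\varphi$, the maximal arity and $\width(\varphi)$ are all constants. Predicates occurring in $\gamma$ but not in $\varphi$ are harmless: $\varphi$ does not constrain them, so it suffices to check that $\gamma$ does not contain both a literal and its negation over such predicates. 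Let $C_\gamma$ be the constants of $\gamma$ that are not in $C_\varphi$. Under the SNA these constants denote pairwise distinct elements, distinct also from $C_\varphi$.

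The NP procedure runs in three steps. First, it guesses a $\sigma_\varphi$-structure $\str{A}$ with domain $C_\varphi\cup C_\gamma$, where $C_\varphi$ is the named part and $C_\gamma$ is regarded as unnamed. Because the arity is bounded and $\sigma_\varphi$ is fixed, $\str{A}$ has polynomially many atoms in $|\gamma|$. The procedure checks in polynomial time that $\str{A}$ satisfies all literals of $\gamma$ over $\sigma_\varphi$. Second, it guesses a mosaic $\mathcal{M}$ for $\varphi$ over $\sigma_\varphi$ and verifies conditions (A)--(E) of Definition~\ref{def:mosaic}. Here $\mathcal{M}$ has constant size, since it consists of $\ell$-types with $\ell\le\width(\varphi)$ over a fixed signature with fixed constants. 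Third, it checks compatibility in the sense of Definition~\ref{def:compatibility}. This amounts to going through all tuples of at most $\width(\varphi)$ distinct elements of $C_\gamma$, which gives $O(|\gamma|^{\width(\varphi)})$ tuples, and checking that each realized type lies in $\mathcal{M}$. All three steps are polynomial, so the procedure is in NP.

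For correctness in the ``if'' direction, Theorem~\ref{thm:compatibility} yields a model $\str{B}\models\varphi$ under the SNA with $\str{B}\restr A=\str{A}$. We then interpret each $c\in C_\gamma$ by the element $c$, and the predicates outside $\sigma_\varphi$ exactly as $\gamma$ dictates. Since $\str{B}$ agrees with $\str{A}$ on $A$ and $\str{A}$ satisfies $\gamma$, this gives a model of $\varphi\wedge\gamma$. For the ``only if'' direction, take a model of $\varphi\wedge\gamma$ adhering to the SNA, forget the interpretation of $C_\gamma$ as constants, and let $\str{A}$ be its restriction to $C_\varphi\cup C_\gamma$. Theorem~\ref{thm:compatibility} then provides a compatible mosaic. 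As a sanity check, this agrees with the completeness argument: the doubling used there keeps the types of tuples of $(a,0)$-copies equal to those of the originals, so the extracted mosaic contains the types realized by tuples in $\str{A}$.

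The step I expect to be the main obstacle is justifying this change of viewpoint, namely that demoting the $\gamma$-constants to unnamed elements preserves both directions. In particular, Theorem~\ref{thm:compatibility} is applied over $\sigma_\varphi$ and not over the signature enlarged by $C_\gamma$. If we kept $C_\gamma$ as named constants, every type would have to describe all atoms over the constants, the types would be polynomially large, and the mosaic would become exponentially large. It therefore has to be checked that $\varphi$ never mentions $C_\gamma$, so truth of $\varphi$ is unaffected. It must also be checked that the SNA requirement (distinctness of the $\gamma$-constants) is exactly captured by their being distinct unnamed elements of $\str{A}$.
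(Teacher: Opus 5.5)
Your proposal is correct and follows the paper's proof: the paper likewise treats the constants of $\gamma$ outside $\varphi$'s signature as unnamed elements, nondeterministically completes $\gamma$ into a structure $\str{A}$ whose named part is the constant set of $\varphi$ (forcing $\UU$ to be total), and accepts iff some mosaic from the constant-size set $\mathbb{M}$ is compatible with $\str{A}$, with correctness resting on Theorem~\ref{thm:compatibility}. The differences are cosmetic: you guess the mosaic instead of enumerating $\mathbb{M}$, and you explicitly handle predicates of $\gamma$ that do not occur in $\varphi$, which the paper excludes by assuming $\gamma$ uses only $\varphi$'s predicates.
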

\begin{proof}
  We can w.l.o.g.\,assume that $\varphi$ is a \GFU~formula in normal
  form. Let $C$ denote the set of constants from $\varphi$'s signature $\sigma$. 
  Here we
  see $\gamma$ as a set of ground literals over the signature $\sigma \cup C'$, where $C'$ denotes the set of constants used in $\gamma$. 
  We point out that $C'$ is not necessarily a subset of $C$. 
  Take the set
  $\mathbb{M}$ of all mosaics for $\varphi$. Note that each mosaic is
  of at most double exponential size in the size of $\varphi$, and
  thus $\mathbb{M}$ is at most triply exponential in the size of
  $\varphi$. Since $\varphi$ is fixed (not part of the input), it only
  matters to us that $\mathbb{M}$ can be effectively constructed:
  since the size of $\varphi$ is bounded by a constant, the size of
  $\mathbb{M}$ is also bounded by a constant.

  The procedure to check the satisfiability of $\varphi\land \gamma$
  is as follows:
  \begin{enumerate}
  \item Non-deterministically complete $\gamma$ so that (exclusively)
    either $p(\vec{c})$ or $\neg p(\vec{c})$ belongs to $\gamma$, for
    all predicates $p$ of $\gamma$ and all tuples $\vec{c} \in (C \cup C')^\ell$
    where $\ell$ is the arity of $p$. Moreover, $\UU(c,c')\in \gamma$ is required  for all $c,c'\in C \cup C'$. 
  \item Convert $\gamma$ into a structure $\str{A}$ with
    $A = C \cup C'$,  $\hat{A}=C$, $\check{A} = C' \setminus C$,
    and $c^\str{A}=c$ for all $c \in \hat{A}$. For all predicates $p$
    of $\gamma$, we set
    $p^{\str{A}}=\{\vec{c}\mid p(\vec{c})\in \gamma\}$.
  \item Check if there exists an $\mathcal{M}\in \mathbb{M}$ such that
    $\mathcal{M}$ is compatible with $\str{A}$. If such an
    $\mathcal{M}$ exists, then return ``yes''.
  \end{enumerate}
  Based on Theorem~\ref{thm:compatibility}, it is immediate that
  $\varphi\land \gamma$ is satisfiable iff there exists a run of the
  above procedure that returns ``yes''. The above procedure runs in
  nondeterministic polynomial time, as, after the nondeterministic
  guess in the first step, the other two steps above require only
  polynomial time in the size of $\gamma$.
\end{proof}

It is easy to see that the above result can be combined with Proposition~\ref{prop:SNA} in order to obtain a nondeterministic polynomial time procedure for checking satisfiability in the case where the SNA is not imposed.

\begin{corollary}
  Satisfiability in \TGF~and \GFU~is \np-complete in data complexity, with and without the SNA.
\end{corollary}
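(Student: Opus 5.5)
The corollary has two halves: \np-hardness and membership in \np, each to be shown both with and without the SNA.

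For hardness, I would use the fixed formula $\varphi_{\mathrm{3COL}}$ from the beginning of this section. It is a $\FOt$ sentence, hence in $\TGF$, and after Proposition~\ref{prop:twoguarded} also in $\gftimes$. The reduction from graph 3-colorability encodes the edges of a graph as ground $E$-atoms in $\gamma$, and it works regardless of the SNA. Under the SNA, every constant of $\gamma$ simply denotes a distinct element. Without the SNA, constants may coincide, but then a model still induces a proper 3-coloring of the graph: identifying two vertices only adds constraints. Conversely, a coloring yields a model with distinct elements. So the hardness side needs no further work.

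For membership under the SNA, I would invoke the preceding theorem directly. Since $\varphi$ is fixed, its translation into a $\gftimes$ formula in normal form is also fixed. By Propositions~\ref{prop:twoguarded} and~\ref{prop:normalization} this translation preserves satisfiability of $\varphi\land\gamma$, because $\gamma$ uses only original predicates and the normal form is a conservative extension. The nondeterministic polynomial-time procedure then applies.

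For membership without the SNA, I would first guess an equivalence relation $\sim$ on the constants $C\cup C'$ occurring in $\varphi\land\gamma$. Then I would replace each constant by a representative of its class, obtaining $\varphi^\sim\land\gamma^\sim$, exactly as in Proposition~\ref{prop:SNA}, and run the SNA procedure on the result. One subtlety has to be addressed. In Proposition~\ref{prop:SNA} the renamed formula $\varphi^\sim$ depends on the guess, so it is not literally fixed. However, only the finitely many equivalence relations on the constants $C$ of $\varphi$ matter for $\varphi^\sim$, so there are constantly many formulae $\varphi^\sim$, and the procedure can have a hard-wired mosaic set $\mathbb{M}$ for each of them.

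The renaming of the data constants $C'\setminus C$ only affects $\gamma$. It can be done in polynomial time, and it preserves the correspondence of models by the argument of Proposition~\ref{prop:SNA}. This is the step I expect to need the most care: one has to make sure that merging a data constant with a constant of $\varphi$ is handled correctly, by renaming it to that constant of $C$. After this, everything remains polynomial in $|\gamma|$, and the corollary follows.
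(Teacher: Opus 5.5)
Your proposal is correct and takes essentially the same route as the paper. Hardness comes from the fixed sentence $\varphi_{\mathrm{3COL}}$, membership under the SNA from the preceding nondeterministic polynomial-time procedure, and the case without the SNA from combining that procedure with the guessing step of Proposition~\ref{prop:SNA}, where your observation that $\varphi^\sim$ ranges over only constantly many formulae (with merged data constants renamed into $C$) spells out what the paper dismisses as ``easy to see''.
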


\section{Expressing TGF Queries in Disjunctive Datalog}\label{sec:disjdatalog}

In recent years, the approach of \emph{ontology-mediated querying} (OMQ) has gained momentum in database theory.
It is based on the idea that a \emph{database} $D$, usually represented as a finite set of ground facts, is supposed to be queried for information, taking into account background information expressed by a logical theory or formula $\varphi$. The most basic query form are \emph{atomic queries}, where one predicate, say $q$, from $\varphi$'s signature is picked. Assume $q$ is of arity $k$. 
The \emph{answer} to the combined query
$(\varphi,q)$ over a database $D$ is then the set of all $k$-ary tuples
$\vec{c}$ of constants that appear in $D$ such that
$\varphi\land D \models q(\vec{c})$. 
We let $ans(\varphi,q,D)$ denote the answer to $(\varphi,q)$ over $D$. 

Naturally, we are interested in the case of \emph{atomic TGF (resp.,\,GFU) queries}, where $\varphi$ is a formula in TGF (resp.\,GFU). As an aside, we point out that this
query language is quite expressive: e.g., it includes atomic OMQs
based on the description logic $\mathcal{ALCHI}$ studied by Bienvenu et
al.~\cite{BCL14}.

Here, we want to connect our query language to \emph{disjunctive Datalog}, short DDlog,
which is a well-known query language in the area of (deductive)
databases. We choose to present a definition where the formalism is introduced as another type of ontology-mediated queries. A \emph{(DDlog) program} $P$ is a finite set of sentences  of the form:
\begin{equation}
  \label{eq:datalog-rule}
  \forall \vec{x}. B_1(\vec{u}_1)\land \ldots \land B_n(\vec{u}_n) \Rightarrow H_1(\vec{v}_1)\lor \ldots \lor H_m(\vec{v}_m)
\end{equation}
such that each variable that appears in some
$\vec{v}_1, \ldots , \vec{v}_m$ also appears in some
$\vec{u}_1, \ldots,\vec{u}_n$ (this is the so-called \emph{safety
  restriction}).
Since we only have universally quantified variables in Datalog, we will drop ``$\forall \vec{x}.$'' from rules of the form \eqref{eq:datalog-rule}. A rule of the form $\top \Rightarrow H_1(\vec{v}_1)\lor \ldots \lor H_m(\vec{v}_m) $ is a (disjunctive) \emph{fact}, while a rule of form $ B_1(\vec{u}_1)\land \ldots \land B_n(\vec{u}_n) \Rightarrow\bot$ is called a \emph{constraint}. 

Furthermore, we also consider a version of this language that allows
for the built-in inequality predicate $\neq$, denoted DDlog$^\neq$.

Following the schema of ontology-mediated querying, a DDlog$^{(\neq)}$ \emph{query} is a pair $(P,q)$, where $P$ is a
DDlog$^{(\neq)}$ program and $q$ is a predicate symbol of some arity $k$. The
\emph{answer} to $(P,q)$ over a database $D$ is the set of all
$k$-ary tuples $\vec{c}$ of constants such that
$P\land \bigwedge D \models q(\vec{c})$ and we let $ans(P,q,D)$ denote the answer
to $(P,q)$ over $D$. We note that due to the safety restriction, 
the
answer is always limited to the constants that appear in the input
database or the program.

We next show that any atomic TGF query can be converted into an equivalent DDlog$^\neq$ query, that is a query that produces the same answers for any database $D$.
\begin{theorem}
	Given a GFU formula $\varphi$, one can construct a DDlog$^\neq$ program $P^\mathrm{SNA}_{\varphi}$ (over an extended signature) 
	such that, for any set of ground atoms $D$ and a ground atom
	$q(\vec{c})$ over the predicate signature of $\varphi$, we have
	$P^\mathrm{SNA}_{\varphi}\land \bigwedge D\models q(\vec{c})$ iff
	$\varphi\land\bigwedge D \models q(\vec{c})$ under the SNA.
\end{theorem}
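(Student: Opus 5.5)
The plan is to use the mosaic characterization of Theorem~\ref{thm:compatibility}, in the spirit of the known translations of $\mathcal{ALCHI}$ and GF queries into disjunctive Datalog. The idea is that the program $P^\mathrm{SNA}_{\varphi}$ guesses a mosaic and a labeling of the database, and then checks compatibility locally. After normalizing (Proposition~\ref{prop:normalization}; the fresh predicates do not occur in $D$ or $q$, so entailment of $q(\vec{c})$ is preserved), let $\mathbb{M}$ be the finite set of all mosaics for $\varphi$. For each $\mathcal{M}\in\mathbb{M}$ introduce a nullary predicate $M_{\mathcal{M}}$, together with the disjunctive fact $\top\Rightarrow\bigvee_{\mathcal{M}\in\mathbb{M}} M_{\mathcal{M}}$. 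For every predicate $p$ of $\varphi$ (including $\UU$) introduce a complement predicate $\bar p$. Then add rules guessing the full type of every relevant tuple: $\mathit{adom}(x_1)\wedge\ldots\wedge\mathit{adom}(x_k)\Rightarrow p(\vec{x})\vee\bar p(\vec{x})$ and $p(\vec{x})\wedge\bar p(\vec{x})\Rightarrow\bot$. Here $\mathit{adom}$ is derived from every position of every predicate occurring in $D$ and from the constants $C$ of $\varphi$. We also add $\mathit{adom}(x)\wedge\mathit{adom}(y)\Rightarrow\UU(x,y)$.

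\textbf{Compatibility constraints.} For every $k\le\width(\varphi)$, every $k$-type $\alpha$ over $\sigma$ and every $\mathcal{M}$ with $\alpha\notin\mathcal{M}$, add the constraint
\[M_{\mathcal{M}}\wedge \textstyle\bigwedge_{i\ne j} x_i\neq x_j\wedge\bigwedge_{i,c} x_i\neq c\wedge \widehat{\alpha}(x_1,\ldots,x_k)\Rightarrow\bot,\]
where $\widehat{\alpha}$ is $\alpha$ with each negative literal $\neg p(\vec t)$ replaced by $\bar p(\vec t)$, and with $\mathit{adom}(x_i)$ conjuncts added for safety. This is where $\neq$ is needed: compatibility (Definition~\ref{def:compatibility}) refers to tuples of \emph{distinct unnamed} elements, and the constants of $\varphi$ are named by the SNA. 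The ground part (condition (B), agreement on ground atoms with $\mathcal{M}$) is included in these constraints, since every type contains the ground literals.

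\textbf{Correctness.} For ``$\Leftarrow$'', take a model $\str{I}$ of $P^\mathrm{SNA}_{\varphi}\wedge D$ falsifying $q(\vec c)$, restricted to $\mathit{adom}$. The guessed $p/\bar p$ literals fix a structure $\str{A}$ on the active domain with $D$ true, and some $M_{\mathcal{M}}$ holds. The constraints force $\mathcal{M}$ to be compatible with $\str{A}$, so Theorem~\ref{thm:compatibility} gives a model $\str{B}\supseteq\str{A}$ of $\varphi$ under the SNA with $\str{B}\restr A=\str{A}$. In particular $q(\vec c)$ is false in $\str{B}$, because $\vec c$ lies in the active domain, the answer is restricted to it, and $q$ agrees with $\str{I}$ there. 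For ``$\Rightarrow$'', take an SNA model $\str{B}$ of $\varphi\wedge D$ with $\neg q(\vec c)$. The proof of Theorem~\ref{thm:completeness}, via the doubling, gives a mosaic compatible with $\str{B}\restr A$. We then set $M_{\mathcal{M}}$ true, read off $p$ and $\bar p$ from $\str{B}$, and obtain a model of the program and $D$ without $q(\vec c)$.

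\textbf{Expected obstacle.} The delicate point is the interaction of the SNA with database constants, in particular the constants of $D$ outside $C$ and the unique-name behaviour between them. Distinct constants of $D$ may be interpreted by equal elements in DDlog semantics, but Theorem~\ref{thm:compatibility} presumes them distinct. The plan is to read ``under the SNA'' as applying to all constants on both sides, which makes the $\neq$ atoms behave as syntactic inequality. A second check is that the answer predicate $q$ itself must be among the guessed predicates, so that its negation can be witnessed. The size of the program (triply exponential, from $\mathbb{M}$) is irrelevant, since only existence is claimed.
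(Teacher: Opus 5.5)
Your proposal is correct and takes essentially the paper's route: a disjunctive nullary fact selects a mosaic, the program enforces compatibility of the database with that mosaic, and Theorem~\ref{thm:compatibility} (with the extraction from the proof of Theorem~\ref{thm:completeness}) gives both directions. The one real difference is how compatibility is encoded, together with some bookkeeping: the paper lets every tuple of distinct elements choose an allowed type via fresh predicates $T_\alpha$, whose literals are then projected onto $p$ and $\bar p$, whereas you guess each atom as $p\vee\bar p$ and forbid disallowed types by constraints; your explicit $\mathit{adom}$ guards and $x_i\neq c$ atoms also make safety and the restriction to unnamed elements explicit, which the paper's type-choice rule leaves implicit.
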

\begin{proof}
	We can w.l.o.g.\,assume that $\varphi$ is in normal form and we let $\mathbb{M}$ be the
	set of all mosaics for $\varphi$. The idea is to exploit the notion of compatibility between a structure and a mosaic (see Definition~\ref{def:compatibility} and Theorem~\ref{thm:compatibility}). The construction of
	$P^\mathrm{SNA}_{\varphi}$ is as follows.
	
	For all $\mathcal{M}\in \mathbb{M}$, let $S_{\mathcal{M}}$ be a fresh
	nullary relation symbol. We add to $P^\mathrm{SNA}_{\varphi}$ the following
	disjunctive fact:
	\[\top \Rightarrow \bigvee_{\mathcal{M}\in \mathbb{M}}  S_{\mathcal{M}} \]
	The above rule allows to choose an arbitrary mosaic in
	$\mathcal{M}\in \mathbb{M}$ and this choice is indicated by the truth
	of the (propositional) atom $S_{\mathcal{M}}$.  For every $\ell$-type
	$\alpha$ that appears in $\mathcal{M}$, take a fresh predicate $T_{\alpha}$ of
	arity $\ell$. Add the following rule for every $\mathcal{M}\in \mathbb{M}$ and
	every $\ell\leq \width(\varphi)$:
	\[ S_{\mathcal{M}}\land 
	\bigwedge_{1 \leq i < j \leq \ell}  y_i\neq  y_j  \Rightarrow \bigvee_{\alpha\in \mathcal{M}\mbox{ is an }\ell\mbox{-type}}
          T_{\alpha}(y_1,\ldots,y_\ell)           \]
	For every $k$-ary relation symbol $p$ of $\varphi$, we take a fresh
	predicate $\bar{p}$ of the same arity as $p$ and add the following
	constraint (where $\vec{y}$ is a $k$-ary
	tuple of variables):
	\[ p(\vec{y})\land  \bar{p}(\vec{y}) \Rightarrow \bot\]  
	What remains is to ensure that a type $\alpha$ assigned to some tuple $\vec{c}$ of constants is compatible with the atoms over $\vec{c}$ stored in $\gamma$.
	For every $\ell$-type $\alpha$ that appears in $M$, add the following
	rules:
	\[  T_{\alpha}(x_1,\ldots,x_\ell) \Rightarrow p(\vec{t})  \quad \mbox{for all }
	p(\vec{t})\in \alpha \]
	\[  T_{\alpha}(x_1,\ldots,x_\ell) \Rightarrow \bar{p}(\vec{t})  \quad \mbox{for all }
	\neg p(\vec{t})\in \alpha \]
	This concludes the construction of the DDlog$^\neq$ program.
\end{proof}

We now turn to the case where the SNA is not imposed, which means that some of the constants in $D$ may denote the same individual. Interesting, in this case we can even provide a DDlog program without built-in disequality.

\begin{theorem}
	Given a GFU formula $\varphi$, one can construct a DDlog program $P_{\varphi}$ (over an extended signature) 
	such that, for any set of ground atoms $D$ and a ground atom
	$q(\vec{c})$ over the predicate signature of $\varphi$, we have
	$P_{\varphi}\land \bigwedge D\models q(\vec{c})$ iff
	$\varphi\land\bigwedge D \models q(\vec{c})$ (with the SNA not imposed).
\end{theorem}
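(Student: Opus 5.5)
The plan is to build on the program $P^\mathrm{SNA}_{\varphi}$ from the previous theorem. The only place that program needs built-in disequality is the rule that assigns a type $T_\alpha$ to every tuple of \emph{pairwise distinct} constants. Without the SNA, constants of $D$ may denote the same individual, so we no longer want genuine disequality. Instead we simulate a guessed equality relation among constants, in the spirit of Proposition~\ref{prop:SNA}. The catch is that the guess must be made inside Datalog, where it can range over polynomially many constants of $D$.

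We introduce a fresh binary predicate $\mathit{Eq}$ and a fresh binary predicate $\mathit{Neq}$, together with the following rules:
\[ \top \Rightarrow \mathit{Eq}(x,y)\lor \mathit{Neq}(x,y) \]
Strictly this rule is unsafe, so we guard it with a unary predicate $\mathit{Adom}$. We also add rules $p(\vec{y}) \Rightarrow \mathit{Adom}(y_i)$ for every predicate $p$ and every position, together with facts $\mathit{Adom}(c)$ for the constants of $\varphi$.

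We then make $\mathit{Eq}$ behave like real equality. We add the constraint $\mathit{Eq}(x,y)\land\mathit{Neq}(x,y)\Rightarrow\bot$, reflexivity $\mathit{Adom}(x)\Rightarrow \mathit{Eq}(x,x)$, symmetry, transitivity, and congruence rules $p(\ldots,x,\ldots)\land \mathit{Eq}(x,y)\Rightarrow p(\ldots,y,\ldots)$ for all predicates, including the fresh $\bar p$, $T_\alpha$ and $q$. Distinct constants of $\varphi$ itself may also coincide; this is handled uniformly, because they live in $\mathit{Adom}$ as well.

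Next we replace every conjunct $y_i\neq y_j$ in the type-assignment rule by $\mathit{Neq}(y_i,y_j)$, with all $y_i$ guarded by $\mathit{Adom}$. The rest of $P^\mathrm{SNA}_{\varphi}$ is kept, except for one change. The mosaic is now meant to describe the structure obtained by collapsing each $\mathit{Eq}$-class to one element. Consequently the rules $T_\alpha\Rightarrow p(\vec t)$ that mention constants of $\varphi$ must refer to the class representative; the congruence rules take care of this automatically.

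Correctness splits into two directions, each argued by contraposition on models.

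\emph{Soundness.} Suppose $\varphi\land D\not\models q(\vec c)$. Take a countermodel $\str{B}$ and apply Proposition~\ref{prop:SNA} with the equivalence induced by $\str{B}$ on all constants. The quotient is a model under the SNA, so by the previous theorem (in its proof form, via Theorem~\ref{thm:compatibility}) a compatible mosaic exists. From this we build a model of $P_\varphi\land D$: set $\mathit{Eq}$ to the induced equivalence, $\mathit{Neq}$ to its complement, and let $T_\alpha$ record the realized types, lifted to all representatives. In this model $q(\vec c)$ is false.

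\emph{Completeness.} Conversely, a model of $P_\varphi\land D$ falsifying $q(\vec c)$ yields an equivalence $\mathit{Eq}$ on the active domain, and by congruence it can be quotiented. Minimality is not needed; we just take the quotient of the active-domain part. The result is a structure $\str{A}$ over pairwise distinct class representatives, compatible with the chosen mosaic $\mathcal{M}$. Theorem~\ref{thm:compatibility} then extends $\str{A}$ to a model of $\varphi$ under the SNA containing $D$ modulo the identification. Interpreting each constant as its class representative gives a model of $\varphi\land D$ without the SNA in which $q(\vec c)$ fails.

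The main obstacle is making sure the guessed $\mathit{Neq}$ really acts as disequality inside the compatibility argument. Two pitfalls need checking. First, a tuple that is $\mathit{Neq}$-pairwise distinct must correspond to genuinely distinct quotient elements; this is guaranteed by the constraint together with transitivity. Second, constants of $\varphi$ must be respected, since the types require $x_i\neq c_g$; here we check that unnamed elements are exactly the classes containing no constant of $\varphi$. I would spend most care on this interaction between the guessed equality on constants of $\varphi$ and the named part $\hat A$ in Theorem~\ref{thm:compatibility}. Possibly I would first fix the identification among the (finitely many, fixed) constants of $\varphi$ by a top-level disjunction, in the same way as $S_\mathcal{M}$.
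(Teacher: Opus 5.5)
\textbf{There is a genuine gap.} Your claim that distinct constants of $\varphi$ are ``handled uniformly'' through $\mathit{Adom}$ is false as the construction stands. You keep the mosaic family of $P^{\mathrm{SNA}}_\varphi$, i.e.\ mosaics for $\varphi$ under the SNA. Every type in such a mosaic contains $c_g\neq c_h$ for distinct constants of $\varphi$. Condition (C) is checked in $\overline{\alpha}$, where these constants are distinct elements. So such a mosaic can never describe a quotient in which two constants of $\varphi$ are merged. Guessing $\mathit{Eq}(c,c')$ only forces the mosaic's ground part to be symmetric in $c,c'$; it gives no access to models with $c=c'$.

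Your soundness direction therefore breaks at ``a compatible mosaic exists''. Proposition~\ref{prop:SNA} turns a countermodel that identifies constants of $\varphi$ into an SNA model of $\varphi^\sim$. What you obtain is a mosaic for $\varphi^\sim$, and that mosaic is not among your disjuncts. A concrete counterexample: take $\varphi=(c=c')\lor P(e)$ and $D=\emptyset$. By (C), evaluated where $c\neq c'$, every type of every mosaic for $\varphi$ contains $P(e)$. Hence your program derives $P(e)$ on every branch, including those guessing $\mathit{Eq}(c,c')$, although $\varphi\not\models P(e)$. For $\varphi=(c=c')\land Q(c)$ there is no mosaic at all, so your disjunctive fact becomes $\top\Rightarrow\bot$ and $P_\varphi$ entails everything.

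The remedy is the step you list as optional at the end. It is mandatory, and it is exactly the paper's construction:
\begin{itemize}
\item a top-level guess $\top\Rightarrow\bigvee_{\sim\in\mathbb{E}(C)}Q_\sim$ over equivalences on the constants $C$ of $\varphi$;
\item rules $Q_\sim\Rightarrow\mathit{Eq}(c,c')$ for $c\sim c'$ and $Q_\sim\Rightarrow\mathit{Neq}(c,c')$ otherwise;
\item $Q_\sim\Rightarrow\bigvee_{\mathcal{M}\in\mathbb{M}_\sim}S_{\mathcal{M}}$, where $\mathbb{M}_\sim$ is the set of mosaics for $\varphi^\sim$, not for $\varphi$;
\item in the rules $T_\alpha\Rightarrow p(\vec t)$ and $T_\alpha\Rightarrow\bar p(\vec t)$, each constant $b_{[c]_\sim}$ is replaced by one member of its class, with congruence spreading the atom to the rest.
\end{itemize}
The rest of your plan coincides with the paper's program: the $\mathit{Adom}$-guarded guess of $\mathit{Eq}/\mathit{Neq}$, the equivalence and congruence axioms, and $\mathit{Neq}$ in place of $\neq$. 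The paper itself only gives the construction. Your two-direction argument via Proposition~\ref{prop:SNA} and Theorem~\ref{thm:compatibility} then works with $\varphi^\sim$ in place of $\varphi$.

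Your second pitfall is also real, and the rules you wrote do not yet address it. Since $\mathit{Adom}(c)$ holds for $c\in C$, the type-assignment rule also fires on tuples containing elements $\mathit{Eq}$ to a constant of $\varphi$. It assigns them types meant for unnamed elements, and these generally clash with the mosaic's ground part. You need $\mathit{Neq}(y_i,c)$ for all $i$ and all $c\in C$ in the body of that rule. The paper's sketch also leaves this restriction implicit.
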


\begin{proof}
Let $C$ be the set of constant symbols occurring in $\varphi$ and let $\mathbb{E}(C)$ denote the set of equivalence relations over $C$. For every $\sim$ from $\mathbb{E}(C)$, let $\mathbb{M}_\sim$ denote the set of all mosaics for $\varphi^\sim$ (see the proof of Proposition~\ref{prop:SNA}). Now, for every
$\sim$ introduce a fresh nullary symbol $Q_\sim$. Then, we let $P_{\varphi}$ contain the disjunctive rule 
	\[\top \Rightarrow \bigvee_{{\sim}\in \mathbb{E}(C)}  Q_\sim \]
as well as the rule
\[  Q_\sim \Rightarrow EQ(c,c')  \qquad\mbox{ for all }{\sim}\in \mathbb{E}(C) \mbox{ and } c,c' \in C \mbox{ with } c\sim c' \]
and 
\[  Q_\sim \Rightarrow NEQ(c,c')  \qquad \mbox{ for all }{\sim}\in \mathbb{E}(C) \mbox{ and } c,c' \in C \mbox{ with } c\not\sim c'. \]
Therein, $EQ$ and $NEQ$ are two fresh binary predicates, meant to express equality and disequality, respectively, which is axiomatized as follows:
\[ 
  ADom(x)\land ADom(y) \Rightarrow EQ(x,y) \vee NEQ(x,y)  \qquad \qquad  EQ(x,y)\land  NEQ(x,y)  \Rightarrow \bot  
\]
\[ 
 ADom(x) \Rightarrow EQ(x,x) \qquad\qquad    EQ(y,x) \Rightarrow  EQ(x,y)    \qquad\qquad  EQ(x,y)\land EQ(y,z)\Rightarrow EQ(x,z)  
\]
\[ 
 p(x_1,\ldots x_k)\land  EQ(x_1,x'_1)\land  \ldots \land EQ(x_k,x'_k)  \Rightarrow p(x'_1,\ldots x'_k) \qquad \mbox{ for all  predicates $p$ from }\varphi
\]
	Here $\mathit{ADom}$ is the usual ``active domain'' predicate, which
	can be easily expressed in Datalog. Specifically, for every $k$-ary relation symbol $p$ that appears in $\varphi$, and every $1 \leq i \leq k$, we add the following rule:
	\[p(y_1,\ldots,y_k)  \Rightarrow \mathit{ADom}(y_i) \]

The further rules of $P_{\varphi}$ are similar to the SNA case:
\[ Q_\sim  \Rightarrow \bigvee_{\mathcal{M}\in \mathbb{M}_\sim}  S_{\mathcal{M}} \qquad \mbox{ for every }{\sim} \in \mathbb{E}(C)\]
\[S_{\mathcal{M}}\land 
\bigwedge_{1 \leq i < j \leq \ell}  NEQ(y_i, y_j)  \Rightarrow \bigvee_{\alpha\in \mathcal{M}\mbox{ is an }\ell\mbox{-type}}
T_{\alpha}(y_1,\ldots,y_\ell)  \qquad \mbox{for every $\mathcal{M}$ and
	every $\ell\leq \width(\varphi)$}\]
Note that the built-in disequality predicate $\neq$ has been replaced by the axiomatized one $NEQ$.  
\[p(\vec{y})\land \bar{p}(\vec{y}) \Rightarrow \bot\]  
For every $\ell$-type $\alpha$, add all the following
rules (recall the special form of constants introduced by turning $\varphi$ into $\varphi^\sim$):
\[  T_{\alpha}(x_1,\ldots,x_\ell) \Rightarrow p(\vec{t'})  \quad \mbox{for all }
p(\vec{t})\in \alpha \mbox{ where $\vec{t'}$ is obtained from $\vec{t}$ by replacing every $b_D$ by some $c \in D$}, \]
\[  T_{\alpha}(x_1,\ldots,x_\ell) \Rightarrow \bar{p}(\vec{t'})  \quad \mbox{for all }
\neg p(\vec{t})\in \alpha \mbox{ where $\vec{t'}$ is obtained from $\vec{t}$ by replacing every $b_D$ by some $c \in D$}. \]
 \end{proof}

\section{Undecidability Results} \label{s:undec}
We  review here some further natural extensions of \tgf{} and find that they lead to undecidability.

\paragraph{Relaxing guardedness further} Unguarded quantification of subformulae with three variables would allow to express any formula of the three-variable fragment of $\fo$, denoted ${\mathrm{FO}^3}$, for which satisfiability is  undecidable (as ${\mathrm{FO}^3}$ contains the class of $\fo$ sentences with quantifier prefix $\forall\exists\forall$ which is undecidable \cite{Lewis:Unsolvable}).

\paragraph{Counting} $\fot$ can be extended by counting quantifiers of the shape $\exists^{=n}$, $\exists^{\leq n}$, and $\exists^{\geq n}$, yielding a logic denoted $\mathrm{C}^2$. This extension is useful for logical modeling (it helps to capture description logics with cardinality restrictions and graded modal logics) and does not lead to an increase in complexity of satisfiability checking \cite{PH:C2complex}. Yet, this enrichment is detrimental when mixing it with the guarded fragment: via the $\mathrm{C}^2$ sentence $\forall x. \exists^{=1} y. F(x,y)$ we can enforce that $F$ must be interpreted as a functional binary relation. Yet, adding a functional relation to $\gf$ is known to cause undecidability \cite{Gra99}.

\paragraph{Conjunctive Queries} Instead of asking for satisfiability of a $\TGF${} theory, an often considered problem stemming from database theory is also if it entails a Boolean conjunctive query (i.e., an existentially quantified conjunction of atoms). However, conjunctive query entailment has been shown to be undecidable already for $\fot$ alone \cite{DBLP:conf/icdt/Rosati07}. This also shows that any attempt of extending $\TGF${} such that it incorporates $\fo$ fragments that can express negated Boolean conjunctive queries (such as the unary negation fragment \cite{DBLP:journals/corr/SegoufinC13} or the guarded negation fragment \cite{DBLP:journals/jacm/BaranyCS15}) will lead to undecidability.

\paragraph{Loose guardedness} It has been shown that $\gf$ remains decidable if the guardedness restriction is relaxed, leading to notions such as the loosely guarded fragment, the packed fragment or the clique-guarded fragment. For most restrictive notion of those, the loosely guarded fragment \cite{loosely}, the guard does not need to be one atom containing all free variables, rather it can be a conjunction of atoms with the property that any pair of free variables occurs together in one of those conjuncts. It is not hard to see that in the presence of the $\ur$ predicate (or if such a predicate can be axiomatized as in \TGF), we can create a ``loose guard'' $\bigwedge_{\{x,y\} \subseteq \vec{x}} \ur(x,y)$ for any set $\vec{x}$ of 
free variables. This allows to quantify over the full domain, hence every $\fo$ formula is equivalent to such a loosely guarded one. Consequently, a hypothetical ``loosely triguarded fragment'' would be as expressive as $\fo$, hence undecidable.

\renewcommand{\approx}{=} 
\paragraph{Equality atoms of the form $x \approx y$}

\newcommand{\predfont}{\mathit}

\newcommand{\predH}{\predfont{H}}
\newcommand{\predV}{\predfont{V}}
\newcommand{\predOrig}{\predfont{Orig}}
\newcommand{\predChkFunc}{\predfont{ChkFunc}}
\newcommand{\predChkSq}{\predfont{ChkSq}}

The 
undecidability of satisfiability of \GFU{} (and hence of \TGF{}) in the presence of statements that assert equality between variables can be inferred from~\cite{Kazakov:06:Phd} (Section 4.2.3), where a
reduction from satisfiability in the \emph{Goldfarb class} is
presented, and it can be applied to our fragments. Here
we provide a more direct undecidability proof by a reduction from
the tiling problem for an infinite grid \cite{BGG}. We can
construct a \GFU{} formula with equality such that its universal model
represents an $\mathbb{N} \times \mathbb{N}$ grid. Thereby, the domain
elements of the model correspond to grid positions and every position
is connected to its upper neighbor by a binary predicate $\predV$ and to
its right neighbor by a binary predicate~$\predH$.

In the following, we omit leading universal quantifiers; all formulae are sentences.
We start our modeling by ensuring there is exactly one leftmost, bottommost position of the grid, i.e., the
``origin''.

\begin{eqnarray}
	& \exists x. \predOrig(x) &
	\\
	& \ur(x,y) \wedge \predOrig(x) \wedge \predOrig(y) \to x \approx y &
\end{eqnarray}
Any two domain elements co-occur together with the origin in a ternary
auxiliary predicate $\predChkFunc$.
\begin{eqnarray}
	\ur(x,y) \to \exists z. \predChkFunc(x,y,z) \wedge \predOrig(z)
\end{eqnarray}
Intuitively, $\predChkFunc(x,y,z)$ indicates that we will enforce that if
$z$ is connected with both $x$ and $y$ by predicate $\predV$ (or $\predH$), then
$x$ and $y$ must coincide; in other words, as $x$ and $y$ are
arbitrary elements, $z$ has only one outgoing $\predV$-connection and
one outgoing $\predH$-connection. The following two sentences implement
this.
\begin{eqnarray}
	\predChkFunc(x,y,z) \wedge \predH(z,x) \wedge \predH(z,y) & \to & x \approx y\\
	\predChkFunc(x,y,z) \wedge \predV(z,x) \wedge \predV(z,y) & \to & x \approx y
\end{eqnarray}
In particular, this makes sure that the origin has exactly one right
and one upper neighbor. Also, we propagate this ``local functionality''
enforcing predicate along the (known to be unique) $\predV$- and
$\predH$-connections.
\begin{eqnarray}
	\predChkFunc(x,y,z) & \to & \exists w. \predChkFunc(x,y,w) \wedge \predH(z,w)
	\\
	\predChkFunc(x,y,z) & \to & \exists w. \predChkFunc(x,y,w) \wedge \predV(z,w)
\end{eqnarray}
With these axioms alone, the corresponding universal model would resemble an 
infinite binary tree, with the origin as root and every node
having (exactly) one $\predH$-successor and (exactly) one
$\predV$-successor. The next axioms make sure that for every element $e$ in
our structure, the element reached from $e$ via an $\predH$-$\predV$-path
coincides with the element reached from $e$ via a $\predV$-$\predH$-path, 
using another auxiliary $5$-ary predicate $\predChkSq$ which is
handled in a way that $\predChkSq(x,y,z_1,z_2,z_3)$ is only entailed
whenever $z_1$ has $z_2$ as right neighbor and $z_3$ as upper
neighbor.

Again, we start ensuring this for $e$ being the origin and then work
our way through the structure along the (unique) $\predH$- and $\predV$-
connections.
\begin{eqnarray}
	\ur(x,y) \to \exists z_1z_2z_3. \predChkSq(x,y,z_1,z_2,z_3) \wedge
	\predOrig(z_1) \wedge \predH(z_1,z_2) \wedge \predV(z_1,z_3) & &
	\\
	\predChkSq(x,y,z_1,z_2,z_3) \to \exists w_1w_2. \predChkSq(x,y,z_2,w_1,w_2) \wedge  \predH(z_2,w_1) \wedge \predV(z_2,w_2) & &
	\\
	\predChkSq(x,y,z_1,z_2,z_3) \to \exists w_1w_2. \predChkSq(x,y,z_3,w_1,w_2) \wedge  \predH(z_3,w_1) \wedge \predV(z_3,w_2) & &
\end{eqnarray}
Finally, we ensure that if $\predChkSq(x,y,z_1,z_2,z_3)$ holds and $x$
is the right neighbor of $z_2$ and $y$ is the
upper neighbor of $z_3$, that then $x$ and $y$ must coincide.
\begin{eqnarray}
	\predChkSq(x,y,z_1,z_2,z_3) \wedge \predV(z_2,x) \wedge \predH(z_3,y) \to x \approx y
\end{eqnarray}
This finishes our modeling of the infinite grid. It is now
straightforward to model a tiling on top of this, and we obtain the following theorem.

\begin{theorem}[Undecidability with Equality]
	Checking satisfiability of \TGF{} formulae with equality expressions of the form $x \approx y$ is
	undecidable.
\end{theorem}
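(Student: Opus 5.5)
We reduce from the tiling problem for the infinite grid $\mathbb{N}\times\mathbb{N}$. Given a tiling instance (tile set $T$, horizontal and vertical compatibility relations, and possibly an initial tile), we take the conjunction $\Phi$ of the grid sentences listed above together with a standard tiling encoding. The encoding uses unary predicates $P_t$ for $t\in T$, a sentence forcing every element to carry exactly one $P_t$, and guarded constraints $\predH(x,y)\wedge P_t(x)\wedge P_{t'}(y)\to\bot$ for horizontally incompatible $(t,t')$, and similarly for $\predV$. All these sentences are $\GFU$ sentences with $x\approx y$ atoms added. By Proposition~\ref{prop:twoguarded} and the converse translation (replace $\ur$ by $U$ and add $\forall x\forall y.U(x,y)$), they transfer to $\TGF$ with such equalities. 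It remains to show that $\Phi$ is satisfiable iff the instance admits a tiling.

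\emph{If a tiling exists.} We take domain $\mathbb{N}\times\mathbb{N}$ with $\ur$ the full relation, $\predOrig=\{(0,0)\}$, and $\predH,\predV$ the successor relations. We set $\predChkFunc(e,e',(i,j))$ for all $e,e'$ and all $(i,j)$, and $\predChkSq(e,e',z_1,z_2,z_3)$ for all $e,e'$ and all $z_1$ with $z_2,z_3$ its right and upper neighbours. We read off $P_t$ from the tiling. Checking each axiom is then routine. Functionality holds because successors are unique. The square axiom holds because the upper neighbour of $(i{+}1,j)$ equals the right neighbour of $(i,j{+}1)$. The only equality axioms fire on genuinely equal elements.

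\emph{If $\Phi$ has a model $\str{A}$.} This is the main step. Let $o$ be the unique origin. We first show by induction along $\predH$/$\predV$ paths from $o$ that every reachable $z$ satisfies $\predChkFunc(e,e',z)$ for all $e,e'\in A$. The base case $z=o$ follows from the $\ur$-triggered axiom and uniqueness of the origin. The step needs care, because the propagation axiom only produces \emph{some} $\predH$-successor $w$ with $\predChkFunc(e,e',w)$. However, the functionality axiom at $z$, applied to that pair, forces every $\predH$-successor of $z$ to equal $w$. Since the pair $e,e'$ was arbitrary, every reachable $z$ has exactly one $\predH$-successor $h(z)$ and exactly one $\predV$-successor $v(z)$. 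Analogously, for every reachable $z_1$ and every pair $(x,y)$, there is an instance of $\predChkSq(x,y,z_1,h(z_1),v(z_1))$. Choosing $x=v(h(z_1))$ and $y=h(v(z_1))$, the final axiom yields $v(h(z_1))=h(v(z_1))$.

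We then define $\pi(i,j)=h^i(v^j(o))$. The commutation $hv=vh$ on reachable elements makes $\pi$ a homomorphism of the grid into $\str{A}$; it need not be injective, which is harmless. Labelling $(i,j)$ with the tile $t$ such that $P_t(\pi(i,j))$ holds gives a valid tiling by the compatibility constraints.

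The delicate point is the induction justifying uniqueness of successors. This is exactly why $\predChkFunc$ and $\predChkSq$ carry the arbitrary parameters $x,y$: for every target pair, a witness must exist at every reachable node.
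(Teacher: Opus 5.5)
Your proposal is correct and takes the paper's route. It is the same reduction from the $\mathbb{N}\times\mathbb{N}$ tiling problem, using exactly the paper's $\mathit{Orig}$/$\mathit{ChkFunc}$/$\mathit{ChkSq}$ axioms plus a standard tiling layer. The paper only asserts that these axioms force a grid; you supply the verification it omits, and that verification is sound: the induction giving $\mathit{ChkFunc}(e,e',z)$ for all pairs at every reachable $z$ (hence unique $H$- and $V$-successors), the propagation of $\mathit{ChkSq}$ yielding $v(h(z))=h(v(z))$, and the possibly non-injective homomorphism $\pi$ from the grid.
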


\section{Warm-up for FMP: Alternative Decidability Proof for TGF} \label{s:decgfu}

\label{s:altdec}

The main purpose of this section is to prepare the reader to understand our finite model construction in the next section.
We believe, however, that it is also of independent 
interest as it can serve as an alternative, quick decidability proof for \GFU{}, and thus also for \TGF{}. The basic idea is
to augment the input \GFU{} formula with some additional conjuncts (parametrized by a set of $1$-types), enforcing that any model contains a sufficiently rich collection of $\UU$-connections, and then to show that the resulting formula is satisfiable in \GF{}
(over the chosen set of $1$-types) iff the original one is satisfiable in \GFU{}.
We remark that also the decidability proof for the logic
$\gfcross$ in \cite{DBLP:conf/ijcai/BourhisMP17} goes via a reduction to $\GF$. Our approach is different, and, as we believe, conceptually 
simpler; in contrast to the latter it also covers the case of the unrestricted use of constants and equalities of the form $x=c$ (variable = constant).

Let $\varphi$ be a \GFU{} sentence in normal form and $\sigma$ the signature consisting of the symbols used in $\varphi$. Assume that $c_1, \ldots, c_m$
is the list of all constants in $\sigma$. Let $\AAA$  be a set of $1$-types over $\sigma$. 
We construct a  \GF{} $\sigma$-sentence $\varphi^\AAA$ by appending to  $\varphi$ the following conjuncts:
\begin{align}
	\label{expphi1} \forall x \big(x=x \Rightarrow (\bigvee_{i=1}^{m} x=c_i \vee \bigvee_{\alpha \in \AAA} \alpha(x))\big)\\
	\label{expphi2} \bigwedge_{\alpha, \alpha' \in \AAA} \exists xy \big( \UU(x,y) \wedge \UU(y,x) \wedge \alpha(x) \wedge \alpha'(y) \wedge \bigwedge_{i=1}^{m} (x \not= c_i \wedge y \not=c_i) \big)\\
	\label{expphi3}  \bigwedge_{P \in \sigma} \forall \bar{x} \Big( (P(\bar{x}) \ \Rightarrow \bigwedge_{i,j \in \{1,\ldots,|\bar{x}|\}}  \UU(x_i, x_j)\Big) 
\end{align}
saying, respectively, that only $1$-types from $\AAA$ are realized, all $1$-types from $\AAA$ are realized and every pair of $1$-types has a realization both-ways connected by $\UU$, 
and every guarded pair of elements is connected by $\UU$. Note that $\varphi^\AAA$ is in weak normal form.

The following lemma reduces the satisfiability problem for \GFU{} to the satisfiability problem for \GF{}.
\begin{lemma} \label{l:reductiontogf}
Let $\varphi$ be a \GFU{} sentence in weak normal form,
and $\sigma$ the signature consisting of the symbols used in $\varphi$. 
Then $\varphi$ is satisfiable (over a $\UU$-biquitous model) iff there is a  consistent
set of $\UU$-biquitous $1$-types $\AAA$ over $\sigma$ such that the $\GF{}$-sentence $\varphi^\AAA$ is satisfiable (over a not necessarily $\UU$-biquitous model). 
\end{lemma}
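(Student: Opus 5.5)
The forward direction is routine; the backward direction carries the real work. For the forward direction, let $\str{A}$ be a $\UU$-biquitous model of $\varphi$. Following the proof of Theorem~\ref{thm:completeness}, pass first to the doubling $\str{B}$ of $\str{A}$. By Lemma~\ref{l:doubling}, $\str{B}$ is still a model of $\varphi$, and it is still $\UU$-biquitous because every pair of original elements was in $\UU^{\str{A}}$. Let $\AAA$ be the set of $1$-types realized by unnamed elements of $\str{B}$. This set is consistent, since all realizations share the named part, and each of its types is $\UU$-biquitous.

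We then check the three added conjuncts in turn. Conjunct (\ref{expphi1}) holds by the choice of $\AAA$. Conjunct (\ref{expphi3}) holds trivially, because $\UU^{\str{B}}=B\times B$. For (\ref{expphi2}), fix $\alpha,\alpha'\in\AAA$ and pick realizations $b,b'$. If $b=b'$, replace $b'$ by its twin from Lemma~\ref{c:uconnected}; the resulting pair is distinct, unnamed and $\UU$-connected both ways. So $\str{B}\models\varphi^\AAA$.

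For the backward direction, let $\str{A}\models\varphi^\AAA$, where $\str{A}$ need not be $\UU$-biquitous. The plan is to build a $\UU$-biquitous model by repeatedly closing off missing $\UU$-pairs using joins (Lemma~\ref{l:join}). The key observation is the following. Suppose $a,a'$ are distinct unnamed elements with $(a,a')\notin\UU^{\str{A}}$. By (\ref{expphi3}) the pair is unguarded, so we may freely set its type.

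By (\ref{expphi2}) there are $b,b'$ realizing $\type{\str{A}}{a}$ and $\type{\str{A}}{a'}$ that are both-ways $\UU$-connected. We take a fresh isomorphic copy $\str{A}'$ of $\str{A}$ (identical on the named part) and rename $b,b'$ in it to $a,a'$. Since $1$-types include all atoms with constants, $\str{A}$ and $\str{A}'$ then agree on the shared singletons $a$ and $a'$. The only other shared tuple is the pair $(a,a')$, which is unguarded in $\str{A}$. Hence the two structures are joinable, and by Lemma~\ref{l:join} their join is a model of $\varphi^\AAA$, in which $(a,a')$ now has a $\UU$-biquitous type. Note that any overlap of more than these two elements must be avoided; this is why we use fresh copies rather than identifying more.

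To fix all pairs at once, I would use an $\omega$-chain construction. Set $\str{A}_0=\str{A}$. At stage $n$, for every unordered pair of distinct unnamed elements of $\str{A}_n$ that is not $\UU$-connected both ways, attach such a fresh copy glued along exactly that pair. Taking the union gives $\str{A}_{n+1}$, and the final structure is $\bigcup_n\str{A}_n$. Every pair appears at some finite stage and is repaired at the next one. Pairs consisting of an old element and an element of a fresh copy are unguarded, hence by (\ref{expphi3}) not in $\UU$, and they too are repaired later. Pairs involving named elements are handled by the $\UU$-biquity of the types in $\AAA$ together with (\ref{expphi1}).

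The main obstacle is to verify that the whole construction is a single joinable family, so that Lemma~\ref{l:join} applies, rather than merely a sequence of pairwise joins. I expect two points to need care. First, each copy must overlap the current structure only in a pair that is unguarded on the old side, and on the new side has a guarded type whose $1$-type restrictions agree with the old side. Second, the join of a directed union of models of a weak-normal-form \GF{} sentence must still be a model; I would argue this directly in the style of Lemma~\ref{l:join}, since every guarded tuple lives inside a single component.
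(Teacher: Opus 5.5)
\textbf{The gap: the glued witnesses need not be distinct.} It is in the gluing step of your backward direction. Conjunct~(\ref{expphi2}) has no conjunct $x\neq y$, and given the restricted use of equality it could not have one. This is why your detour through the doubling in the forward direction is unnecessary. For the same reason, however, you cannot assume in the backward direction that the witnesses $b,b'$ are distinct. If $\type{\str{A}}{a}=\type{\str{A}}{a'}=\alpha$, then (\ref{expphi2}) for $(\alpha,\alpha)$ may be witnessed by $x=y=b$, using only $\UU(b,b)\in\alpha$. There is then no pair to rename to $a,a'$.

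This really happens in your chain. Suppose the given model realizes $\alpha$ by a single unnamed element $b$. Every copy glued along a witness pair not involving $b$ brings a fresh realization $b_1$ of $\alpha$. The pair $\{b,b_1\}$ is unguarded, hence not $\UU$-connected, and there is nothing to glue it along.

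The missing step is the one the paper takes first. Replace the given model of $\varphi^\AAA$ by its doubling, which is still a model by Lemma~\ref{l:doubling} because $\varphi^\AAA$ uses equality only in the permitted forms. Then glue only copies of the doubled model. By Lemma~\ref{c:uconnected}, a single witness $b$ can be paired with its twin: the twin is distinct from $b$, has the same $1$-type, and is $\UU$-connected to $b$ in both directions because $\UU(b,b)$ holds. This is exactly why the paper insists on distinct pattern elements $e_1\neq e_2$.

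\textbf{Comparison with the paper's route.} With this repair your argument is correct, and it takes a genuinely different route. You repair bad pairs by freely amalgamating fresh copies, stage by stage.
\begin{itemize}
\item Joinability at each stage is immediate: a copy meets the current structure only in one pair that is unguarded there, and it meets the other copies of that stage in at most one element.
\item The $\omega$-limit is a model because a tuple guarded at some stage keeps its type at all later stages. Among old elements, new atoms appear only on the glued pairs, which are unguarded when glued.
\item The price is that every stage creates new bad pairs, so the process never closes off.
\end{itemize}

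The paper instead fixes a single join $\str{A}$ of infinitely many disjoint copies of the doubled model. It uses the injective function $\hD$ to attach each bad pair, as twins of pattern elements, to a body in a later copy. No new elements are created, at the cost of the case analysis in Claim~\ref{warmupjoin} and the careful selection of double-headed $D$-structures.

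For this lemma your route is simpler. As the paper says explicitly, however, its domain-preserving scheme was chosen because it carries over to the finite construction of Section~\ref{s:fmpgfu}, with $\hD$ replaced by the non-injective $\hS$.

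\textbf{A caveat shared with the paper.} Your treatment of named pairs has the same weakness as the paper's Claim~\ref{c:warmup0}. A named pair $(c,c')$ receives $\UU$ only via some realized $1$-type from $\AAA$. So the case $\AAA=\emptyset$, i.e.\ models consisting of named elements only, must be excluded or handled separately.
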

The proof for the left-to-right direction is easy:
Let $\str{A} \models \varphi$ be $\UU$-biquitous and let $\AAA$ be the set of $1$-types realized in $\str{A}$ by its all unnamed elements. It is then readily verified that
$\AAA$ is consistent, contains only $\UU$-biquitious $1$-types, and that $\str{A}$ satisfies the
conjuncts (\ref{expphi1})--(\ref{expphi3}), so $\str{A} \models \varphi^\AAA$. 
Let us turn to the proof of the right-to-left direction. For the sake of clarity, we divide it into several paragraphs. However,
we do not try to make it as simple as possible, but rather attempt to make its structure similar to the structure of the finite model
construction that will be presented in the next section.

\medskip \noindent
{\bf Initial infinite non-$\UU$-biquitous model}.  
Let $\AAA$ be a consistent set of $\UU$-biquitous $1$-types over $\sigma$, let $\str{C}_0$ be a model of $\varphi^\AAA$.
 As \GF{} has the finite model property we may assume that $\str{C}_0$ is finite. If $\str{C}_0$ is $\UU$-biquitous then we are done. Otherwise, 
let $\str{C}$ be the doubling of $\str{C}_0$. $\str{C}$ is still finite and by Lemma \ref{l:doubling} we have that  $\str{C} \models \varphi^\AAA$.

Let $\str{C}_1, \str{C}_2, \ldots$ be an infinite sequence of isomorphic copies of $\str{C}$ sharing the
named domain $\hat{C}$ with $\str{C}$ and having pairwise disjoint unnamed domains $\check{C}_1, \check{C}_2, \ldots$.
Obviously this sequence is joinable and each $\str{C}_i$ is a model of $\varphi^{\AAA}$. Let $\str{A}$ be its join.
 $\str{A} \models \varphi^\AAA$ by Lemma \ref{l:join}.

$\str{A}$ is not $\UU$-biquitous, but it is easy to see that the only pairs not connected via $\UU$ must consist of unnamed elements.
This follows from the fact that all $1$-types realized in $\str{A}$ are transferred from some $\str{C}_i$ and
hence they are $\UU$-biquitous and contain $\UU(c,c')$, $\UU(c,x)$ and $\UU(x,c)$ for all constants $c,c'$.
\begin{claim} \label{c:warmup0}
Any pair from $\hat{A} \times A$ is $\UU$-biqutious in $\str{A}$.
\end{claim}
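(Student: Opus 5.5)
We have to show that for every $a \in \hat{A}$ and every $b \in A$, both $(a,b)$ and $(b,a)$ belong to $\UU^{\str{A}}$. The plan is to read these atoms off the $1$-types in $\str{A}$, and to transfer those $1$-types from the copies $\str{C}_i$ via the definition of the join.

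\textbf{Case $b \in \hat{A}$.} Both elements are named, say $a=c$ and $b=c'$ for constants $c,c'$. By item (1) of the definition of the join, $\str{A}\restr \hat{A} = \str{C}_1 \restr \hat{A}$.

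It then suffices to show $\UU(c,c'),\UU(c',c)$ hold in $\str{C}_1$. The set $\AAA$ is nonempty, since $\str{C}$ is a doubling of a model of $\varphi^\AAA$ and the unnamed domains are nonempty. (If $\AAA$ were empty the sentence would have no unnamed elements at all, and we would instead argue directly from conjunct (\ref{expphi3}) or handle it trivially.)

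Hence $\str{C}_1$ contains an unnamed element $e$, and its $1$-type $\alpha=\type{\str{C}_1}{e}$ lies in $\AAA$ by conjunct (\ref{expphi1}). The type $\alpha$ is $\UU$-biquitous, so $\overline{\alpha}$ interprets $\UU$ as the full relation on $\{x_1\}\cup\{c_1,\dots,c_m\}$. In particular $\UU(c,c')\in\alpha$, and since $\alpha$ is realized in $\str{C}_1$, this atom holds there.

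\textbf{Case $b\in\check{A}$.} Here $b$ lies in $\check{C}_i$ for a unique $i$. A $1$-tuple is guarded by definition, so item (3) of the join definition gives $\type{\str{A}}{b}=\type{\str{C}_i}{b}$.

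This type is realized by an unnamed element of $\str{C}_i$, hence belongs to $\AAA$ by (\ref{expphi1}) and is $\UU$-biquitous. Therefore it contains $\UU(x_1,c)$ and $\UU(c,x_1)$ for every constant $c$, which yields $(b,a),(a,b)\in\UU^{\str{A}}$.

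The only delicate point is ensuring that the $1$-types of $\str{A}$ genuinely coincide with those in the copies, including their atoms mentioning constants. This is exactly what items (1) and (3) of the join give, together with joinability of isomorphic copies sharing the named part. Notably, the argument does not depend on whether each type in $\AAA$ realized in $\str{C}_i$ came from the original $\str{C}_0$ or from its doubling, since Lemma~\ref{l:doubling} preserves $1$-types.
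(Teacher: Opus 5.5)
Your argument is essentially the paper's. Every $1$-type realized by an unnamed element of $\str{A}$ is inherited from some copy $\str{C}_i$, lies in $\AAA$ by (\ref{expphi1}), and is therefore $\UU$-biquitous, so it contains $\UU(c,c')$, $\UU(x_1,c)$ and $\UU(c,x_1)$ for all constants. Your separate handling of named--named pairs via item (1) of the join just spells out the paper's remark that these types contain $\UU(c,c')$.

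One step needs correcting: how you obtain an unnamed element. It does not follow from ``the unnamed domains are nonempty'', since nothing in the setup guarantees that. It follows from $\AAA\neq\emptyset$ together with conjunct~(\ref{expphi2}).

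Your parenthetical fallback for $\AAA=\emptyset$ also fails. Conjunct~(\ref{expphi3}) only forces $\UU$ on tuples that are already guarded, so it says nothing about an unguarded pair of constants. In fact the claim is false in that case. Take $\varphi=\forall x\,(x=x\Rightarrow\neg\UU(c,c'))$. The structure with domain $\{c,c'\}$ and empty $\UU$ is a model of $\varphi^{\emptyset}$. It is its own doubling and equals the join of its copies, yet $(c,c')$ is not $\UU$-connected.

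So state $\AAA\neq\emptyset$ as an explicit assumption rather than calling the empty case trivial. The paper's one-line justification relies on the same tacit assumption.
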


\medskip \noindent
{\bf Outline}. In order to turn $\str{A}$ into a $\UU$-biquitous structures, we will modify the types of some tuples of its domain elements according to the following strategy: for a non-$\UU$-biquitous pair of elements $(a_1,a_2)$ we will ``attach'' it to one of the $\str{C}_i$-substructures by making $\str{A}$'s induced substructure on $\{a_1, a_2 \} \cup C_i$ isomorphic to the partial doubling of $\str{C}_i$ wherein a twin $e_1 \in C_i$ of $a_1$ and a twin $e_2 \in C_i$ of $a_2$ which together form a $\UU$-biqutious pair in $\str{C}_i$ are duplicated. 
This makes the pair $(a_1,a_2)$ (and, possibly, some other pairs $(a_i,b)$, $i=1,2$, $b \in \check{C}_i$) $\UU$-biqutious.
As $\str{A}$ has infinitely many substructures $\str{C}_i$, there is a plenty of space to do this without conflicts. Let us next turn to the details.

\medskip \noindent
{\bf Joining function}.  
Let us fix an injective function $\hD:\check{A} \times \check{A} \rightarrow \N$ returning for every pair $(a_1,a_2) \in \check{A}\times\check{A}$
a number bigger than $\max \{m_1,m_2 \}$ where $m_i \in \N$ is such that $a_i \in C_{m_i}$, for $i=1,2$.
This function indicates to which $\str{C}_i$ the given pair will be attached, if necessary. 
The following property of $\hD$ follows directly from its definition.

\begin{claim}\label{c:warmup1}
For every $a_1, a_2 \in \check{A}$ and every $b_1 \in \check{C}_{\hD(a_1,a_2)}$
there is no $b_2 \in \check{A}$ such that $a_1$ or $a_2$ belongs to $C_{\hD(b_1, b_2)}$.
\end{claim}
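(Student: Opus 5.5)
The claim should follow by comparing indices. Suppose, for contradiction, that $a_1,a_2 \in \check{A}$ and $b_1 \in \check{C}_{\hD(a_1,a_2)}$, and that some $b_2 \in \check{A}$ has $a_1$ or $a_2$ in $C_{\hD(b_1,b_2)}$. The plan is to derive two incompatible inequalities between the indices of the copies involved.

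First we record the relevant indices. Since the unnamed domains $\check{C}_1, \check{C}_2, \ldots$ are pairwise disjoint, every unnamed element belongs to exactly one $\check{C}_i$. Write $m(e)$ for this index; it is the $m_i$ used in the definition of $\hD$. By definition, $\hD(e,e') > \max\{m(e), m(e')\}$ for all $e,e' \in \check{A}$.

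The inequalities go as follows. From $b_1 \in \check{C}_{\hD(a_1,a_2)}$ we get $m(b_1) = \hD(a_1,a_2)$, and hence $m(b_1) > \max\{m(a_1),m(a_2)\}$. Next, $\hD(b_1,b_2) > m(b_1)$. Now $a_1$ and $a_2$ are unnamed, so $a_i \in C_{\hD(b_1,b_2)}$ means $a_i \in \check{C}_{\hD(b_1,b_2)}$, that is, $m(a_i) = \hD(b_1,b_2)$ for the relevant $i$. Chaining these gives $m(a_i) = \hD(b_1,b_2) > m(b_1) > m(a_i)$, which is a contradiction.

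The only slightly delicate point is that $C_j$ contains the shared named part $\hat{C}$. Membership of $a_i$ in $C_j$ therefore pins down $j$ only because $a_i$ is unnamed, which holds since $a_1,a_2 \in \check{A}$. Injectivity of $\hD$ is not needed for this claim.
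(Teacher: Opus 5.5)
Your proof is correct and takes the same approach as the paper, which simply says the claim follows directly from the definition of $\hD$ (namely $\hD(e,e') > \max\{m(e),m(e')\}$); your chain $m(a_i) = \hD(b_1,b_2) > m(b_1) = \hD(a_1,a_2) > m(a_i)$ spells out exactly that argument. Your remarks are also accurate: the unnamedness of $a_i$ is what fixes the index, and injectivity of $\hD$ is not needed here.
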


\medskip \noindent{\bf Definition of  $D$-structures.} 
For every pair of distinct elements $a_1, a_2 \in \check{A}$ we distinguish three subdomains of $\str{A}$ arising by 
adding one or both of $a_1, a_2$ to $C_{\hD(a_1,a_2)}$.  Formally:
\begin{itemize}
	\item $D^{\sss a_1a_2}_{\hD(a_1,a_2)}:=\{a_1, a_2\} \cup {C}_{\hD(a_1, a_2)}$, 
	\item $D^{\sss a_1\cdot}_{\hD(a_1,a_2)}:=\{a_1\} \cup {C}_{\hD(a_1, a_2)}$, 
	\item $D^{\sss \cdot a_2}_{\hD(a_1,a_2)}:=\{a_2\} \cup {C}_{\hD(a_1, a_2)}$.
\end{itemize}

We now define structures on these distinguished subdomains. In the first of them, the pair $(a_1,a_2)$ will be $\UU$-biquitous.
By (\ref{expphi1}), $\type{\str{A}}{a_i}  \in \AAA$ for $i=1,2$.
Let $e_1, e_2 \in C_{\hD(a_1,a_2)}$ be distinct elements with $\type{\str{C}_{\hD(a_1,a_2)}}{e_i} =\type{\str{A}}{a_i}$ for $i=1,2$, such that $(e_1,e_2), (e_2,e_1) \in \UU^{\str{C}_{\hD(a_1,a_2)}}$. The existence of such $e_1, e_2$ follows by (\ref{expphi2}).  By Lemma~\ref{c:uconnected}, we may indeed assume that $e_1, e_2$ are distinct (even if they have the same $1$-type). 
Let:
\begin{itemize}
	\item $\str{D}^{\sss a_1a_2}_{\hD(a_1,a_2)}$ be isomorphic to the partial doubling of $\str{C}_{\hD(a_1,a_2)}$ duplicating the set $\{ e_1, e_2 \}$,
	\item $\str{D}^{\sss a_1\cdot}_{\hD(a_1,a_2)}$ be isomorphic to the partial doubling of $\str{C}_{\hD(a_1,a_2)}$ duplicating the singleton $\{ e_1 \}$, 
\item $\str{D}^{\sss \cdot a_2}_{\hD(a_1,a_2)}$ be isomorphic to the partial doubling of $\str{C}_{\hD(a_1,a_2)}$ duplicating the singleton $\{ e_2 \}$, 
\end{itemize}
via the natural isomorphism, taking $a_i$ to the twin of  $e_i$ for $i=1,2$.

The above structures will be  called ${D}$-\emph{structures}.
${D}$-structures from the first bullet are called \emph{double-headed}, while those from the two other are \emph{single-headed}
(\emph{left-} and \emph{right-headed}, respectively).  The elements $a_1,a_2$ are called their \emph{heads} and the sets $C_{\hD(d_1,d_2)}$ are their \emph{bodies}.
Observe that, by the definition of $\hD$, for any ${D}$-structure, its head(s) do not belong to its body.
The following claim follows from Lemma \ref{l:doubling}.

\begin{claim}
Every ${D}$-structure is a model of $\varphi^\AAA$.
\end{claim}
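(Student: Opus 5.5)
Each $D$-structure is, by its definition, isomorphic to a partial doubling of the structure $\str{C}_{\hD(a_1,a_2)}$. So the plan is to reduce the claim directly to Lemma~\ref{l:doubling}, applied with $\varphi^\AAA$ in the role of $\varphi$. Since being a model of a sentence is preserved under isomorphism, it suffices to show three things.

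First, $\str{C}_{\hD(a_1,a_2)} \models \varphi^\AAA$. Here $\str{C}_{\hD(a_1,a_2)}$ is an isomorphic copy of $\str{C}$, fixing the named part. We already know $\str{C} \models \varphi^\AAA$ from the construction: $\str{C}$ is the doubling of the model $\str{C}_0$, so Lemma~\ref{l:doubling} applies.

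Second, $\varphi^\AAA$ meets the hypotheses of Lemma~\ref{l:doubling}. The paper notes that $\varphi^\AAA$ is in weak normal form. We also need that it uses equality only in the forms allowed by point (1) of Definition~\ref{def:tgf}. We check the added conjuncts one by one:
\begin{itemize}
\item In (\ref{expphi1}) the equalities are $x=x$ and $x=c_i$.
\item In (\ref{expphi2}) they are $x\neq c_i$ and $y\neq c_i$. The $1$-types $\alpha(x)$ contain only atoms in $x$ and constants, plus inequalities $x\neq c_g$ and $c_g\neq c_h$.
\item In (\ref{expphi3}) no equality appears.
\end{itemize}
Formally, $\alpha(x)$ as a $1$-type has no $x_i \neq x_j$ literals with $i\neq j$, so no equality between distinct variables arises. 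The original $\varphi$ is assumed to respect this restriction throughout.

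Third, the set being duplicated must consist of unnamed elements, as the lemma requires. The set is $\{e_1,e_2\}$ or a singleton from it, and by choice $e_1,e_2 \in C_{\hD(a_1,a_2)}$ realize the unnamed $1$-types $\type{\str{A}}{a_i}\in\AAA$. By (\ref{expphi2}) these elements are distinct from every constant, so they lie in $\check{C}_{\hD(a_1,a_2)}$. Lemma~\ref{l:doubling} then gives that the partial doubling is a model of $\varphi^\AAA$, and via the natural isomorphism so is the $D$-structure.

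The only step needing care is the equality bookkeeping in (\ref{expphi2}), where the existential conjunct is not literally of the shape treated in the proof of Lemma~\ref{l:doubling}. If this is a concern, I would note that the argument of Lemma~\ref{l:doubling} goes through verbatim for it. Witnesses are mapped to their $0$-copies, and inequalities with constants are preserved because $(c,0)$ interprets $c$ and the twins of unnamed elements are unnamed. Alternatively, the remark after Lemma~\ref{c:uconnected} says Lemma~\ref{l:doubling} holds for any first-order sentence with this restricted equality, which covers (\ref{expphi2}) directly.
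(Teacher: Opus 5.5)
Your proposal is correct and follows the paper's route exactly. The paper simply notes that the claim follows from Lemma~\ref{l:doubling}, since each $D$-structure is isomorphic, via a map fixing the named part, to a partial doubling of $\str{C}_{\hD(a_1,a_2)}$, which is a copy of the model $\str{C}$ of $\varphi^\AAA$; your extra checks (that $\varphi^\AAA$ is in weak normal form with only the permitted equalities, and that the duplicated elements are unnamed) just make explicit the hypotheses the paper leaves implicit.
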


\medskip \noindent{\bf Basic properties of $D$-structures.} 
The next four claims follow directly from the construction.

\begin{claim} \label{c:warmup2}
Let $a$ be an element in any ${D}$-structure $\str{D}$. Then $\type{\str{D}}{a}=\type{\str{A}}{a}$.
\end{claim}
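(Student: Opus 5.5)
We need to show that for any element $a$ of a $D$-structure $\str{D}$, its $1$-type in $\str{D}$ equals its $1$-type in $\str{A}$. The plan is to distinguish whether $a$ is a named element, a body element, or a head, and in each case trace the $1$-type back through the isomorphism with a partial doubling of $\str{C}_{\hD(a_1,a_2)}$. Throughout I use one fact from the definition of partial doublings: the substructure induced on $\hat{C}\cup\{(e,\ell)\}$ is isomorphic to $\str{C}_{\hD(a_1,a_2)}\restr(\hat{C}\cup\{e\})$ via $(e,\ell)\mapsto e$. Hence every element of a partial doubling realizes the same $1$-type as its origin, where $1$-types include all atoms involving the element together with constants.

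\textbf{Named elements.} Suppose $a\in\hat{A}$. The named part of every $\str{C}_i$, and hence of every $D$-structure and of $\str{A}$, is the common $\hat{C}$. The doubling keeps $\str{C}\restr\hat{C}$ unchanged, so $\str{D}\restr\hat{A}=\str{C}_{\hD(a_1,a_2)}\restr\hat{A}=\str{A}\restr\hat{A}$, using point (1) of the definition of the join.

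\textbf{Body elements.} Suppose $a\in\check{C}_{\hD(a_1,a_2)}$ is not a head. The isomorphism sends $a$ to the copy $(a,0)$ in the partial doubling, whose origin is $a$. So $\type{\str{D}}{a}=\type{\str{C}_{\hD(a_1,a_2)}}{a}$. A single element is always guarded (tuples of length $1$ count as guarded), so point (3) of the definition of the join gives $\type{\str{A}}{a}=\type{\str{C}_{\hD(a_1,a_2)}}{a}$. Note that $a$ belongs to exactly one $\str{C}_i$, since the unnamed domains are pairwise disjoint, so there is no conflict here.

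\textbf{Heads.} Suppose $a=a_i$ is a head. The isomorphism sends $a_i$ to the twin of $e_i$, whose origin is $e_i$. So $\type{\str{D}}{a_i}=\type{\str{C}_{\hD(a_1,a_2)}}{e_i}$, and $e_i$ was chosen precisely so that this equals $\type{\str{A}}{a_i}$.

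The only step requiring care is this head case. One must be sure that the head $a_i$ does not also lie in the body, since otherwise it would receive two conflicting $1$-types. This is guaranteed by the choice $\hD(a_1,a_2)>\max\{m_1,m_2\}$, which was observed right after the definition of $D$-structures. One must also check that the existence of distinct $e_1,e_2$ (guaranteed via (\ref{expphi2}) and Lemma~\ref{c:uconnected}) does not alter the $1$-types being compared; it does not, because Lemma~\ref{c:uconnected} provides an indistinguishable twin, which realizes the same $1$-type.
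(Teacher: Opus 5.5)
Your proof is correct and takes the same approach as the paper, which only says that the claim follows directly from the construction. You spell that out by tracing each element through the natural isomorphism with the partial doubling (twins realize their origin's $1$-type), using that singletons are always guarded in the join and that $e_i$ was chosen with $\type{\str{C}_{\hD(a_1,a_2)}}{e_i}=\type{\str{A}}{a_i}$; the named-element case is unnecessary, since $1$-types are defined only for unnamed elements, but it does no harm.
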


\begin{claim} \label{c:warmup3}
Every ${D}$-structure restricted to its body is isomorphic to
the restriction of $\str{A}$ to this body.
\end{claim}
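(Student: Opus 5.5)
We need to prove Claim \ref{c:warmup3}: every $D$-structure restricted to its body is isomorphic to the restriction of $\str{A}$ to that body.

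The plan is to unwind the definitions, factoring the isomorphism through $\str{C}_{\hD(a_1,a_2)}$. Fix a $D$-structure $\str{D}$ with heads among $a_1,a_2$ and body $C_j$, where $j=\hD(a_1,a_2)$. It is, via the natural isomorphism $\iota$, a partial doubling $\str{B}$ of $\str{C}_j$ duplicating a set $E\subseteq\{e_1,e_2\}$. The isomorphism sends each $e\in C_j$ to $(e,0)$ and sends the head $a_i$ to $(e_i,1)$. By construction $\iota$ is the identity on the body, reading $(e,0)$ as $e$, and on named elements it is compatible with the SNA renaming $(c,0)\mapsto c$.

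\textbf{Step 1.} Compare $\str{D}\restr C_j$ with $\str{C}_j$. In a partial doubling, point (i) of the definition says a tuple $((b_1,0),\ldots,(b_k,0))$ lies in $P^{\str{B}}$ iff $(b_1,\ldots,b_k)\in P^{\str{C}_j}$. Hence $\str{B}\restr(C_j\times\{0\})$ is isomorphic to $\str{C}_j$ via $b\mapsto(b,0)$. Constants are interpreted by $(c,0)$, and $0$-ary predicates are unchanged. Composing with $\iota$ shows that $\str{D}\restr C_j$ is isomorphic to $\str{C}_j$ via the identity on $C_j$.

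\textbf{Step 2.} Compare $\str{A}\restr C_j$ with $\str{C}_j$. Since $\str{A}$ is the join of the joinable sequence $\str{C}_1,\str{C}_2,\ldots$, we check the atoms with arguments from $C_j$ case by case.
\begin{itemize}
\item Ground atoms and $0$-ary predicates agree with $\str{C}_1$ by points (1) and (2) of the join definition. They therefore agree with $\str{C}_j$, because all the copies are isomorphic copies of $\str{C}$ sharing the named part.
\item For a tuple $\bar a$ of distinct unnamed elements of $C_j$ that is guarded in $\str{C}_j$, point (3) gives $\type{\str{A}}{\bar a}=\type{\str{C}_j}{\bar a}$. Conflicts with other copies are impossible, since the unnamed domains are pairwise disjoint.
\item The remaining atoms involve tuples unguarded in every $\str{C}_i$. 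These are set to false by point (4). They are also false in $\str{C}_j$, since a true atom would itself make the tuple guarded.
\end{itemize}
Atoms mixing named and unnamed elements are handled in the same way: they occur inside the type of the unnamed subtuple, because types include the constants.

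\textbf{Conclusion.} Combining Steps 1 and 2 yields the claimed isomorphism, which is in fact the identity on the body. The only delicate point, and hence the main obstacle, is the case in Step 2 of atoms over unguarded unnamed tuples. There we must make sure that both structures set these atoms to false, using the observation that any atom true in $\str{C}_j$ witnesses guardedness of its unnamed arguments.
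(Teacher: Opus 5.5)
Your proof is correct and follows the argument the paper intends. The paper only says the claim "follows directly from the construction," and your two steps are exactly that unwinding. First, a partial doubling restricted to its $0$-copy reproduces $\str{C}_{\hD(a_1,a_2)}$. Second, $\str{A}\restr C_{\hD(a_1,a_2)} = \str{C}_{\hD(a_1,a_2)}$, because the copies have pairwise disjoint unnamed domains, so no type transferred in the join can conflict with that copy and every unguarded atom is false on both sides.
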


\begin{claim} \label{c:warmup4}
$\str{D}^{\sss a_1\cdot}_{\hD(a_1,a_2)}$ and $\str{D}^{\sss \cdot a_2}_{\hD(a_1,a_2)}$
are induced substructures of $\str{D}^{\sss a_1a_2}_{\hD(a_1,a_2)}$, for all $a_1,a_2 \in \check{A}$.
\end{claim}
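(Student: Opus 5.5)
The statement is Claim~\ref{c:warmup4}: $\str{D}^{\sss a_1\cdot}_{\hD(a_1,a_2)}$ and $\str{D}^{\sss \cdot a_2}_{\hD(a_1,a_2)}$ are induced substructures of $\str{D}^{\sss a_1a_2}_{\hD(a_1,a_2)}$. I will unfold the three partial doublings and compare them through the natural isomorphisms fixed in their definition. Write $i=\hD(a_1,a_2)$ and let $\str{P}_{12}$, $\str{P}_1$, $\str{P}_2$ be the partial doublings of $\str{C}_i$ duplicating $\{e_1,e_2\}$, $\{e_1\}$ and $\{e_2\}$, respectively. The three $D$-structures are defined as the images of these under maps that send $(c,0)$ to $c$ for $c\in C_i$ and send $(e_j,1)$ to $a_j$. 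Importantly, the same elements $e_1,e_2$ are used in all three definitions.

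\textbf{First step.} I will show that $\str{P}_1$ is literally the induced substructure of $\str{P}_{12}$ on $(C_i\times\{0\})\cup\{(e_1,1)\}$. The domains match. For relations, membership of a tuple $((b_1,\ell_1),\ldots,(b_k,\ell_k))$ in $P$ is decided in both structures by the same condition, namely whether $(b_1,\ldots,b_k)\in P^{\str{C}_i}$. Constants are interpreted by $(c,0)$ in both.

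\textbf{Second step.} I will check that the isomorphism onto $\str{D}^{\sss a_1a_2}_{i}$, restricted to this subdomain, coincides with the isomorphism defining $\str{D}^{\sss a_1\cdot}_{i}$. Both send $(c,0)\mapsto c$ and $(e_1,1)\mapsto a_1$. Its image is therefore $\{a_1\}\cup C_i=D^{\sss a_1\cdot}_{i}$, and the induced structure on that image equals $\str{D}^{\sss a_1\cdot}_{i}$. The right-headed case is symmetric, with $e_2$ and $a_2$ in place of $e_1$ and $a_1$.

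\textbf{Main obstacle.} The only delicate point is well-definedness, which needs the heads to lie outside the body, so that $a_1\notin C_i$ and $a_1\neq a_2$. The first holds because $\hD(a_1,a_2)$ exceeds the indices of the components containing $a_1,a_2$. The second holds because the heads are distinct by assumption. It also needs $e_1\neq e_2$, which was arranged via Lemma~\ref{c:uconnected}. Given these, the argument is a direct unfolding of the definitions.
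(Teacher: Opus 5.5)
Your proposal is correct and matches the paper, which justifies this claim only by saying it follows directly from the construction of the $D$-structures. Your unfolding is exactly that verification: the partial doublings agree on relations and constants, and the defining isomorphisms restrict compatibly because the same $e_1,e_2$ are used for all three structures. You also correctly note what makes the structures well defined, namely $a_1,a_2\notin C_{\hD(a_1,a_2)}$ and $e_1\neq e_2$.
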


\begin{claim} \label{c:warmup5}
The pair of the heads of every double-headed ${D}$-structure $\str{D}$ is $\UU$-biquitous in $\str{D}$.
\end{claim}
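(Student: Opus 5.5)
The statement is the last of the four claims, Claim~\ref{c:warmup5}: the two heads of every double-headed $D$-structure form a $\UU$-biquitous pair in it. The plan is to read this off the definition of partial doublings, via the natural isomorphism that fixes the double-headed structure.

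First, I will fix distinct $a_1,a_2 \in \check{A}$ and set $i:=\hD(a_1,a_2)$. Then $\str{D}:=\str{D}^{\sss a_1a_2}_{i}$ is, by definition, isomorphic to the partial doubling $\str{B}$ of $\str{C}_i$ duplicating $\{e_1,e_2\}$. Here $e_1,e_2 \in C_i$ are distinct, $\type{\str{C}_i}{e_j}=\type{\str{A}}{a_j}$, and $(e_1,e_2),(e_2,e_1)\in \UU^{\str{C}_i}$. The choice of $e_1,e_2$ is justified by conjunct~(\ref{expphi2}) together with Lemma~\ref{c:uconnected}. The isomorphism identifies each $c \in C_i$ with $(c,0)$, and each head $a_j$ with the twin $(e_j,1)$.

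Second, I will compute in $\str{B}$ directly. By clause (i) of the definition of partial doublings, a tuple of elements $(b_1,\ell_1),\ldots$ lies in $P^{\str{B}}$ exactly when the tuple of origins $(b_1,\ldots)$ lies in $P^{\str{C}_i}$. Applying this with $P=\UU$ to the tuples $((e_1,1),(e_2,1))$ and $((e_2,1),(e_1,1))$ reduces membership to $(e_1,e_2),(e_2,e_1)\in\UU^{\str{C}_i}$, which holds by the choice of $e_1,e_2$. For the diagonal pairs $((e_j,1),(e_j,1))$ I need $(e_j,e_j)\in\UU^{\str{C}_i}$. This follows because $\type{\str{C}_i}{e_j}\in\AAA$ and the $1$-types in $\AAA$ are $\UU$-biquitous, so they contain $\UU(x,x)$.

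Finally, I transfer these facts along the isomorphism to obtain $(a_1,a_2),(a_2,a_1),(a_1,a_1),(a_2,a_2)\in \UU^{\str{D}}$, which is exactly the claim. I do not expect a real obstacle. The only point needing care is that $e_1\neq e_2$, so that the heads map to distinct twins and the ``pair'' is genuine. Even if $e_1=e_2$ were chosen, the argument would still go through by origins, but Lemma~\ref{c:uconnected} guarantees distinctness anyway.
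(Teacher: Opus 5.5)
Your proof is correct and is the paper's own (one-line) justification made explicit: the heads are identified with the twins $(e_1,1),(e_2,1)$ in the partial doubling, so they inherit their $\UU$-connections from the pattern pair $e_1,e_2$, which conjunct~(\ref{expphi2}) and Lemma~\ref{c:uconnected} make $\UU$-biquitous; your check of the diagonal pairs via the $\UU$-biquitous $1$-types in $\AAA$ is a detail the paper leaves implicit. One small correction to your closing aside: if $e_1=e_2$, the partial doubling would contain only one twin, so no isomorphism could send the two distinct heads to it and $\str{D}$ would not be defined at all. Distinctness of $e_1,e_2$ is therefore genuinely needed for the construction, not merely to make the pair genuine.
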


Let us also make the following useful observation.
\begin{claim} \label{c:warmup6}
\begin{enumerate}[(i)]
\item
Every single-headed ${D}$-structure is joinable with $\str{A}$. 
\item 
If $a_1,a_2 \in \check{A}$ is a pair of distinct elements that is non-$\UU$-biquitous in $\str{A}$,
then also $\str{D}_{\hD(a_1,a_2)}^{a_1a_2}$ is joinable with $\str{A}$.
\end{enumerate}
\end{claim}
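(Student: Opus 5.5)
To establish Claim~\ref{c:warmup6}, we check the two conditions of joinability directly. Condition (i) is immediate. Every $D$-structure is a partial doubling of some $\str{C}_{\hD(a_1,a_2)}$ with the twin of $e_i$ renamed to $a_i$. By the definition of partial doublings, its named part, and its interpretation of $0$-ary predicates, coincide with those of $\str{C}_{\hD(a_1,a_2)}$. These in turn coincide with those of $\str{A}$, by the definition of the join of the isomorphic copies $\str{C}_j$.

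The substance is condition (ii). Let $\str{D}$ be a $D$-structure with body $C_n$, $n=\hD(a_1,a_2)$. The intersection of the unnamed domains of $\str{D}$ and $\str{A}$ is the whole unnamed domain of $\str{D}$, which is $\check{C}_n$ plus the head(s). Take a tuple $\bar{b}$ of distinct unnamed elements of $\str{D}$; we do a case split.

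\emph{Case 1: $\bar{b}\subseteq \check{C}_n$.} By Claim~\ref{c:warmup3}, $\type{\str{D}}{\bar b}=\type{\str{A}}{\bar b}$.

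\emph{Case 2: $\bar b$ contains a head $a_i$ and at least one element $b\in\check C_n$.} Since $a_i\notin C_n$ (by the choice of $\hD$, $n$ exceeds the indices of the copies containing $a_1,a_2$), the pair $a_i,b$ lies in two different copies $\str{C}_{m_i}$ and $\str{C}_n$. Hence $\bar b$ is unguarded in $\str{A}$. Indeed, in the join no fact involves elements from two distinct copies except through named elements: by point (3) of the join definition, only tuples guarded in some $\str{C}_j$ get non-false atoms, and those lie inside one $C_j$. So the condition holds.

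\emph{Case 3: $\bar b$ consists of heads only.} If $\bar b=(a_i)$, Claim~\ref{c:warmup2} gives equality of $1$-types. The remaining subcase is $\bar b=(a_1,a_2)$ (or its reversal) in the double-headed structure, which is excluded in part (i). For part (ii), we use the hypothesis that $(a_1,a_2)$ is non-$\UU$-biquitous in $\str{A}$. Axiom (\ref{expphi3}) holds in $\str{A}$, since $\str{A}\models\varphi^\AAA$. It says that any guarded pair is $\UU$-connected both ways, and a pair $(a_1,a_2)$ whose $1$-types are $\UU$-reflexive (both lie in $\AAA$, and those types are $\UU$-biquitous) would then be $\UU$-biquitous. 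So $(a_1,a_2)$ is unguarded in $\str{A}$, and condition (ii) holds.

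The main obstacle is the careful justification in Case 2 that cross-copy tuples are unguarded in the join $\str{A}$. This needs an explicit appeal to the fact that the unnamed domains $\check{C}_j$ are pairwise disjoint and that point (4) of the join sets all unguarded atoms to false. Tuples mixing heads with named elements only are covered by Claim~\ref{c:warmup2}, since $1$-types already include the named part.
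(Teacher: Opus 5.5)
Your proof is correct and follows essentially the same case analysis as the paper. Body tuples go via Claim~\ref{c:warmup3}, single heads via Claim~\ref{c:warmup2}, and the pair of heads via conjunct (\ref{expphi3}) together with the non-$\UU$-biquity hypothesis; the only differences are that you explicitly check the named-part condition of joinability, and for head-plus-body tuples you read off unguardedness directly from the join (cross-copy atoms are false), whereas the paper first calls such pairs non-$\UU$-biquitous and then invokes (\ref{expphi3}).
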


\begin{proof}
Let $\str{D}$ be any ${D}$-structure. Let $\bar{a}$ be a tuple of unnamed elements in $D$ (and hence also in $A$). 
We need to show that it has the same type in both $\str{D}$ and $\str{A}$ or is unguarded in at least one of them.
If $\bar{a}$ is built out of a single element then this follows by Claim \ref{c:warmup2}. So assume $\bar{a}$ contains more than one element. 
If it is contained in the body of $\str{D}$ then its type in $\str{D}$ is the same as in $\str{A}$ by Claim \ref{c:warmup3}.
Assume $\str{D}$ is single-headed, $a$ is its head, and $\bar{a}$ contains $a$ and some element $b$ from the body.
As $a$ does not belong to the body, by the construction of $\str{A}$ we have that the pair $(a,b)$ is non-$\UU$-biquitous in $\str{A}$, so, by the conjunct (\ref{expphi3}) of $\varphi^\AAA$
it follows that $\bar{a}$ is unguarded in $\str{A}$.
Assume now that $\str{D}$ is double headed, $a_1,a_2$ are its heads, and $(a_1,a_2)$ is non-$\UU$-biquitous in $\str{A}$. If $\bar{a}$ contains at most one of 
the heads then it is contained in one of the single-headed substructures of $\str{D}$, and hence we can proceed as in the previous case.
If $\bar{a}$ contains both $a_1,a_2$ then, then recalling that $a_1,a_2$ is is non-$\UU$-biquitous in $\str{A}$, and invoking again the conjunct (\ref{expphi3})
we conclude that $\bar{a}$ is unguarded in $\str{A}$.
\end{proof}

\medskip \noindent
{\bf The join.}
For every pair $(a_1, a_2)$  of distinct elements from $\check{A}$ that is $\UU$-biquitous neither in $\str{A}$ nor in
any of the single-headed ${D}$-structures select one
of the double-headed structures $\str{D}^{\sss a_1a_2}_{\hD(a_1,a_2)}$, $\str{D}^{\sss a_2a_1}_{\hD(a_2,a_1)}$.
Let $\mathcal{J}$ be the set consisting of so selected double headed ${D}$-structures,
all single-headed ${D}$-structures and $\str{A}$.

\begin{claim} \label{warmupjoin}
$\mathcal{J}$ is joinable.
\end{claim}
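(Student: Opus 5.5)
We need to show that every pair of structures in $\mathcal{J}$ is joinable. All of them share the named part $\hat{A}$ and agree on it and on $0$-ary predicates: partial doublings do not change the named part, and each $\str{C}_i$ agrees with $\str{A}$ there. Pairs involving $\str{A}$ are covered by Claim~\ref{c:warmup6}. It remains to handle two $D$-structures $\str{D}, \str{D}'$ from $\mathcal{J}$. For them we must show that every tuple $\bar{b}$ of distinct unnamed elements in $D \cap D'$ either has the same type in both structures or is unguarded in one of them.

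The strategy is to reduce to comparison with $\str{A}$. If $\bar{b}$ is a single element, Claim~\ref{c:warmup2} gives $\type{\str{D}}{b}=\type{\str{A}}{b}=\type{\str{D}'}{b}$. If $\bar{b}$ lies entirely in the body of $\str{D}$ and in the body of $\str{D}'$, Claim~\ref{c:warmup3} applied twice gives equal types. So assume $|\bar{b}|\ge 2$ and $\bar{b}$ contains a head of one of them, say of $\str{D}$, whose body is $C_{\hD(a_1,a_2)}$.

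Case analysis on the body indices. Let $\str{D}$ have body $C_i$ and $\str{D}'$ have body $C_j$.

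\emph{Case $i=j$.} By injectivity of $\hD$, both structures come from the same pair $(a_1,a_2)$. They are then among $\str{D}^{a_1a_2}_i$, $\str{D}^{a_1\cdot}_i$ and $\str{D}^{\cdot a_2}_i$, and two distinct single-headed ones or one single- and one double-headed. By Claim~\ref{c:warmup4} the single-headed ones are induced substructures of the double-headed one, so common tuples have equal types. For two single-headed ones with different heads, the intersection is the body, which was already handled.

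\emph{Case $i\neq j$.} The bodies $C_i$ and $C_j$ are disjoint, so $D\cap D'$ consists of at most the heads of $\str{D}$ (which may lie in $C_j$ or be heads of $\str{D}'$) plus the heads of $\str{D}'$ lying in $C_i$. Claim~\ref{c:warmup1} should rule out mixed configurations. If a head $a$ of $\str{D}$ lies in the body $C_j$ of $\str{D}'$ then, since $j=\hD(b_1,b_2)$ with $b_1,b_2$ the heads of $\str{D}'$, no head of $\str{D}'$ can lie in $C_i$. Then a tuple $\bar{b}$ of size $\ge 2$ in the intersection consists of heads of $\str{D}$, or of a head of $\str{D}$ together with heads of $\str{D}'$. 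In $\str{D}$ such a tuple is either the double-head pair or unguarded. So the remaining situations are:
\begin{itemize}
\item $\bar{b}$ is the pair of heads of one structure. In the other structure this pair either lies in the body, where it has its $\str{A}$-type, which is non-$\UU$-biquitous and hence unguarded by (\ref{expphi3}) (otherwise no double-headed structure would have been selected), or it is not guarded there.
\item $\bar{b}$ contains a head together with a body element of the other structure, or heads of both structures. Heads of one structure that are not in its body are not guarded together with elements outside its domain. Conversely, a head–body pair in $\str{D}'$ is guarded in $\str{D}'$ only via the doubled twin, and in $\str{D}$ it is then a pair of heads.
\end{itemize}

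The main obstacle is this bookkeeping in the case $i\neq j$. The argument has to show that any tuple guarded in both structures is contained in a single body, is a single element, or is a head pair that $\str{A}$ already settles. The selection rule matters here: only one of $\str{D}^{a_1a_2}$, $\str{D}^{a_2a_1}$ is chosen, and none is chosen when a single-headed structure already makes the pair $\UU$-biquitous. This is what should prevent the same head pair from receiving conflicting guarded types in two different $D$-structures, and the plan is to prove exactly that with this rule together with (\ref{expphi3}).
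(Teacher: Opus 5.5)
\textbf{There is a gap: the decisive sub-cases are not proved.} Your split by body index is sound as far as it goes. The case $i=j$ is settled correctly by injectivity of $\hD$ and Claim~\ref{c:warmup4}. For $i\neq j$ you correctly use Claim~\ref{c:warmup1} to see that a common tuple with at least two elements must be exactly the head pair $\{a_1,a_2\}$ of a double-headed $\str{D}\in\mathcal{J}$. You also handle correctly the sub-case where both $a_1,a_2$ lie in the body of $\str{D}'$.

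The remaining sub-cases, which are the real content of the claim, are only announced (``the plan is to prove exactly that''). The alternative ``or it is not guarded there'' in your first bullet is asserted without justification. Indeed it is false if $\str{D}'$ is double-headed with the same two heads, since such head pairs are $\UU$-biquitous and hence guarded. Your remark that a head--body pair of $\str{D}'$ is guarded ``only via the doubled twin'' also points towards comparing types, which cannot work. The type of $(a_1,a_2)$ in $\str{D}$ is copied from a pattern pair in $C_i$, in $\str{D}'$ from a pair in $C_j$, and the two need not agree.

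What is missing is the following.
(a) Suppose in $\str{D}'$ one of $a_1,a_2$ is a head and the other lies in the body. Then $\{a_1,a_2\}$ lies in a single-headed $D$-structure: either $\str{D}'$ itself or, if $\str{D}'$ is double-headed, its single-headed induced substructure with that head (Claim~\ref{c:warmup4}). Because $\str{D}$ was selected, $(a_1,a_2)$ is not $\UU$-biquitous in that single-headed structure. Since it is induced, the pair is not $\UU$-biquitous in $\str{D}'$ either, so by (\ref{expphi3}) it is unguarded in $\str{D}'$.
(b) Suppose $a_1,a_2$ are both heads of $\str{D}'$. Then $\str{D}'$ is double-headed with head set $\{a_1,a_2\}$ but a body different from $C_i$. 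So $\str{D},\str{D}'$ are $\str{D}^{a_1a_2}_{\hD(a_1,a_2)}$ and $\str{D}^{a_2a_1}_{\hD(a_2,a_1)}$, and the selection admits only one of them into $\mathcal{J}$.

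These are precisely the paper's cases 2 and 3. The paper splits by the number of heads and reduces to single-headed induced substructures, rather than splitting by body index. With (a) and (b) added, your route gives a complete proof.
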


\begin{proof}
Every ${D}$-structure from $\mathcal{J}$ is joinable with $\str{A}$ by Claim \ref{c:warmup6}. Consider any pair $\str{D}_1$, $\str{D}_2$ 
of distinct ${D}$-structures from $\mathcal{J}$. Let $\bar{a}$ be a tuple of distinct unnamed elements contained in both of them. We need to
show that $\bar{a}$ has the same type in both $\str{D}_1, \str{D}_2$ or is unguarded in at least one of them.
We consider three cases:
\begin{enumerate}
	\item Both $\str{D}_1, \str{D}_2$ are single-headed. If they share the body, then, by the fact that $\hD$ is injective, invoking Claim \ref{c:warmup4}, they must be two single-headed substructures of some double-headed $D$-structure $\str{D}$. Then $\bar{a}$ is in the body of $\str{D}$ and $\type{\str{D}_1}{\bar{a}}=\type{\str{D}_2}{\bar{a}}=\type{\str{D}}{\bar{a}}$.
	If the bodies are different then either the heads are the same element $a$ and then $\bar{a}=a$, or the head $a$ of one structure is in the body of the other 
	and then again $\bar{a}=a$ (note that by Claim \ref{c:warmup1} it cannot happen that, simultanously, the head of one of the structures is in the body of the other and vice versa).
	In both cases $\bar{a}$ is just a single element, and its type in both $D$-structures is the same by Claim \ref{c:warmup2}.

	\item $\str{D}_1$ is single-headed, $\str{D}_2$ is double-headed (or vice versa). If $\bar{a}$ contains at most one head of $\str{D}_2$ then it is contained in one of the two single-headed
	substructure of $\str{D}_2$ and we can proceed as in the previous case. So, assume that $\bar{a}$ contains both heads of $\str{D}_2$. Then both these heads belong to
	the domain  the single-headed structure $\str{D}_1$, and by the choice of double-headed $D$-structures in $\mathcal{J}$, these heads must form a non-$\UU$-biquitous pair in $\str{D}_1$. 
	Using the conjunct (\ref{expphi3}) we infer that the whole $\bar{a}$ is unguarded in $\str{D}_1$.

		\item Both $\str{D}_1$ and $\str{D}_2$ are double-headed. If at least one of the heads of these $D$-structures is not in $\bar{a}$, say a head of $\str{D}_1$, then
	$\bar{a}$ is contained in one of the two single-headed substructure of $\str{D}_1$ and we may proceed as in the previous case. 
		So assume $\bar{a}$ contains both heads $a_1,a_2$ of $\str{D}_1$ and both heads $b_1, b_2$ of $\str{D}_2$. 
		In this case it must be that $\{a_1, a_2 \} = \{b_1, b_2 \}$. Assume to the contrary that this is not the
		case, that is one of the heads of one of the structures belongs to the body of the other structure,
		say $b_1 \in \check{C}_{\hD(a_1, a_2)}$. But then by Claim \ref{c:warmup1} none of $a_1, a_2$ can belong to the body of $\str{D}_2$.
		As they are in $\bar{a}$, they must belong to $D_2$, so they must be the heads of $D_2$. So one of $a_1, a_2$ must be equal to $b_1$, so must
		belong to the body of $\str{D}_1$. But a head of a $D$-structure cannot belong to its body. Contradiction.
		So, $\str{D}_1$, $\str{D}_2$ have the same heads, and since out of $\str{D}_{f(a_1, a_2)}^{a_1a_2}$
		$\str{D}_{f(a_2, a_1)}^{a_2a_1}$ we always choose to $\mathcal{J}$ at most one of them we conclude that actually $\str{D}_1=\str{D}_2$.\qedhere
		\end{enumerate}
\end{proof}

Let $\str{A}^*$ be the join of $\mathcal{J}$. By Lemma \ref{l:join} we have $\str{A}^* \models \varphi^\AAA$.
Let us see that $\str{A}^*$ is indeed $\UU$-biquitious.
By Claim \ref{c:warmup0} (and the fact that the guarded types from $\str{A}$ are retained in the join $\str{A}^*$) we need to consider only pairs of unnamed elements.
Consider a pair of distinct elements $a_1, a_2 \in \check{A}$. 
If they are $\UU$-biquitous in $\str{A}$ or in at least one of the single-headed $D$-structures then 
they are $\UU$-biquitous in $\str{A}^*$ since all the mentioned structures are members of $\mathcal{J}$.
Otherwise, one of the double-headed $D$-structures 
$\str{D}^{\sss a_1a_2}_{\hD(a_1,a_2)}$, $\str{D}^{\sss a_2a_1}_{\hD(a_2,a_1)}$
is a member of $\mathcal{J}$ and in this structure $(a_1,a_2)$ is $\UU$-biquitous by Claim \ref{c:warmup5}. So, 
$(a_1, a_2)$ becomes $\UU$-biquitous in $\str{A}^*$.
This finishes the proof of Lemma \ref{l:reductiontogf}.

\medskip\noindent
{\bf Decidability and complexity.}
Lemma \ref{l:reductiontogf} implies that the satisfiability problem for \GFU{}, and thus also of \TGF{}, is decidable: given a \GFU{} sentence $\varphi_0$ we convert it into weak normal form $\varphi$, 
and verify, for every subset of $1$-types $\AAA$ over the signature of $\varphi$, whether $\varphi^\AAA$, treated as a \GF{} sentence, has a model.

This lemma would be also quite convenient to establish the precise complexity of the problem in the absence of constants.
However, deriving the optimal complexity bounds for the case with constants, via the approach from this section, would be rather awkward, because
of the way in which constants are treated in the original papers in which the complexity of \GF{} is established \cite{Gra99}, \cite{DBLP:journals/jolli/CateF05}. This is why we presented
a tight upper complexity bound in Section \ref{s:compgfu}, working from scratch. It was more
convenient for extracting the optimal upper complexity bound for the case with constants and it also uniformly allowed us to establish the complexity 
in the case without constants and some further complexity results.

\section{The Finite Model Property for TGF (GFU)} \label{s:fmpgfu}
In this section, we establish the finite model property for \GFU{}, and hence also for \TGF{}: any satisfiable sentence in these logics has a finite model.
We also argue that our construction essentially realizes an optimal upper bound on the size of the produced models.

Let us fix a satisfiable \GFU{} sentence $\varphi$ in weak normal form. 
We want to construct a finite $\UU$-biquitous model for it. To this end, we will mimic the 
structure of the proof of the right-to-left direction
in Lemma \ref{l:reductiontogf} from Section \ref{s:altdec}.
There, given a non-$\UU$-biquitous finite model $\str{C}$ of $\varphi^\AAA$, for some $\AAA$,  we constructed an infinite join $\str{A}$ of 
its copies (having pairwise disjoint unnamed domains), then we defined a joinable collection of $D$-structures on some subdomains of $\str{A}$, and finally obtained an  infinite $\UU$-biquitous model of $\varphi$ as the join of this collection.

Here, we will proceed in a similar fashion: after some simple preparations, we will obtain a finite non-$\UU$-biquitous model $\str{C}$ of $\varphi^{\AAA}$, for some 
 $\AAA$, but instead of taking the join of an infinite number of copies of $\str{C}$, this time we will obtain the $\UU$-biquitous model $\str{A}$ by joining only a finite number of those. 
The role of the $D$-structures will be now played by analogous $S$-structures.
As previously, we will define a function $\hS$ which for a given pair of elements of $\check{A}$ will indicate a copy of $\str{C}$ (on a subdomain of $\str{A}$) to which this pair should be attached.
Due to the finiteness of the domain $A$, however, we will not be able to make $\hS$ injective (we recall that its counterpart $\hD$ in the previous section 
was injective). The challenge is to define it in such a way that 
the induced collection of $S$-structures will indeed be joinable. Crucially, we need a counterpart of Claim \ref{c:warmup1} to hold.

\subsection{Finite model construction}

\medskip\noindent{\bf Preparation of building blocks.}
Let $\str{M}$ be a $\UU$-biquitous, possibly infinite, model of $\varphi$. 
Let $\AAA$  be the set of $1$-types realized in $\str{M}$ by unnamed elements. Recall the sentence $\varphi^\AAA$ introduced in Section  \ref{s:decgfu},
obtained  by appending to $\varphi$ the conjuncts (\ref{expphi1})--(\ref{expphi3}), saying that  any model realizes precisely the $1$-types
from $\AAA$ and contains a rich collection of $\UU$-connections. Recall that $\varphi^\AAA$ is in weak normal form.
It is clear that $\varphi^\AAA$, treated as a \GF-sentence, is satisfiable---in fact, $\str{M}$ is a model of it.
Thus, by the finite model  property of \GF{}, 
it also has a finite model, which might, however, not be $\UU$-biquitous. We take such a finite model  $\str{C}_0 \models \varphi^\AAA$, 
and let $\str{C}$ be its doubling.
By Lemma \ref{l:doubling}, we have $\str{C} \models \varphi^\AAA$.

From this point on, the model $\str{C}_0$ will not play any role.
Instead, we will use yet another model $\str{B} \models \varphi^\AAA$,
obtained as the join of $5$ isomorphic copies of $\str{C}$, sharing the named domain and having pairwise disjoint unnamed domains. It remains a model of $\varphi^\AAA$ by Lemma \ref{l:join}.
For convenience, letting $\check{C} = \{1,\ldots,K\}$, we assume that  the unnamed domain of $\str{B}$ is $\check{B}:=\{1, \ldots, 5K \}$; and that for $m=0,\ldots, 4$ 
the structure on $\hat{B} \cup \{mK+1, \ldots, mK+K\}$ is isomorphic to $\str{C}$ (namely via the isomorphism that is the identity on $\hat{B}$ and otherwise maps any $mK+i$ to $i$).

\medskip\noindent
{\bf Initial finite non-$\UU$-biquitous model.} 
Let  $B_{k, \ell}=\hat{B} \cup (\check{B} \times \{ k \} \times \{ \ell \}$,  for $1 \le k, \ell \le 5K$,
and let every $\str{B}_{k, \ell}$ be the structure isomorphic to $\str{B}$, 
via the isomorphism working as the identity on $\hat{B}$ and as the natural projection $(b,k,\ell) \mapsto b$ on the unnamed elements.
As the $\str{B}_{k, \ell}$ have pairwise disjoint unnamed domains they are joinable. Let $\str{A}$ be their join.
Again, by Lemma \ref{l:join}, $\str{A} \models \varphi^\AAA$.

It is helpful to think that $\str{A}$ is organized in a square $5K \times 5K$ table of the unnamed parts (which from now on we sometimes refer to as $B$-\emph{cells}, each of which is further subdivided into $5$ $C$-cells), plus a single named part that is shared between all the joined substructures.  
For $m=0, \ldots, 4$, we denote by $\str{C}_{k,\ell,m}$ the structure $\str{B}_{k,\ell} \restr (\hat{B} \cup \{mK+1, \ldots, mK+K \} \times \{k \} \times \{\ell \})$.
We recall that each $\str{C}_{k,\ell,m}$ is isomorphic to $\str{C}$, so $\str{C}_{k, \ell, m} \models \varphi^\AAA$. 
 We will call the element $(s,k,l)$ the $s$-th element of $B_{k,l}$. 
When referring to the structures $\str{C}_{k,l,m}$ we will sometimes call them ${C}$-structures, and their domains $C$-domains. 
Similarly the $\str{B}_{k,l}$ will be called ${B}$-structures, and their domains $B$-domains.

Obviously, $\str{A}$ is not $\UU$-biquitous, but it is easy to see that only pairs 
consisting of two unnamed elements may be not connected by $\UU$. The following claim 
can be justified exactly as Claim \ref{c:warmup0}.
\begin{claim} \label{c:fmp0}
Any pair from $\hat{A} \times A$ is $\UU$-biqutious in $\str{A}$.
\end{claim}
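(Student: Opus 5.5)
The plan is to follow the justification of Claim~\ref{c:warmup0} almost verbatim. The key observation is that $\UU$-biquity of a pair involving a named element is a fact recorded in the $1$-type of the unnamed element, or in the common named part. Joins transfer both of these unchanged from the joined structures.

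First I treat pairs from $\hat{A}\times\hat{A}$. By the definition of the join, $\str{A}\restr\hat{A}=\str{B}_{1,1}\restr\hat{A}$, and this in turn equals $\str{C}\restr\hat{B}$. The structure $\str{C}$ is the doubling of $\str{C}_0\models\varphi^\AAA$. Every $1$-type in $\AAA$ is $\UU$-biquitous, and every such $1$-type contains the atoms $\UU(c,c')$ for all constants $c,c'$. By conjunct~(\ref{expphi2}), $\AAA$-types are actually realized in $\str{C}_0$. So as long as $\AAA\neq\emptyset$, the named part of $\str{C}_0$ (and hence of $\str{C}$ and of $\str{A}$) satisfies $\UU(c,c')$ for all $c,c'$. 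The case $\AAA=\emptyset$ needs a separate check. There, $\str{M}$ has no unnamed elements, so $\str{M}$ itself is a finite $\UU$-biquitous model. This degenerate case can be dispatched at the start and excluded from the construction.

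Second I treat pairs $(c,a)$ and $(a,c)$ with $c$ named and $a\in\check{A}$. The element $a$ lies in some $B_{k,\ell}$. Since $a$ is a single element, and singletons are guarded by definition, the join preserves its $1$-type: $\type{\str{A}}{a}=\type{\str{B}_{k,\ell}}{a}$. Because $\str{B}_{k,\ell}\models\varphi^\AAA$, conjunct~(\ref{expphi1}) gives $\type{\str{A}}{a}\in\AAA$. This $1$-type describes the substructure on $\{a\}\cup\hat{A}$, and it is $\UU$-biquitous. It therefore contains $\UU(x,c)$, $\UU(c,x)$ and $\UU(x,x)$ for every constant $c$, and so $(a,c),(c,a)\in\UU^{\str{A}}$. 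Together with the first step, this covers all of $\hat{A}\times A$ and, symmetrically, $A\times\hat{A}$.

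I expect no real obstacle here. The only points that need care are two. The first is checking that the definition of the join really transfers $1$-types including atoms with constants; this holds because clause (3) of the join applies to length-one tuples, which are guarded by definition. The second is the degenerate case $\AAA=\emptyset$ just mentioned.
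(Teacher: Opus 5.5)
Your proof is correct and follows the paper's route. The paper justifies this claim exactly as Claim~\ref{c:warmup0}: every $1$-type realized by an unnamed element of $\str{A}$ is transferred by the join from some copy of $\str{C}$, so it lies in $\AAA$, is $\UU$-biquitous, and therefore contains $\UU(c,c')$, $\UU(c,x)$ and $\UU(x,c)$ for all constants $c,c'$. Your explicit treatment of named--named pairs and your exclusion of the degenerate case $\AAA=\emptyset$ add care the paper leaves implicit. In that case the named part of an arbitrary finite model $\str{C}_0$ of $\varphi^\AAA$ need not satisfy $\UU(c,c')$, but $\str{M}$ is already a finite $\UU$-biquitous model, so excluding it is both harmless and needed.
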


\medskip \noindent
{\bf The joining function}. We now define the function 
$\hS: \check{A}   \times \check{A} \rightarrow \{1, \ldots, 5K \} \times \{1, \ldots, 5K \} \times \{0, \ldots, 4 \}$ whose purpose is to indicate which 
$C$-cell a given pair of elements can be potentially attached to.

Let $a_1, a_2 \in \check{A}$ be a pair of distinct elements. Assume $a_i \in C_{k_i, \ell_i, m_i}$ and $a_i$ is the $s_i$-th element in its $B$-cell, for $i=1,2$.
Choose $m$ such that $C_{s_1, s_2, m}$ does not contain the $k_1$-, $k_2$-, $\ell_1$-, nor $\ell_2$-th element of $B_{s_1, s_2}$ (as there are $5$ $C$-cells in every $B$-cell, such $m$ exists due to the pigeon hole principle).
Set $\hS(a_1, a_2):=(s_1, s_2, m)$. 
Such choices of $m$ ensure the following property, a counterpart of Claim \ref{c:warmup1}.
\begin{claim}\label{c:fmp1}
For every $a_1, a_2 \in \check{A}$ and every $b_1 \in \check{C}_{\hS(a_1,a_2)}$
there is no $b_2 \in \check{A}$ such that $a_1$ or $a_2$ belongs to $C_{\hS(b_1, b_2)}$.
\end{claim}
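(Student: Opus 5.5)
The proof is a direct unfolding of the definition of $\hS$. Fix $a_1,a_2\in\check{A}$ and write $a_i\in C_{k_i,\ell_i,m_i}$, with $a_i$ the $s_i$-th element of its $B$-cell $B_{k_i,\ell_i}$, for $i=1,2$. By definition $\hS(a_1,a_2)=(s_1,s_2,m)$, so every $b_1\in\check{C}_{\hS(a_1,a_2)}$ lies in the $B$-cell $B_{s_1,s_2}$, say $b_1=(t_1,s_1,s_2)$. Thus $b_1$ is the $t_1$-th element of $B_{s_1,s_2}$. The choice of $m$ guarantees that $t_1\notin\{k_1,k_2,\ell_1,\ell_2\}$, because $C_{s_1,s_2,m}$ avoids the $k_1$-, $k_2$-, $\ell_1$- and $\ell_2$-th elements of $B_{s_1,s_2}$.

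Now take any $b_2\in\check{A}$, and let it be the $t_2$-th element of its $B$-cell. By the definition of $\hS$, the $C$-cell $C_{\hS(b_1,b_2)}$ has the form $C_{t_1,t_2,m'}$ for some $m'$, so it is contained in the $B$-cell $B_{t_1,t_2}$. Suppose, for contradiction, that $a_i\in C_{\hS(b_1,b_2)}$ for some $i\in\{1,2\}$. Since the $B$-cells have pairwise disjoint unnamed domains and $a_i$ is unnamed, $a_i$ would lie in $B_{t_1,t_2}$. But $a_i$ lies in $B_{k_i,\ell_i}$, so $(k_i,\ell_i)=(t_1,t_2)$, and in particular $t_1=k_i$. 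This contradicts $t_1\notin\{k_1,k_2\}$, which completes the argument.

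Strictly speaking, only the exclusion of $k_1,k_2$ is used here. The exclusion of $\ell_1,\ell_2$ presumably serves the symmetric variant, which arises because the pairs $(a_1,a_2)$ and $(a_2,a_1)$ are both considered in the later joinability argument. I expect no real obstacle. The only care needed is in the bookkeeping: one must keep distinct the index of an element within its $B$-cell and the coordinates $(k,\ell)$ of that cell, and one must note that unnamed elements of distinct $B$-cells are distinct, which holds by construction since $B_{k,\ell}=\hat{B}\cup(\check{B}\times\{k\}\times\{\ell\})$.
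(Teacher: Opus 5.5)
Your proof is correct and is essentially the paper's argument; the paper gives no proof beyond remarking that the choice of $m$ ensures the property, and your unfolding supplies exactly that. Since $b_1$ has index $t_1\notin\{k_1,k_2\}$ in its $B$-cell, every $C_{\hS(b_1,b_2)}$ lies in some $B_{t_1,\cdot}$, which cannot contain $a_1$ or $a_2$; you are also right that excluding $\ell_1,\ell_2$ is needed only for the symmetric variant (where $b_1$ is the \emph{second} argument of $\hS$), which the later joinability argument tacitly uses when a head sits in the second position, e.g.\ for right-headed $S$-structures.
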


\medskip\noindent
{\bf Pattern elements.} We now distinguish in each $C$-cell a pair of distinct elements whose connections with the rest
of this $C$-cell will serve as a template for attaching to this cell the non-$\UU$-biqutious pairs of elements (and making
them $\UU$-biquitous) assigned to this $C$-cell by the function $\hS$.  
For every $1\le k,\ell \le 5K$, let $\alpha^k=\type{\str{B}}{k}$ and $\alpha^\ell=\type{\str{B}}{\ell}$. 
Note that $\hS$ plans to join to a $C$-cell $C_{k,l,m}$ elements $a_1, a_2$ whose numbers in their $B$-cells are $k$ and $\ell$, respectively.
This means that their $1$-types are $\alpha^k$ and $\alpha^{\ell}$.

As $\str{C} \models \varphi^\AAA$, by formula (\ref{expphi2}) and Lemma \ref{c:uconnected}, there are distinct elements $e_1, e_2 \in C$ such that
$\type{\str{C}}{e_1}=\alpha^k$,  $\type{\str{C}}{e_2}=\alpha^{\ell}$, $(e_1, e_2), (e_2, e_1) \in \UU^\str{C}$, and if $\alpha^k=\alpha^\ell$ then 
$e_1$ and $e_2$ are indistinguishable in $\str{C}$.
We choose the pattern elements  $e_1^{k,\ell,m}$, $e_2^{k,\ell,m}$ in $\str{C}_{k,\ell,m}$ to be the corresponding copies of $e_1$ and $e_2$
in each of $\str{C}_{k,\ell,m}$.

\medskip \noindent{\bf Definition of $S$-structures.} 
For every pair of distinct elements $a_1, a_2 \in \check{A}$ we distinguish three subdomains of $\str{A}$ arising by 
adding one or both of $a_1, a_2$ to $C_{\hS(a_1,a_2)}$.  Formally:
\begin{itemize}
	\item $S^{\sss a_1a_2}_{\hS(a_1,a_2)}:=\{a_1, a_2\} \cup {C}_{\hS(a_1, a_2)}$ 
	\item $S^{\sss a_1\cdot}_{\hS(a_1,a_2)}:=\{a_1\} \cup {C}_{\hS(a_1, a_2)}$ 
	\item $S^{\sss \cdot a_2}_{\hS(a_1,a_2)}:=\{a_2\} \cup {C}_{\hS(a_1, a_2)}$
\end{itemize}

We define structures on the above subdomains as follows:
\begin{itemize}
	\item $\str{S}^{\sss a_1a_2}_{\hS(a_1,a_2)}$ is isomorphic to the partial doubling of $\str{C}_{\hS(a_1,a_2)}$ duplicating the set $\{ e_1^{\hS(a_1,a_2)}, e_2^{\hS(a_1,a_2)} \}$,
	\item $\str{S}^{\sss a_1\cdot}_{\hS(a_1,a_2)}$ is isomorphic to the partial doubling of $\str{C}_{\hS(a_1,a_2)}$ duplicating the singleton $\{ e_1^{\hS(a_1,a_2)} \}$, 
\item $\str{S}^{\sss \cdot a_2}_{\hS(a_1,a_2)}$ is isomorphic to the partial doubling of $\str{C}_{\hS(a_1,a_2)}$ duplicating the singleton $\{ e_2^{\hS(a_1,a_2)} \}$, 
\end{itemize}
via the natural isomorphism, making $a_i$ the ``twin'' of  $e_i^{\hS(a_1,a_2)}$ for $i=1,2$.

We transfer the terminology from $D$-structures: ${S}$-structures from the first bullet are called \emph{double-headed}, while those from the two other are \emph{single-headed}
(\emph{left-} and \emph{right-headed}, respectively).  The elements $a_1,a_2$ are called their \emph{heads} and the sets $C_{\hS(a_1,a_2)}$ are their \emph{bodies}.
Observe that, by the definition of $\hS$, for any ${S}$-structure, its heads do not belong to its body.
Note that the definition of left-headed (right-headed) $\str{S}$-structures is sound as it does not depend on
the choice of $a_2$ ($a_1$).
The following claim follows from Lemma \ref{l:doubling} (about partial doublings).
\begin{claim} 
Every $S$-structure is a model of $\varphi$.
\end{claim}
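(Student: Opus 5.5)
The claim says that every $S$-structure is a model of $\varphi$. The plan is to reduce it directly to Lemma~\ref{l:doubling}. By construction, each $S$-structure is isomorphic to a partial doubling of some $C$-structure $\str{C}_{\hS(a_1,a_2)}$. Satisfaction of sentences is invariant under isomorphism, provided the isomorphism respects the interpretation of constants. So it suffices to show that the relevant partial doubling is a model of $\varphi$.

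\textbf{Step 1: the $C$-cell satisfies $\varphi^\AAA$.} First I fix an arbitrary $S$-structure with body $C_{k,\ell,m}=C_{\hS(a_1,a_2)}$. By construction, $\str{C}_{k,\ell,m}$ is isomorphic to $\str{C}$ via a map that is the identity on the named part $\hat{B}$. Since $\str{C}\models\varphi^\AAA$, we get $\str{C}_{k,\ell,m}\models\varphi^\AAA$. The sentence $\varphi^\AAA$ is in weak normal form, because $\varphi$ is and the appended conjuncts (\ref{expphi1})--(\ref{expphi3}) are as well. It uses equality only in the permitted shapes $x=x$ and $x=c$, together with negations of these in (\ref{expphi2}); I need to check that these negations are harmless for the lemma.

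\textbf{Step 2: apply Lemma~\ref{l:doubling}.} The duplicated set is $\{e_1^{k,\ell,m},e_2^{k,\ell,m}\}$ or one of its singletons. These pattern elements are unnamed, so the set is a subset of $\check{C}_{k,\ell,m}$, as the lemma requires. The lemma then yields that the partial doubling is a model of $\varphi^\AAA$. Since $\varphi$ is a conjunct of $\varphi^\AAA$, it is in particular a model of $\varphi$.

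\textbf{Step 3: transfer along the natural isomorphism.} The natural isomorphism sends $a_i$ to the twin of $e_i^{\hS(a_1,a_2)}$, is the identity on the body, and is the identity on named elements. Constants are therefore interpreted identically under the SNA convention, and the $S$-structure satisfies $\varphi$ (indeed $\varphi^\AAA$).

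The only point needing care is the equality restriction in Lemma~\ref{l:doubling} as applied to the inequalities $x\neq c_i$ in (\ref{expphi2}). The lemma's proof handles any atom of the form $x=c$ inside $\psi$, positive or negated, since twins of unnamed elements remain distinct from all constants. If one prefers to avoid this, it is enough to apply the lemma to $\varphi$ itself, since the claim only concerns $\varphi$. I would do exactly that; the claim then follows immediately.
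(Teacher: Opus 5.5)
Your proposal is correct and follows the paper's own route: each $S$-structure is isomorphic, via the natural isomorphism that fixes the named part, to a partial doubling of the $C$-cell $\str{C}_{\hS(a_1,a_2)}\cong\str{C}\models\varphi^\AAA$ that duplicates unnamed pattern elements, so Lemma~\ref{l:doubling} applies directly. Your fallback of applying the lemma to the conjunct $\varphi$ alone is sound, though it is not needed, since negated atoms $x=c$ as in (\ref{expphi2}) are already permitted by the lemma.
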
 

\medskip\noindent
{\bf Basic properties of $S$-structures.}
We now state a series of claims concerning properties of the $\str{S}$-structures
that  will be used in a moment to prove the correctness of our finite model construction.
 The next four claims correspond to Claims \ref{c:warmup2}-\ref{c:warmup5} and
follow straightforwardly from the construction of ${S}$-structures. 
In particular, the crucial Claim \ref{c:fmp5} is true since the connections between the two heads in $\str{S}$ are obtained from the connections of their respective twins which form a $\UU$-biquitous pair of elements by construction.  

\begin{claim} \label{c:fmp2}
Let $a$ be an element in any $\str{S}$-structure $\str{S}$. Then $\type{\str{S}}{a}=\type{\str{A}}{a}$.
\end{claim}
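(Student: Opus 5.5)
The claim (Claim~\ref{c:fmp2}) says that every element of an $S$-structure keeps the $1$-type it has in $\str{A}$. I would split into two cases: $a$ lies in the body of $\str{S}$, or $a$ is one of its heads. Throughout I use two facts from the definitions. First, a partial doubling $\str{D}$ of a structure $\str{E}$ gives every element the $1$-type of its origin: atoms over $(a',i)$ and constants hold in $\str{D}$ iff they hold for $a'$ in $\str{E}$, and constants are interpreted by $(c,0)$. Second, $\str{A}$ is the join of the $\str{B}_{k,\ell}$, so by point (3) of the join definition, which applies to singletons since these are always guarded, the $1$-type of any unnamed element of $\str{A}$ equals its $1$-type in the $B$-structure, hence in the $C$-structure, containing it.

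\emph{Body case.} Let $a\in C_{\hS(a_1,a_2)}$. Then $a$ is (identified with) an untwinned copy $(a,0)$ in the partial doubling of $\str{C}_{\hS(a_1,a_2)}$. By the first fact, $\type{\str{S}}{a}=\type{\str{C}_{\hS(a_1,a_2)}}{a}$. By the second fact, this equals $\type{\str{A}}{a}$. If $a$ is named, the claim is immediate, since all these structures agree on the named part.

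\emph{Head case.} Let $a=a_i$ be a head, and write $\hS(a_1,a_2)=(s_1,s_2,m)$. Then $a_i$ is the twin of $e_i^{s_1,s_2,m}$, so $\type{\str{S}}{a_i}=\type{\str{C}_{s_1,s_2,m}}{e_i^{s_1,s_2,m}}$. By the choice of pattern elements, $e_1$ realizes $\alpha^{s_1}$ and $e_2$ realizes $\alpha^{s_2}$ in $\str{C}$, and $\str{C}_{s_1,s_2,m}$ is isomorphic to $\str{C}$ via a map that is the identity on the named part. So $\type{\str{S}}{a_i}=\alpha^{s_i}=\type{\str{B}}{s_i}$.

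It remains to show $\type{\str{A}}{a_i}=\type{\str{B}}{s_i}$. By definition of $\hS$, $a_i$ is the $s_i$-th element of its $B$-cell, that is, $a_i=(s_i,k_i,\ell_i)$. The structure $\str{B}_{k_i,\ell_i}$ is isomorphic to $\str{B}$ via the projection $(b,k,\ell)\mapsto b$, which fixes the named part. Combining this with the second fact gives $\type{\str{A}}{a_i}=\type{\str{B}_{k_i,\ell_i}}{a_i}=\type{\str{B}}{s_i}$.

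The only real obstacle is bookkeeping. The pattern types $\alpha^k$ are defined via $\str{B}$, while the pattern elements are chosen in $\str{C}$, so one must check that the relevant elements of $\str{C}$ indeed have those $1$-types. This holds by the choice made with (\ref{expphi2}), whose realizations in $\str{C}$ exist because $\str{C}\models\varphi^\AAA$ and $\alpha^{k}\in\AAA$ by (\ref{expphi1}). One must also check that $1$-types include the atoms with constants, and these are preserved because every isomorphism and doubling used fixes the named part.
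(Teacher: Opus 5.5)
Your proof is correct and is the routine verification the paper intends; the paper gives no separate argument and only says the claim follows straightforwardly from the construction of $S$-structures. Your argument unpacks that: body elements keep their $1$-types under partial doubling, and a head $a_i=(s_i,k_i,\ell_i)$ inherits the type $\alpha^{s_i}$ of its pattern twin, which is exactly $\type{\str{A}}{a_i}$.
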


\begin{claim} \label{c:fmp3}
Every $\str{S}$-structure restricted to its body is isomorphic to
the restriction of $\str{A}$ to this body.
\end{claim}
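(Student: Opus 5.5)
The statement to prove is Claim~\ref{c:fmp3}: every $\str{S}$-structure restricted to its body is isomorphic to the restriction of $\str{A}$ to that body. I will unfold the definitions and check that the identity map on the body is an isomorphism.

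Fix an $\str{S}$-structure $\str{S}$ with body $C_{\hS(a_1,a_2)}$, and write $\hS(a_1,a_2)=(k,\ell,m)$. By definition, $\str{S}$ is isomorphic to the partial doubling of $\str{C}_{k,\ell,m}$ that duplicates a set $E\subseteq\{e_1^{k,\ell,m},e_2^{k,\ell,m}\}$. The isomorphism sends $(e,0)\mapsto e$ for every $e\in C_{k,\ell,m}$, and sends the twin $(e_i,1)$ to the head $a_i$. Its restriction to $(C_{k,\ell,m}\setminus E)\times\{0\}\cup E\times\{0\}$ is the bijection $(e,0)\mapsto e$ onto the body.

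The first step is to observe that, in the partial doubling, a tuple $((e_1,0),\ldots,(e_r,0))$ belongs to $P$ iff $(e_1,\ldots,e_r)\in P^{\str{C}_{k,\ell,m}}$, and constants are interpreted by $(c,0)$. Hence the restriction of the doubling to the $0$-layer is isomorphic to $\str{C}_{k,\ell,m}$ via the projection. Consequently, $\str{S}\restr C_{k,\ell,m}=\str{C}_{k,\ell,m}$.

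The second step is to show that $\str{A}\restr C_{k,\ell,m}=\str{C}_{k,\ell,m}$. Here $\str{C}_{k,\ell,m}$ is defined as $\str{B}_{k,\ell}\restr(\hat B\cup\ldots)$. The body is contained in the domain of $\str{B}_{k,\ell}$, and the $\str{B}_{k,\ell}$'s have pairwise disjoint unnamed domains. So every tuple of distinct unnamed elements of the body is either guarded in $\str{B}_{k,\ell}$, in which case the join copies its type from $\str{B}_{k,\ell}$, or it is unguarded in every $\str{B}_{k',\ell'}$. Only $\str{B}_{k,\ell}$ is relevant, since the tuple lies in no other $B$-cell. In the unguarded case the join sets all remaining atoms to false, and these atoms are also false in $\str{B}_{k,\ell}$: any true atom would guard the tuple. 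The named part, the $0$-ary predicates, and the $1$-types agree by clauses (1)--(3) of the join definition.

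The only point needing care is tuples that mix named and unnamed elements. These are covered because types of unnamed tuples include the constants; the paper's $\ell$-types are taken over the full set of constants. So an atom mentioning constants and body elements is decided by the type of the unnamed sub-tuple, which the previous step handled. No real obstacle is expected; the proof is a bookkeeping check of the join definition.
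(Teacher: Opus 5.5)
Your proof is correct and takes the same route as the paper, which only says the claim follows directly from the construction of $S$-structures; your argument is that unfolding. The $0$-layer of a partial doubling reproduces $\str{C}_{\hS(a_1,a_2)}$, and $\str{C}_{\hS(a_1,a_2)}$ is an induced substructure of $\str{A}$ because $\str{A}$ is a join of copies of $\str{B}$ with pairwise disjoint unnamed domains, so the identity on the body is the required isomorphism.
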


\begin{claim} \label{c:fmp4}
$\str{S}^{\sss a_1\cdot}_{\hS(a_1,a_2)}$ and $\str{S}^{\sss \cdot a_2}_{\hS(a_1,a_2)}$
are substructures of $\str{S}^{\sss a_1a_2}_{\hS(a_1,a_2)}$, for all $a_1,a_2 \in \check{A}$.
\end{claim}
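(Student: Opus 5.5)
The statement to prove is Claim~\ref{c:fmp4}: $\str{S}^{\sss a_1\cdot}_{\hS(a_1,a_2)}$ and $\str{S}^{\sss \cdot a_2}_{\hS(a_1,a_2)}$ are (induced) substructures of $\str{S}^{\sss a_1a_2}_{\hS(a_1,a_2)}$. The plan is to unfold the definition of partial doubling and compare the three structures atom by atom. Fix distinct $a_1,a_2\in\check{A}$ and write $\str{C}'=\str{C}_{\hS(a_1,a_2)}$, $e_i=e_i^{\hS(a_1,a_2)}$.

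First I check domains. We have $S^{\sss a_1\cdot}_{\hS(a_1,a_2)}=\{a_1\}\cup C'$, which is a subset of $S^{\sss a_1a_2}_{\hS(a_1,a_2)}=\{a_1,a_2\}\cup C'$. The heads lie outside the body by the choice of $\hS$, and $a_1\neq a_2$. Hence the double-headed domain is in natural bijection with the doubling domain $(C'\setminus\{e_1,e_2\})\times\{0\}\cup\{e_1,e_2\}\times\{0,1\}$, with $a_i\mapsto(e_i,1)$. Under this bijection the single-headed domain corresponds to the subset $(C'\setminus\{e_1\})\times\{0\}\cup\{e_1\}\times\{0,1\}$, i.e., exactly $\{a_1\}\cup C'$.

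Next I check relations. In a partial doubling, a tuple $((c_1,i_1),\dots,(c_k,i_k))$ is in $P$ iff its projection $(c_1,\dots,c_k)$ is in $P^{\str{C}'}$, and constants are interpreted by $(c,0)$ in both structures. Now take a tuple over $\{a_1\}\cup C'$. In $\str{S}^{\sss a_1a_2}$ it is in $P$ iff its projection (replacing $a_1$ by $e_1$) lies in $P^{\str{C}'}$. In $\str{S}^{\sss a_1\cdot}$ the same tuple is in $P$ under exactly the same condition, because the natural isomorphism again sends $a_1$ to the twin $(e_1,1)$ and the body elements to $(c,0)$. So the restriction of $\str{S}^{\sss a_1a_2}$ to $\{a_1\}\cup C'$ coincides with $\str{S}^{\sss a_1\cdot}$. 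Nullary predicates and the named part agree trivially. The right-headed case is symmetric.

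I expect the only delicate point to be ensuring that the identifications are coherent. Concretely, the isomorphism defining each $S$-structure must be the identity on the body $C'$ and send the head to the twin, so that the two "natural isomorphisms" agree on their common domain. This is guaranteed by the phrasing "via the natural isomorphism, making $a_i$ the twin of $e_i$", so no real obstacle remains.
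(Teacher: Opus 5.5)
Your proposal is correct and takes the same route as the paper, which only remarks that this claim follows straightforwardly from the construction of $S$-structures. Your unfolding of the partial-doubling definition is the routine verification behind that remark: the heads lie outside the body, relations are interpreted via projection, and constants sit at $(c,0)$. Your coherence worry is harmless in any case, since swapping a twin pair $(e_i,0),(e_i,1)$ is an automorphism of the doubling.
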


\begin{claim} \label{c:fmp5}
The pair of the heads of every double-headed ${S}$-structure $\str{S}$ is $\UU$-biquitous in $\str{S}$.
\end{claim}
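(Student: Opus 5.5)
Claim~\ref{c:fmp5} follows by unwinding the definition of the double-headed $S$-structure. Fix distinct $a_1,a_2\in\check{A}$ and write $(k,\ell,m)=\hS(a_1,a_2)$ and $e_i=e_i^{k,\ell,m}$. By definition, $\str{S}:=\str{S}^{\sss a_1a_2}_{\hS(a_1,a_2)}$ is the image of the partial doubling $\str{P}$ of $\str{C}_{k,\ell,m}$ duplicating $\{e_1,e_2\}$. The isomorphism sends $a_i$ to the twin $(e_i,1)$ of $(e_i,0)$. So it suffices to show that in $\str{P}$ the pair $\big((e_1,1),(e_2,1)\big)$ is $\UU$-biquitous, i.e.\ that both $\UU((e_1,1),(e_2,1))$ and $\UU((e_2,1),(e_1,1))$ hold, together with the reflexive atoms $\UU((e_i,1),(e_i,1))$.

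The key step is the defining clause of partial doublings: a tuple of pairs is in $P^{\str{P}}$ iff the tuple of first components is in $P^{\str{C}_{k,\ell,m}}$. Hence $((e_1,1),(e_2,1))\in\UU^{\str{P}}$ iff $(e_1,e_2)\in\UU^{\str{C}_{k,\ell,m}}$, and symmetrically for the reversed pair. The pattern elements were chosen, via formula~(\ref{expphi2}) in $\str{C}\models\varphi^\AAA$ and Lemma~\ref{c:uconnected}, so that $(e_1,e_2),(e_2,e_1)\in\UU^{\str{C}}$. The copy $\str{C}_{k,\ell,m}$ is isomorphic to $\str{C}$ via an isomorphism carrying the pattern elements to the copies of $e_1,e_2$, so the same atoms hold in $\str{C}_{k,\ell,m}$. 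For the reflexive atoms, $\type{\str{C}}{e_i}\in\AAA$ is a $\UU$-biquitous $1$-type, so $\UU(e_i,e_i)$ holds, and it transfers to $(e_i,1)$ in the same way.

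The only point needing care is whether the heads are genuinely distinct elements, as opposed to both being twins of the same $e$. This is guaranteed because $e_1\neq e_2$ was explicitly required in the choice of pattern elements, again using Lemma~\ref{c:uconnected} when $\alpha^k=\alpha^\ell$. The isomorphism therefore maps $a_1,a_2$ to distinct twins, and the argument above applies to them. There is no real obstacle beyond this bookkeeping.
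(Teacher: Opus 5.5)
Your proposal is correct and follows essentially the same route as the paper. The paper justifies Claim~\ref{c:fmp5} in one line: the connections between the two heads are copied, via the partial doubling, from those of their twins $e_1^{\hS(a_1,a_2)}, e_2^{\hS(a_1,a_2)}$, which form a $\UU$-biquitous pair by the choice of pattern elements. You unwind exactly this argument, and you also (correctly) check the reflexive $\UU$-atoms and the distinctness of $e_1,e_2$.
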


The following observation can be proved literally exactly as Claim \ref{c:warmup6}. 

\begin{claim} \label{c:fmp6}
\begin{enumerate}[(i)]
\item Every single-headed $\str{S}$-structure is joinable with $\str{A}$. 
\item If $(a_1,a_2) \in \check{A}$ is a pair of distinct elements that is non-$\UU$-biquitous in $\str{A}$,
then also $\str{S}_{\hS(a_1,a_2)}^{a_1a_2}$ is joinable with $\str{A}$.
\end{enumerate}
\end{claim}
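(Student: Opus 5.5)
The plan is to follow the proof of Claim~\ref{c:warmup6} verbatim, replacing $D$-structures by $S$-structures and invoking Claims~\ref{c:fmp2}--\ref{c:fmp3} in place of Claims~\ref{c:warmup2}--\ref{c:warmup3}. Fix an $S$-structure $\str{S}$ of the relevant kind. Condition~(i) of joinability is immediate: partial doublings never touch named elements, so $\str{S}\restr\hat{A}$ and $\str{A}\restr\hat{A}$ both coincide with $\str{C}\restr\hat{C}$, and both structures agree on $0$-ary predicates.

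For condition~(ii), take a tuple $\bar{a}$ of distinct unnamed elements of $S$ (hence of $A$). I will show that its types in $\str{S}$ and $\str{A}$ coincide or that it is unguarded in $\str{A}$. I split into cases.

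\emph{Single element.} If $\bar{a}$ is a single element, its types agree by Claim~\ref{c:fmp2}.

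\emph{Inside the body.} If $\bar{a}$ lies in the body $C_{\hS(a_1,a_2)}$, its types agree by Claim~\ref{c:fmp3}.

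\emph{Single-headed $\str{S}$, tuple meets the head.} Let $a$ be the head and suppose $\bar{a}$ contains $a$ together with some body element $b$. By the definition of $\hS$, the head $a$ does not lie in the $C$-cell forming the body. Hence $a$ and $b$ come from different copies of $\str{C}$ joined in $\str{A}$, and the join sets every atom spanning two different copies to false. So $(a,b)$ is non-$\UU$-biquitous in $\str{A}$, and conjunct~(\ref{expphi3}) of $\varphi^\AAA$, which holds in $\str{A}$, shows that no atom of $\str{A}$ contains both $a$ and $b$. Thus $\bar{a}$ is unguarded in $\str{A}$.

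\emph{Double-headed $\str{S}$.} Let $a_1,a_2$ be the heads, with $(a_1,a_2)$ non-$\UU$-biquitous in $\str{A}$. If $\bar{a}$ omits one head, it lies in a single-headed substructure by Claim~\ref{c:fmp4}, and the previous cases apply. If $\bar{a}$ contains both heads, then (\ref{expphi3}) again shows $\bar{a}$ is unguarded in $\str{A}$.

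The only point requiring care, and the step I expect to be the main obstacle, is justifying that head-body pairs are non-$\UU$-biquitous in $\str{A}$. For this I must observe that $\str{A}$ is the join of the $C$-cells themselves (joins are associative in the relevant sense, since $\str{B}$ is a join of copies of $\str{C}$). Consequently, any two unnamed elements from distinct $C$-cells share no true atom in $\str{A}$. Together with the fact that the head never lies in its own body, this gives what is needed.
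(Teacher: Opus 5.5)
Your proposal is correct and takes the paper's route. The paper proves this claim literally as Claim~\ref{c:warmup6}, with the same case split (single element, tuple inside the body, head plus body element, both heads), using Claims~\ref{c:fmp2}--\ref{c:fmp4} and conjunct~(\ref{expphi3}). Your two additions are right and fill in details the paper leaves implicit: you check the named-part condition of joinability, and you note that $\str{A}$ is in effect a join of the $C$-cells and a head never lies in its own body, so head--body pairs share no atom in $\str{A}$ even when they sit in the same $B$-cell.
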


\medskip\noindent
{\bf The join.}
For every pair $a_1, a_2 \in \check{A}$ of distinct elements that is $\UU$-biquitous neither in $\str{A}$ nor in
any of the single-headed $\str{S}$-structures select one
of the double-headed structures $\str{S}^{\sss a_1a_2}_{\hS(a_1,a_2)}$, $\str{S}^{\sss a_2a_1}_{\hS(a_2,a_1)}$.
Let $\mathcal{K}$ be the set consisting of so selected double headed $\str{S}$-structures,
all single-headed $\str{S}$-structures\footnote{In the conference version of this paper the finite model construction was presented in a slightly different way, but, essentially, it
could be seen as a process of joining $\str{A}$ successively with some double-headed $S$-structures. That construction had a flaw, since when executed in ``wrong'' order the process 
could lead to some conflicts (as not all pairs of double-headed structures are joinable). Here we repair this flaw by joining with $\str{A}$ all single-headed structures and
only a safely selected subset of the double-headed $S$-structues.\label{note:glitch}}
 and $\str{A}$. 
The following claim is a counterpart of Claim \ref{warmupjoin}. 
Its proof is similar, but we note that the $S$-structures overlap among each other in a less trivial way than the $D$-structures did in the infinite-model construction in Section~\ref{s:altdec}.
In this proof, we need to take special care in the first subcase.

\begin{claim} \label{c:fmpjoin}
 $\mathcal{K}$ is joinable.
\end{claim}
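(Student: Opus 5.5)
We mirror the proof of Claim~\ref{warmupjoin}, replacing $D$-structures by $S$-structures and Claim~\ref{c:warmup1} by Claim~\ref{c:fmp1}. Joinability of each $S$-structure in $\mathcal{K}$ with $\str{A}$ is Claim~\ref{c:fmp6}. Named parts and $0$-ary predicates agree by construction, since partial doublings keep the named part of $\str{C}_{k,\ell,m}$, which coincides with that of $\str{A}$. So it remains to take two distinct $S$-structures $\str{S}_1,\str{S}_2\in\mathcal{K}$ and a tuple $\bar{a}$ of distinct unnamed elements of $S_1\cap S_2$. We must show that $\bar{a}$ has the same type in both, or is unguarded in one of them. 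Singletons are handled by Claim~\ref{c:fmp2}. Tuples inside a single body are handled by Claim~\ref{c:fmp3}, which gives agreement with $\str{A}$ and hence between the two structures. We then split into the same three cases as before.

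\emph{Both single-headed.} Here lies the main difficulty, because $\hS$ is not injective: many pairs are sent to the same $C$-cell. Suppose first that the bodies coincide, say $C_{k,\ell,m}$. Then every head attached to this cell as a left head is an $s$-th element of its $B$-cell with $s=k$, hence has $1$-type $\alpha^k$, and is made the twin of $e_1^{k,\ell,m}$. Likewise right heads have type $\alpha^\ell$ and are twins of $e_2^{k,\ell,m}$. Consider first two left-headed structures with heads $a\ne a'$. The intersection $\bar{a}$ consists of body elements only, since $a'$ is not in the body of the first structure and $a$ is not in the body of the second. So this case is handled by Claim~\ref{c:fmp3}. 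Next, a left-headed and a right-headed structure on the same body with heads $a\neq a'$ intersect again only in the body. If the heads coincide, $a=a'$, then $a$ is a twin of $e_1$ in one structure and of $e_2$ in the other, so types of tuples $(a,\bar{b})$ with $\bar{b}$ from the body could differ. Here we use the choice of pattern elements. Both $1$-types then equal $\alpha^k=\alpha^\ell$, and in that case $e_1,e_2$ were chosen indistinguishable in $\str{C}$. Therefore the partial doublings duplicating $e_1$ or $e_2$ are isomorphic over the body, and the types agree. We expect this subcase to be the crux and will spell it out carefully. If instead the bodies differ, then $\bar{a}$ can contain at most a head of one structure lying in the body of the other, together with body elements. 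Claim~\ref{c:fmp1} forbids the mutual situation, where a head of each lies in the other's body. Two different $C$-bodies are disjoint, so $\bar{a}$ consists only of heads. A common head is a single element, and the case of one structure's head lying in the other's body also gives a single element, so Claim~\ref{c:fmp2} applies.

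\emph{One single-headed, one double-headed.} This goes as in the warm-up. If $\bar{a}$ misses one head of the double-headed structure $\str{S}_2$, it lies in a single-headed substructure of $\str{S}_2$ by Claim~\ref{c:fmp4}, and we reduce to the previous case. Otherwise both heads of $\str{S}_2$ lie in $S_1$. By the selection rule for $\mathcal{K}$, they form a non-$\UU$-biquitous pair in $\str{S}_1$, so by conjunct (\ref{expphi3}) the tuple $\bar{a}$ is unguarded in $\str{S}_1$.

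\emph{Both double-headed.} Reduce as before unless $\bar{a}$ contains all four heads. Using Claim~\ref{c:fmp1} and the fact that heads never lie in their own body, we derive as in the warm-up that the two structures have the same head set $\{a_1,a_2\}$. Since $\mathcal{K}$ contains at most one of $\str{S}^{a_1a_2}_{\hS(a_1,a_2)}$ and $\str{S}^{a_2a_1}_{\hS(a_2,a_1)}$, we get $\str{S}_1=\str{S}_2$.
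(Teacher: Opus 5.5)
Your proposal is correct and follows the paper's proof essentially step by step. You use the same three-case split and the same reductions from the warm-up via Claim~\ref{c:fmp1}, Claim~\ref{c:fmp4} and the selection rule for double-headed structures; the one new subcase (a common head attached to the same $C$-cell once as left head and once as right head) is resolved, exactly as in the paper, by the indistinguishability of the pattern elements $e_1,e_2$ when $\alpha^k=\alpha^\ell$. You leave one point implicit that the paper states: two left-headed (or two right-headed) structures with the same head and the same body coincide, so they never form a pair of distinct members of $\mathcal{K}$.
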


\begin{proof} 
Every ${S}$-structure from $\mathcal{K}$ is joinable with $\str{A}$ by Claim \ref{c:warmup6}. Consider any pair $\str{S}_1$, $\str{S}_2$ 
of distinct ${S}$-structures from $\mathcal{K}$. Let $\bar{a}$ be a tuple of distinct unnamed elements contained in both of them. We need to
show that $\bar{a}$ has the same type in both $\str{S}_1, \str{S}_2$ or is unguarded in at least one of them.
We consider three cases:
\begin{enumerate}
	\item Both $\str{S}_1, \str{S}_2$ are single-headed. If they share the body and have different heads then $\bar{a}$ is contained in this shared body
	and its type is identical in both structures by Claim \ref{c:fmp3}. 
	
	If they share the body but have the same head then they are of the form (a) $\str{S}_{\hS(a,b_1)}^{a\cdot}$, $\str{S}_{\hS(a,b_2)}^{a \cdot}$, 
	(b)	$\str{S}_{\hS(b_1,a)}^{\cdot a}$, $\str{S}_{\hS(b_2, a)}^{ \cdot a}$, or (c) 	$\str{S}_{\hS(a,b_1)}^{a \cdot}$, $\str{S}_{\hS(b_2,a)}^{\cdot a}$ for some $a, b_1,b_2$. 
	In subcases (a) and (b), we obtain $\hS(a,b_1)=\hS(a,b_2)$ ($\hS(b_1,a)=\hS(b_2,a)$) and as already observed both structures are the same as their definition do not depend on the choice of
	$b_i$. In subcase (c), we obtain $\hS(a,b_1)=\hS(b_2,a)=(k,k,m)$
	for $k$ being (the same) number of elements $a, b_1,b_2$ in their $B$-cells, and some $m$. In the first of the considered $S$-structures,  $a$ is the twin of $e_1^{\hS(a,b_1)}$ and in the other it is the twin of $e_2^{\hS(a, b_1)}$. However, both the pattern elements $e_i^{\hS(a_1,a_2)}$
			are indistinguishable in $\str{C}_{\hS(a,b_1)}$ in this case, so again the considered $\str{S}$-structures are in fact identical.		
	
		If the bodies of $\str{S}_1, \str{S}_2$ are different then either the heads are the same element $a$ and then $\bar{a}=a$, or the head $a$ of one structure is in the body of the other 
	and then again $\bar{a}=a$ (note that by Claim \ref{c:fmp1} it cannot happen that the head of one of the structures is in the body of the other and, simultaneously, vice versa).
	In both cases $\bar{a}$ is just a single element, and its type in both $S$-structures is the same by Claim \ref{c:fmp2}.

	\item  $\str{S}_1$ is single-headed, $\str{S}_2$ is double-headed (or vice versa). We reason precisely as in the case of $D$-structures.

	\item Both $\str{S}_1$ and $\str{S}_2$ are double-headed. 
	We reason precisely as in the case of $D$-structures: 

				\end{enumerate}
\end{proof}

Let $\str{A}^\dagger$ be the join of $\mathcal{K}$.  By  Lemma \ref{l:join} we get that $\str{A}^\dagger \models \varphi$.
We argue that it is $\UU$-biquitous exactly as in the case of the structure $\str{A}^*$ from Section \ref{s:decgfu}.

\subsection{Size of models}

We now  estimate the size of finite models that can be produced by means of our construction. 
First, let us recall the result on the size of minimal finite models in pure \GF{}  \cite{BGO14}. 

\begin{theorem}(\cite{BGO14}) \label{t:gfsize}
	\GF{} (with constants and equalities) has the finite model property: Every satisfiable sentence has a model of size bounded doubly exponentially in its
	length. More specifically, the size of a minimal model of a weak normal form sentence is bounded exponentially in the size of its signature and doubly exponentially in
	the maximal arity of this signature.
\end{theorem}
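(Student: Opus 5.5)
Since this is a result imported from \cite{BGO14}, the plan is to reconstruct its proof along the lines of Bárány, Gottlob and Otto rather than derive it from the machinery of this paper. First, bring the sentence into weak normal form, which Proposition~\ref{prop:normalization} allows at polynomial cost without raising the maximal arity. Then impose the SNA via Proposition~\ref{prop:SNA}, so that constants become fixed named elements. Given an arbitrary model, take its guarded unravelling: a model whose unnamed part is a tree of guarded bags, each carrying a guarded type over the named part together with at most $r$ unnamed elements, where $r$ is the maximal arity. The number of such types is the quantity that drives the bound: it is exponential in $|\sigma|$ and doubly exponential in $r$, since a type must fix the truth value of roughly $|\sigma|\cdot(r+|\hat{A}|)^{r}$ atoms. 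Without constants this is $2^{|\sigma|\cdot r^{r}}$. With constants the named part, which all bags share, only shifts the count by the ground facts.

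Next, collapse the unravelled model to a finite ``witness pattern''. Identify bags with the same guarded type and the same overlap profile with their parent. This yields a finite set $\mathcal{T}$ of bag-types, each equipped with witness bag-types for every $\mathbf{E}$-conjunct. It is essentially a mosaic in the sense of Definition~\ref{def:mosaic}, but without condition (E) and without $\UU$. The $\mathbf{A}$-conjuncts are checked locally inside bags. The $\mathbf{E}$-conjuncts are satisfied if every bag realizing a guard is linked, along its overlap, to a witnessing bag. Unfolding this pattern naively only gives the infinite tree model again.

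The main step, and the expected obstacle, is to fold the pattern into a finite structure without creating new guarded tuples, because such tuples would be subject to the $\mathbf{A}$-conjuncts. Following \cite{BGO14}, I would take a finite hypergraph realizing the pattern, in which new hyperedges are glued along shared elements, and replace it by a weakly acyclic (conformal, $N$-acyclic) finite cover. The point of such a cover is that every guarded tuple of the result already lies within a single original bag, so the $\mathbf{A}$-conjuncts still hold and all required witnesses survive.

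The size analysis comes last: the cover construction of \cite{BGO14} inflates the size by at most an exponential in the number of bag-types and in $r$. Combining this with the count of bag-types gives an overall bound exponential in $|\sigma|$ and doubly exponential in $r$, and hence doubly exponential in the length of the sentence. The delicate part is to keep the cover's blow-up at a single exponential in the number of types, and not the naive tower one gets from iterating Herwig-style extension arguments, which is what yielded the triply exponential bound in \cite{Gra99}.
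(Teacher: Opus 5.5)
Your overall route (tree-like unravelling, collapse to a finite pattern of guarded types, then a finite cover in which every guarded tuple sits inside one bag) is the route the paper imports from \cite{BGO14}. The quantitative step fails, however, and that step is the whole content of the second part of the statement. You let the cover inflate the pattern by ``a single exponential in the number of types''. By your own count there are $T=2^{\Theta(|\sigma|(r+m)^{r})}$ bag-types, with $r$ the maximal arity and $m$ the number of constants. A blow-up of $2^{\mathrm{poly}(T)}$ therefore gives models of size doubly exponential in $|\sigma|$ and triply exponential in $r$. Since $r$ can be linear in the length of the sentence, this does not even give the doubly exponential bound of the first part. The sentence ``combining this with the count of bag-types gives an overall bound exponential in $|\sigma|$ and doubly exponential in $r$'' is a non sequitur.

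What is actually needed, and what the paper extracts from \cite{BGO14}, is a \emph{polynomial} blow-up. The finite model has size $|\str{J}|^{\mathcal{O}(r)}$ for an invariant $\str{J}$ whose size is bounded by the number of guarded types, which gives $2^{\mathcal{O}(r\cdot|\sigma|(r+m)^{r})}$ in total. This polynomial bound comes from the polynomial-size cover construction of \cite{BGO14}. For the trivial query $\bot$ you only need the cover to introduce no new guarded tuples. $N$-acyclicity is the extra requirement for falsifying conjunctive queries and is not what controls the size here. So the ``delicate part'' you identify should be a blow-up polynomial in the number of types, of degree $\mathcal{O}(r)$, not a single exponential.

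A minor point: constants do not merely shift the count by ground facts. The mixed atoms over variables and constants are what put $(r+m)^r$ into the exponent, as your first estimate correctly reflects.
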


The doubly exponential bound from the first part of the above theorem follows immediately from the statement of Thm.~1.2 (page 21) in \cite{BGO14}, if the trivial query $\bot$ is taken.
The second part of the theorem is not explicitly stated in the original paper, but it is not difficult to derive it inspecting the original proof:
Indeed, given a sentence in normal form over a signature $\sigma$ with maximal arity $s$, the size of a minimal finite model 
 can be bounded by $|\str{J}|^{\mathcal{O}(s)}$ for some so-called 
\emph{invariant}	$\str{J}$ whose size is bounded by the number of guarded  types over $\sigma$. 
By the terminology in \cite{BGO14}, a type is \emph{guarded} if all its variables are contained in a single positive fact. 
	Further, the number of  $\ell$-types over $\sigma$ is $2^{\mathcal{O}(r(\ell+m)^s)}$, where $r$ is the number of relation symbols (the size of $\sigma$) and $m$ the number of
	constants in $\sigma$. Of course, every guarded type is an $\ell$-type for some $\ell \le s$, so the number of guarded types
	is $s\cdot2^{\mathcal{O}(r(s+m)^s)}$, which is  $2^{\mathcal{O}(r(s+m)^s)}$. Thus the size of the model is 
	$2^{\mathcal{O}(sr(s+m)^{s})}$.
	As each of $r$ and $m$ is bounded by $|\sigma|$, the claim follows.
	
\medskip

Our weak normal form is essentially the same as the normal form in \cite{BGO14},
so the estimation in the second part of the theorem applies to models of $\varphi^\AAA$.

Assume now we want to  construct a finite model of a satisfiable formula $\varphi_0$ over a signature $\sigma_0$. 
We first convert $\varphi_0$ into a weak normal form sentence $\varphi$ (over an extended signature $\sigma$) as guaranteed
by Proposition \ref{prop:normalization}.
We take an
arbitrary model $\str{M} \models \varphi$. Next we 
append to $\varphi$ the auxiliary conjuncts obtaining a sentence $\varphi^\AAA$, send $\varphi^\AAA$
to a blackbox producing a finite but generally non-$\UU$-biquitous model $\str{C}_0 \models \varphi^\AAA$, and form models $\str{C}$, $\str{B}$, $\str{A}$.
The last one is then turned into a $\UU$-biquitous model $\str{A}^\dagger$ without changing its domain. 
By Proposition \ref{prop:normalization}, $|\varphi|$ is polynomial in $|\varphi_0|$. $|\varphi^\AAA|$ is
exponential in $|\varphi|$ in the case without constants and doubly exponential in the case with constants.
This follows from the fact that $\varphi^\AAA$ contains the conjuncts (\ref{expphi2}) and (\ref{expphi3}) whose size
is polynomial in the number of $1$-types over $\sigma$. So, we need to be careful and avoid
estimating the size of $\str{A}$ only in terms of the length of $\varphi^\AAA$.

As the external blackbox procedure in the above approach we can use any procedure constructing a finite model of a satisfiable \GF{} formula. 
Let as  assume that we use the model produced by the construction from \cite{BGO14}. 
By Theorem~\ref{t:gfsize}, the size of the finite model we get is bounded exponentially by the size and doubly exponentially by the maximal arity
of the signature of $\varphi^\AAA$, which is the same as the signature of $\varphi$ and $\sigma$. As the size and the maximal arity of $\sigma$ are bounded by $|\varphi|$
which, by Proposition \ref{prop:normalization}, is polynomial in $|\varphi_0|$, eventually  our bound on the size of $\str{C}_0$ is doubly exponential in the size of the input formula $|\varphi_0|$.

Recall that $|C| = 2\cdot |C_0|$, $|B|=5\cdot |C| = 10\cdot |C_0|$, and  $|A_i|=|B|^2 \cdot |B|=(10\cdot |C_0|)^3$ which is still doubly
exponential in $|\varphi_0|$. Thus we get:

\smallskip
\begin{theorem} \label{t:fmpgfu}
	Every satisfiable \TGF{} (\GFU{}) formula $\varphi$  has a finite model of size bounded doubly exponentially by the length of $\varphi$.
\end{theorem}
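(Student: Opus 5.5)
The plan is to assemble the bound from the pieces already established in this section, tracing sizes through the pipeline $\varphi_0 \mapsto \varphi \mapsto \varphi^\AAA \mapsto \str{C}_0 \mapsto \str{C} \mapsto \str{B} \mapsto \str{A} \mapsto \str{A}^\dagger$.

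\textbf{Reduction to $\GFU$ under the SNA.} Given a satisfiable $\TGF$ formula, Proposition~\ref{prop:twoguarded} gives an equivalent $\GFU$ formula of polynomial size. A finite $\UU$-biquitous model of the $\GFU$ formula is then a finite model of the original $\TGF$ formula once $\UU$ is reinterpreted as $U$. Proposition~\ref{prop:SNA} lets us assume the SNA: a model of some $\varphi^\sim$ yields a model of $\varphi$ on the same domain. Proposition~\ref{prop:normalization} then produces a weak normal form sentence $\varphi$ of size polynomial in $|\varphi_0|$, with signature $\sigma$, such that every model of $\varphi$ is a model of $\varphi_0$. It therefore suffices to bound the size of a finite $\UU$-biquitous model of $\varphi$.

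\textbf{Construction and correctness.} We fix a $\UU$-biquitous model $\str{M}$ of $\varphi$ and take $\AAA$ to be the set of unnamed $1$-types realized in it. By Lemma~\ref{l:reductiontogf} (left-to-right direction), $\varphi^\AAA$ is satisfiable as a $\GF$ sentence. The construction of this section then goes through unchanged: $\str{C}_0$ is a finite model of $\varphi^\AAA$, $\str{C}$ is its doubling, $\str{B}$ joins five copies of $\str{C}$, $\str{A}$ is the $5K\times 5K$ table of copies of $\str{B}$, and $\str{A}^\dagger$ is the join of $\mathcal{K}$. By Claim~\ref{c:fmpjoin}, Lemma~\ref{l:join} and the $\UU$-biquity argument, $\str{A}^\dagger$ is a $\UU$-biquitous model of $\varphi$. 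Its domain is $\hat{B}$ together with $(5K)^3$ unnamed elements, where $K\le 2|C_0|$. Hence $|A^\dagger|$ is polynomial (cubic) in $|C_0|$ plus the number of constants.

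\textbf{Bounding $|C_0|$ (the main obstacle).} We cannot apply the first part of Theorem~\ref{t:gfsize} to $\varphi^\AAA$ directly, because $|\varphi^\AAA|$ may itself be doubly exponential in $|\varphi_0|$ when constants are present: conjuncts (\ref{expphi1})--(\ref{expphi3}) range over all $1$-types. Instead I would use the second part of Theorem~\ref{t:gfsize}, which bounds the size of a minimal model of a weak normal form sentence exponentially in the size of its signature and doubly exponentially in its maximal arity. The key observation is that $\varphi^\AAA$ is in weak normal form and uses exactly the signature $\sigma$ of $\varphi$: the added conjuncts introduce no new symbols, since $\alpha(x)$ is a quantifier-free conjunction over $\sigma$ and $\UU\in\sigma$. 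One point needs checking here. Conjunct (\ref{expphi2}) is a sentence $\exists xy(\ldots)$ rather than one of the shapes in Definition~\ref{def:weaknormalform}. It can be guarded by $\UU(x,y)$ and, if needed, written as a $0$-ary-guarded $\mathbf{E}$-form; this does not affect the signature-based bound. Since $|\sigma|$ and its maximal arity are at most $|\varphi|$, which is polynomial in $|\varphi_0|$, we get $|C_0|\le 2^{2^{\mathrm{poly}(|\varphi_0|)}}$.

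Plugging this bound into $|A^\dagger| = |\hat{B}| + (10|C_0|)^3$ keeps the total doubly exponential in $|\varphi_0|$, which proves Theorem~\ref{t:fmpgfu}.
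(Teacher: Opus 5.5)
Your proposal is correct and takes essentially the same route as the paper. You trace sizes through $\str{C}_0 \mapsto \str{C} \mapsto \str{B} \mapsto \str{A}^\dagger$ (cubic in $|C_0|$), and you avoid the blow-up of $|\varphi^\AAA|$ by using the signature-and-arity form of Theorem~\ref{t:gfsize}, after observing that $\varphi^\AAA$ uses the same signature as $\varphi$. Your extra bookkeeping (the SNA reduction, the TGF-to-GFU translation, and the shape of conjunct~(\ref{expphi2})) is harmless and only spells out what the paper leaves implicit.
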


\medskip
This bound is essentially optimal, since even in \GF{} without constants and equality one can construct a family of satisfiable sentences $\varphi_i$, each of them of length polynomial in
$i$, but having only models of size at least $2^{2^i}$. This is implicit in \cite{Gra99}.

The finite model property of \TGF{} (\GFU) implies that its finite satisfiability problem is equal to its satisfiability problem and
thus it is \TwoExpTime-complete in the absence of constants and \TwoNExpTime-complete with constants as shown in Section \ref{s:compgfu}.

\section{Related Work}
{\bf Prior results on similar logics.}
In terms of expressivity, the fragments $\TGF$ and $\gftimes$ are closely related to the fragment $\gfcross$
proposed by Bourhis, Morak, and Pieris \cite{DBLP:conf/ijcai/BourhisMP17}, which extends $\gf$ with
\emph{cross products} (allowing to capture statements like ``all
elephants are bigger than all mice'' as in
\cite{DBLP:conf/dlog/RudolphKH08}). Still, a closer inspection reveals the substantial difference that
$\gfcross$, inspired by the typical setting in database theory, imposes a separation
into a set of ground facts (the data) and a constant-free theory
(the schema) \cite{perscomm}. This significant restriction was not explicitly mentioned by the authors but becomes obvious after investigating the underlying constructions and proofs.
As a consequence of this restriction, $\gfcross$ is in fact subsumed
by the fragment $\mathcal{GF}{\mid}\mathcal{FO}^2$, studied before by Kazakov in his PhD thesis
\cite{Kazakov:06:Phd} more than a decade earlier, though by means of different techniques. Kazakov used a resolution-based procedure,
to show satisfiability in $\mathcal{GF}{\mid}\mathcal{FO}^2$ to
be in \twoexptime, and in \nexptime in case of bounded predicate
arities~\cite{Kazakov:06:Phd}. Instead of resolution, the proof of
the \twoexptime upper bound for $\gfcross$ by Bourhis et al.~\cite{DBLP:conf/ijcai/BourhisMP17} uses a reduction to
satisfiability in plain $\gf$. These prior results emphasize that the unrestricted
availability of constants is crucial for the \twonexptime-hardness of
full $\gftimes$ and $\TGF$, which is also reflected by our hardness proof.

\medskip\noindent
{\bf Transitive guards.}
An important, practically relevant and theoretically challenging modelling feature is \emph{transitivity} of a binary relation. Neither \FOt, nor \GF, and also not \TGF{} allow for axiomatising transitivity.  As a remedy, it has been suggested to provide a set of dedicated binary predicate names whose transitivity is ``hard-wired'' into the logic, that is, externally imposed by the semantics.
As it turned out, when doing so, one has to be very careful not to lose decidability. \FOt{} with transitive relations was shown undecidable \cite{GOR99}, 
and also the unrestricted use of transitive relations in \GF{} is known to lead to undecidability \cite{Gra99}. Actually, this even holds for \GFt{} ($=\,$\FOt$\,\cap\,$\GF), the two-variable guarded fragment~\cite{GMV99}.

One way out is to restrain the use of transitive relations so that they only are allowed to occur in guards. Indeed, satisfiability of \GF{} with \emph{with transitive guards},
\GFTG{} (with equality), was shown to be decidable and, in fact, \TwoExpTime-complete \cite{ST04}. More recently this result was lifted to \TGFTG{} (without equality) \cite{DBLP:conf/lpar/KieronskiM20}. Regarding the finite model case,
finite satisfiability of \GFtTG{} (with equality) was shown to be decidable and \twoexptime-complete \cite{KT18}. We note that \GFtTG{} does not have the FMP, hence
satisfiability and finite satisfiability are different problems.
In our own work~\cite{DBLP:conf/lics/KieronskiR21}, we first show that finite satisfiability of full \GFTG{} (with equality) is decidable and \TwoExpTime-complete and then use this result to 
show decidability and \twoexptime-completeness of finite satisfiability in \TGFTG{} (without equality). Actually, the proof of the latter goes along the same lines  as the FMP proof for \TGF{} in Section \ref{s:fmpgfu} of this paper, that is, essentially, by a reduction to \GFTG{}.\footnote{Unfortunately, the argument inherits the flaw mentioned in Footnote~\ref{note:glitch}, but can be fixed with a similar repair.}

All the above results concerning logics with TG assume the absence of constants. While \GFTG{} with equality and constants can be easily shown to be undecidable,\footnote{Using a constant and equality one can enforce a transitive relation $T$ to be true everywhere, that is to behave like $\UU$ in \GFU. This way we can simulate \TGF{} with equality%
.} we conjecture that in the absence of equality both \GFTG{} and \TGFTG{} with constants are decidable.

\medskip\noindent
{\bf One-dimensionality.} Let us call a first-order formula \emph{one-dimensional} if every maximal block of existential (or universal) quantifiers in it leaves at most one variable free.
For example every \FOt{} formula and the formula $\forall x \exists yz P(x,y,z)$ are one-dimensional but $\forall xy \exists z P(x,y,z)$ is not. 
The one-dimensional restriction of \TGF{}, denoted \TGF$_1$, has been studied by the first author of this paper \cite{Kie19} and turns out to be decidable and to have the finite (doubly exponential) model property even if equalities are allowed. The satisfiability problem remains \twonexptime-complete (both in the case with and without equality). The complexity for the variant without constants drops down, but, interestingly, it depends on the presence of equality: the satisfiability problem is \twoexptime-complete with equality and only \nexptime-complete without it. In the latter case, every satisfiable formula has a model of just single exponential size.

We note that the one-dimensionality restriction does not hinder  translations from modal/description logics, as they are inherently one-dimensional.
Thus, \TGF$_1$ embeds, e.g., the description logic $\mathcal{ALC}$ plus inverse roles ($\mathcal{I}$), nominals ($\mathcal{O}$), role hierarchies ($\mathcal{H}$), and any Boolean combination of roles (including their negations).

\medskip\noindent
{\bf Alternative FMP proof.} Our FMP proof for \TGF{} in Section \ref{s:fmpgfu} builds an additional level on top of a finite model construction for \GF{}. This makes the overall
construction quite intricate, since, even though we could in principle use any construction for \GF{}, the known ones (by Grädel~\cite{Gra99}; Bárány, Gottlob, and Otto~\cite{BGO14}; and a simplification of the latter in Pratt-Hartmann’s book~\cite{PH23})
are technically quite involved by themselves. A surprisingly simple FMP proof for \GF{} and \TGF{}, proceeding via the \emph{probabilistic method},  was recently discovered by Fiuk \cite{Fiu26}.
A (not very complicated) analysis shows that it also gives a doubly exponential upper bound on the size of constructed models. Fiuk also proposes a derandomisation of his probabilistic algorithm, relying on algebraic hash functions to simulate randomness. This gives an alternative \emph{constructive} proof
of the FMP for \TGF{}, although its analysis is no longer that simple.

\section{Conclusion}
In this article, we presented the triguarded fragment of $\fo$, a decidable fragment of first-order predicate logic which subsumes both the guarded and the two-variable fragment, while requiring a restricted use of equality. As exemplified in the introduction, we expect the fragment to be useful for modeling in knowledge representation and beyond.  

We determined the computational complexity of satisfiability checking in this fragment, both for the bounded and unbounded arity case (\textsc{NExpTime}-complete and \textsc{N2ExpTime}-complete, respectively). We discussed that diverse natural extensions of the fragment lead to undecidability. 

We then investigated complexity and expressive power of the triguarded fragment from the point of view of ontology-based data access, leading to \textsc{NP}-completeness in data complexity and the insight that as a Boolean query language over databases, the triguarded fragment has the same expressivity as disjunctive datalog.

Turning to the computational limitations of the triguarded fragment, we showed that slight increases of expressivity -- including a liberal use of equality or (further) relaxations of guardedness -- will lead to undecidability of satisfiability. Likewise, checking conjunctive query entailment from triguarded theories is undecidable. This apparent brittleness of decidability suggests that the triguarded fragment is already quite on the verge of decidability.
   
We concluded the paper by establishing that the triguarded fragment exhibits the finite model property, and proved a tight doubly exponential bound on the model size.   
 
\medskip

For future research, we see the following avenues: On the theoretical level, the accommodation of transitive guards still poses some open questions regarding decidability in the presence of constants. On a more general note, central logical properties of the triguarded fragment seem to be unknown, such as interpolation and definability.

On the more practical side, it might be worthwhile to investigate practically efficient algorithms (along the lines of tableaux methods or implementations of the (disjunctive) chase) for satisfiability checking. Experience in knowledge representation -- notably in the area of ontological reasoning -- has shown that very high worst-case complexities do not always preclude  implementations which are efficient in the context of practical use cases.

\begin{acks}
  Emanuel Kiero\'nski is supported by Polish National Science Centre
  grant No 2021/41/B/ST6/00996.  Sebastian Rudolph acknowledges
  support by the European Research Council through the ERC
  Consolidator Grant 771779 (DeciGUT) and the
  Wolfgang-Pauli-Institute. Mantas \v{S}imkus has been supported by
  the Austrian Science Fund (FWF) projects P30360 and P30873, and the
  Vienna Business Agency. 
We are also grateful to Pierre Bourhis, Michael Morak, and Andreas Pieris for clarifying some questions regarding their paper \cite{DBLP:conf/ijcai/BourhisMP17}.
\end{acks}

\bibliographystyle{ACM-Reference-Format}
\bibliography{mybib,references}


\begin{thebibliography}{37}


\ifx \showCODEN    \undefined \def \showCODEN     #1{\unskip}     \fi
\ifx \showDOI      \undefined \def \showDOI       #1{#1}\fi
\ifx \showISBNx    \undefined \def \showISBNx     #1{\unskip}     \fi
\ifx \showISBNxiii \undefined \def \showISBNxiii  #1{\unskip}     \fi
\ifx \showISSN     \undefined \def \showISSN      #1{\unskip}     \fi
\ifx \showLCCN     \undefined \def \showLCCN      #1{\unskip}     \fi
\ifx \shownote     \undefined \def \shownote      #1{#1}          \fi
\ifx \showarticletitle \undefined \def \showarticletitle #1{#1}   \fi
\ifx \showURL      \undefined \def \showURL       {\relax}        \fi
\providecommand\bibfield[2]{#2}
\providecommand\bibinfo[2]{#2}
\providecommand\natexlab[1]{#1}
\providecommand\showeprint[2][]{arXiv:#2}

\bibitem[Andr\'{e}ka et~al\mbox{.}(1998)]%
        {ABN98:GF}
\bibfield{author}{\bibinfo{person}{Hajnal Andr\'{e}ka}, \bibinfo{person}{Johan
  F. A.~K. van Benthem}, {and} \bibinfo{person}{Istv\'{a}n N\'{e}meti}.}
  \bibinfo{year}{1998}\natexlab{}.
\newblock \showarticletitle{Modal languages and bounded fragments of predicate
  logic}.
\newblock \bibinfo{journal}{\emph{J.\ of Philosophical Logic}}
  \bibinfo{volume}{27}, \bibinfo{number}{3} (\bibinfo{year}{1998}),
  \bibinfo{pages}{217--274}.
\newblock


\bibitem[Baader et~al\mbox{.}(2007)]%
        {dlhandbook}
\bibfield{editor}{\bibinfo{person}{Franz Baader}, \bibinfo{person}{Diego
  Calvanese}, \bibinfo{person}{Deborah McGuinness}, \bibinfo{person}{Daniele
  Nardi}, {and} \bibinfo{person}{Peter Patel-Schneider}} (Eds.).
  \bibinfo{year}{2007}\natexlab{}.
\newblock \bibinfo{booktitle}{\emph{The Description Logic Handbook: Theory,
  Implementation, and Applications} (\bibinfo{edition}{second} ed.)}.
\newblock \bibinfo{publisher}{Cambridge University Press}.
\newblock


\bibitem[Baader et~al\mbox{.}(2017)]%
        {DBLP:books/daglib/0041477}
\bibfield{author}{\bibinfo{person}{Franz Baader}, \bibinfo{person}{Ian
  Horrocks}, \bibinfo{person}{Carsten Lutz}, {and} \bibinfo{person}{Ulrike
  Sattler}.} \bibinfo{year}{2017}\natexlab{}.
\newblock \bibinfo{booktitle}{\emph{An Introduction to Description Logic}}.
\newblock \bibinfo{publisher}{Cambridge University Press}.
\newblock
\showISBNx{978-0-521-69542-8}


\bibitem[B{\'{a}}r{\'{a}}ny et~al\mbox{.}(2014)]%
        {BGO14}
\bibfield{author}{\bibinfo{person}{Vince B{\'{a}}r{\'{a}}ny},
  \bibinfo{person}{Georg Gottlob}, {and} \bibinfo{person}{Martin Otto}.}
  \bibinfo{year}{2014}\natexlab{}.
\newblock \showarticletitle{Querying the Guarded Fragment}.
\newblock \bibinfo{journal}{\emph{Logical Methods in Computer Science}}
  \bibinfo{volume}{10}, \bibinfo{number}{2} (\bibinfo{year}{2014}).
\newblock


\bibitem[B{\'{a}}r{\'{a}}ny et~al\mbox{.}(2015)]%
        {DBLP:journals/jacm/BaranyCS15}
\bibfield{author}{\bibinfo{person}{Vince B{\'{a}}r{\'{a}}ny},
  \bibinfo{person}{Balder ten Cate}, {and} \bibinfo{person}{Luc Segoufin}.}
  \bibinfo{year}{2015}\natexlab{}.
\newblock \showarticletitle{Guarded Negation}.
\newblock \bibinfo{journal}{\emph{J.\ of the ACM}} \bibinfo{volume}{62},
  \bibinfo{number}{3} (\bibinfo{year}{2015}), \bibinfo{pages}{22:1--22:26}.
\newblock


\bibitem[Bienvenu et~al\mbox{.}(2014)]%
        {BCL14}
\bibfield{author}{\bibinfo{person}{Meghyn Bienvenu}, \bibinfo{person}{Balder
  ten Cate}, \bibinfo{person}{Carsten Lutz}, {and} \bibinfo{person}{Frank
  Wolter}.} \bibinfo{year}{2014}\natexlab{}.
\newblock \showarticletitle{Ontology-Based Data Access: {A} Study through
  Disjunctive Datalog, CSP, and {MMSNP}}.
\newblock \bibinfo{journal}{\emph{{ACM} Trans. Database Syst.}}
  \bibinfo{volume}{39}, \bibinfo{number}{4} (\bibinfo{year}{2014}),
  \bibinfo{pages}{33:1--33:44}.
\newblock


\bibitem[Blackburn and Van~Benthem(2006)]%
        {blackburn:inria-00119856}
\bibfield{author}{\bibinfo{person}{Patrick Blackburn} {and}
  \bibinfo{person}{Johan Van~Benthem}.} \bibinfo{year}{2006}\natexlab{}.
\newblock \showarticletitle{{Modal logic: a Semantic Perspective}}.
\newblock In \bibinfo{booktitle}{\emph{{Handbook of Modal Logic}}},
  \bibfield{editor}{\bibinfo{person}{Frank~Wolter Patrick~Blackburn, Johan
  van~Benthem}} (Ed.). \bibinfo{publisher}{{Elsevier}}, \bibinfo{pages}{1--82}.
\newblock
\urldef\tempurl%
\url{https://hal.inria.fr/inria-00119856}
\showURL{%
\tempurl}


\bibitem[B\"{o}rger et~al\mbox{.}(1997)]%
        {BGG}
\bibfield{author}{\bibinfo{person}{Egon B\"{o}rger}, \bibinfo{person}{Erich
  Gr\"{a}del}, {and} \bibinfo{person}{Yuri Gurevich}.}
  \bibinfo{year}{1997}\natexlab{}.
\newblock \bibinfo{booktitle}{\emph{The Classical Decision Problem}}.
\newblock \bibinfo{publisher}{Springer}.
\newblock


\bibitem[Borgida(1996)]%
        {DBLP:journals/ai/Borgida96}
\bibfield{author}{\bibinfo{person}{Alexander Borgida}.}
  \bibinfo{year}{1996}\natexlab{}.
\newblock \showarticletitle{On the Relative Expressiveness of Description
  Logics and Predicate Logics}.
\newblock \bibinfo{journal}{\emph{Artif. Intell.}} \bibinfo{volume}{82},
  \bibinfo{number}{1-2} (\bibinfo{year}{1996}), \bibinfo{pages}{353--367}.
\newblock
\urldef\tempurl%
\url{https://doi.org/10.1016/0004-3702(96)00004-5}
\showDOI{\tempurl}


\bibitem[Bourhis et~al\mbox{.}({[n.\,d.]})]%
        {perscomm}
\bibfield{author}{\bibinfo{person}{Pierre Bourhis}, \bibinfo{person}{Michael
  Morak}, {and} \bibinfo{person}{Andreas Pieris}.}
  \bibinfo{year}{[n.\,d.]}\natexlab{}.
\newblock \bibinfo{title}{Personal {C}ommunication (23rd of {J}uly 2018)}.
\newblock
\newblock


\bibitem[Bourhis et~al\mbox{.}(2017)]%
        {DBLP:conf/ijcai/BourhisMP17}
\bibfield{author}{\bibinfo{person}{Pierre Bourhis}, \bibinfo{person}{Michael
  Morak}, {and} \bibinfo{person}{Andreas Pieris}.}
  \bibinfo{year}{2017}\natexlab{}.
\newblock \showarticletitle{Making Cross Products and Guarded Ontology
  Languages Compatible}. In \bibinfo{booktitle}{\emph{Proc.\,of {IJCAI} 2017}}.
\newblock
\urldef\tempurl%
\url{https://doi.org/10.24963/ijcai.2017/122}
\showDOI{\tempurl}


\bibitem[Fiuk(2026)]%
        {Fiu26}
\bibfield{author}{\bibinfo{person}{Oskar Fiuk}.}
  \bibinfo{year}{2026}\natexlab{}.
\newblock \showarticletitle{{Random Models and Guarded Logic}}. In
  \bibinfo{booktitle}{\emph{43rd International Symposium on Theoretical Aspects
  of Computer Science (STACS 2026)}} \emph{(\bibinfo{series}{Leibniz
  International Proceedings in Informatics (LIPIcs)},
  Vol.~\bibinfo{volume}{364})}. \bibinfo{publisher}{Schloss Dagstuhl --
  Leibniz-Zentrum f{\"u}r Informatik}, \bibinfo{address}{Dagstuhl, Germany},
  \bibinfo{pages}{37:1--37:21}.
\newblock
\showISBNx{978-3-95977-412-3}
\showISSN{1868-8969}
\urldef\tempurl%
\url{https://doi.org/10.4230/LIPIcs.STACS.2026.37}
\showDOI{\tempurl}


\bibitem[Ganzinger et~al\mbox{.}(1999)]%
        {GMV99}
\bibfield{author}{\bibinfo{person}{Harald Ganzinger},
  \bibinfo{person}{Christoph Meyer}, {and} \bibinfo{person}{Margus Veanes}.}
  \bibinfo{year}{1999}\natexlab{}.
\newblock \showarticletitle{The Two-Variable Guarded Fragment with Transitive
  Relations}. In \bibinfo{booktitle}{\emph{14th Annual IEEE Symposium on Logic
  in Computer Science, {LICS 1999}}}. \bibinfo{pages}{24--34}.
\newblock


\bibitem[Gr{\"{a}}del(1998)]%
        {DBLP:conf/dlog/Gradel98}
\bibfield{author}{\bibinfo{person}{Erich Gr{\"{a}}del}.}
  \bibinfo{year}{1998}\natexlab{}.
\newblock \showarticletitle{Description Logics and Guarded Fragments of First
  Order Logic}. In \bibinfo{booktitle}{\emph{Proc.\,of DL 1998}}.
\newblock


\bibitem[Gr{\"a}del(1999)]%
        {Gra99}
\bibfield{author}{\bibinfo{person}{Erich Gr{\"a}del}.}
  \bibinfo{year}{1999}\natexlab{}.
\newblock \showarticletitle{On The Restraining Power of Guards}.
\newblock \bibinfo{journal}{\emph{J. Symb. Log.}} \bibinfo{volume}{64},
  \bibinfo{number}{4} (\bibinfo{year}{1999}), \bibinfo{pages}{1719--1742}.
\newblock


\bibitem[Gr{\"{a}}del et~al\mbox{.}(1997)]%
        {DBLP:journals/bsl/GradelKV97}
\bibfield{author}{\bibinfo{person}{Erich Gr{\"{a}}del},
  \bibinfo{person}{Phokion~G. Kolaitis}, {and} \bibinfo{person}{Moshe~Y.
  Vardi}.} \bibinfo{year}{1997}\natexlab{}.
\newblock \showarticletitle{On the decision problem for two-variable
  first-order logic}.
\newblock \bibinfo{journal}{\emph{Bulletin of Symbolic Logic}}
  \bibinfo{volume}{3}, \bibinfo{number}{1} (\bibinfo{year}{1997}),
  \bibinfo{pages}{53--69}.
\newblock


\bibitem[Gr{\"a}del et~al\mbox{.}(1999)]%
        {GOR99}
\bibfield{author}{\bibinfo{person}{E. Gr{\"a}del}, \bibinfo{person}{M. Otto},
  {and} \bibinfo{person}{E. Rosen}.} \bibinfo{year}{1999}\natexlab{}.
\newblock \showarticletitle{Undecidability results on two-variable logics}.
\newblock \bibinfo{journal}{\emph{Archiv f{\"{u}}r Mathematische Logik und
  Grundlagenforschung}} \bibinfo{volume}{38}, \bibinfo{number}{4-5}
  (\bibinfo{year}{1999}), \bibinfo{pages}{313--354}.
\newblock


\bibitem[Kazakov(2006)]%
        {Kazakov:06:Phd}
\bibfield{author}{\bibinfo{person}{Yevgeny Kazakov}.}
  \bibinfo{year}{2006}\natexlab{}.
\newblock \emph{\bibinfo{title}{Saturation-Based Decision Procedures for
  Extensions of the Guarded Fragment}}.
\newblock \bibinfo{thesistype}{Ph.\,D. Dissertation}.
  \bibinfo{school}{Universit{\"a}t des Saarlandes},
  \bibinfo{address}{Saarbr{\"u}cken, Germany}.
\newblock


\bibitem[Kieronski(2019)]%
        {Kie19}
\bibfield{author}{\bibinfo{person}{Emanuel Kieronski}.}
  \bibinfo{year}{2019}\natexlab{}.
\newblock \showarticletitle{One-Dimensional Guarded Fragments}. In
  \bibinfo{booktitle}{\emph{44th International Symposium on Mathematical
  Foundations of Computer Science, {MFCS} 2019, Aachen, Germany, August 26-30,
  2019}} \emph{(\bibinfo{series}{LIPIcs}, Vol.~\bibinfo{volume}{138})}.
  \bibinfo{publisher}{Schloss Dagstuhl - Leibniz-Zentrum f{\"{u}}r Informatik},
  \bibinfo{pages}{16:1--16:14}.
\newblock
\urldef\tempurl%
\url{https://doi.org/10.4230/LIPICS.MFCS.2019.16}
\showDOI{\tempurl}


\bibitem[Kieronski and Malinowski(2020)]%
        {DBLP:conf/lpar/KieronskiM20}
\bibfield{author}{\bibinfo{person}{Emanuel Kieronski} {and}
  \bibinfo{person}{Adam Malinowski}.} \bibinfo{year}{2020}\natexlab{}.
\newblock \showarticletitle{The Triguarded Fragment with Transitivity}. In
  \bibinfo{booktitle}{\emph{Logic for Programming, Artificial Intelligence and
  Reasoning 2020}} \emph{(\bibinfo{series}{EPiC}, Vol.~\bibinfo{volume}{73})}.
  \bibinfo{publisher}{EasyChair}, \bibinfo{pages}{334--353}.
\newblock


\bibitem[Kieronski and Rudolph(2021)]%
        {DBLP:conf/lics/KieronskiR21}
\bibfield{author}{\bibinfo{person}{Emanuel Kieronski} {and}
  \bibinfo{person}{Sebastian Rudolph}.} \bibinfo{year}{2021}\natexlab{}.
\newblock \showarticletitle{Finite Model Theory of the Triguarded Fragment and
  Related Logics}. In \bibinfo{booktitle}{\emph{36th Annual {ACM/IEEE}
  Symposium on Logic in Computer Science, {LICS} 2021, Rome, Italy, June 29 -
  July 2, 2021}}. \bibinfo{publisher}{{IEEE}}, \bibinfo{pages}{1--13}.
\newblock
\urldef\tempurl%
\url{https://doi.org/10.1109/LICS52264.2021.9470734}
\showDOI{\tempurl}


\bibitem[Kiero\'{n}ski and Tendera(2018)]%
        {KT18}
\bibfield{author}{\bibinfo{person}{Emanuel Kiero\'{n}ski} {and}
  \bibinfo{person}{Lidia Tendera}.} \bibinfo{year}{2018}\natexlab{}.
\newblock \showarticletitle{Finite Satisfiability of the Two-Variable Guarded
  Fragment with Transitive Guards and Related Variants}.
\newblock \bibinfo{journal}{\emph{ACM Trans. Comput. Logic}}
  \bibinfo{volume}{19}, \bibinfo{number}{2} (\bibinfo{year}{2018}),
  \bibinfo{pages}{8:1--8:34}.
\newblock


\bibitem[Lewis(1979)]%
        {Lewis:Unsolvable}
\bibfield{author}{\bibinfo{person}{Harry~R. Lewis}.}
  \bibinfo{year}{1979}\natexlab{}.
\newblock \bibinfo{booktitle}{\emph{Unsolvable Classes of Quantificational
  Formulas}}.
\newblock \bibinfo{publisher}{Addison-Wesley}.
\newblock


\bibitem[Mortimer(1975)]%
        {DBLP:journals/mlq/Mortimer75}
\bibfield{author}{\bibinfo{person}{Michael Mortimer}.}
  \bibinfo{year}{1975}\natexlab{}.
\newblock \showarticletitle{On languages with two variables}.
\newblock \bibinfo{journal}{\emph{Math. Log. Q.}} \bibinfo{volume}{21},
  \bibinfo{number}{1} (\bibinfo{year}{1975}), \bibinfo{pages}{135--140}.
\newblock


\bibitem[Pratt-Hartmann(2005)]%
        {PH:C2complex}
\bibfield{author}{\bibinfo{person}{Ian Pratt-Hartmann}.}
  \bibinfo{year}{2005}\natexlab{}.
\newblock \showarticletitle{Complexity of the Two-Variable Fragment with
  Counting Quantifiers}.
\newblock \bibinfo{journal}{\emph{J.\ of Logic, Language and Information}}
  \bibinfo{volume}{14} (\bibinfo{year}{2005}), \bibinfo{pages}{369--395}.
\newblock
Issue 3.


\bibitem[Pratt-Hartmann(2023)]%
        {PH23}
\bibfield{author}{\bibinfo{person}{Ian Pratt-Hartmann}.}
  \bibinfo{year}{2023}\natexlab{}.
\newblock \bibinfo{booktitle}{\emph{Fragments of First-Order Logic}}.
\newblock \bibinfo{publisher}{Oxford University Press},
  \bibinfo{address}{United Kingdom}.
\newblock


\bibitem[Rosati(2007)]%
        {DBLP:conf/icdt/Rosati07}
\bibfield{author}{\bibinfo{person}{Riccardo Rosati}.}
  \bibinfo{year}{2007}\natexlab{}.
\newblock \showarticletitle{The Limits of Querying Ontologies}. In
  \bibinfo{booktitle}{\emph{Proc. 11th Int. Conf. Database Theory (ICDT'07)}}
  \emph{(\bibinfo{series}{LNCS}, Vol.~\bibinfo{volume}{4353})},
  \bibfield{editor}{\bibinfo{person}{Thomas Schwentick} {and}
  \bibinfo{person}{Dan Suciu}} (Eds.). \bibinfo{publisher}{Springer},
  \bibinfo{pages}{164--178}.
\newblock


\bibitem[Rudolph(2011)]%
        {rudolph2011fodl}
\bibfield{author}{\bibinfo{person}{Sebastian Rudolph}.}
  \bibinfo{year}{2011}\natexlab{}.
\newblock \showarticletitle{Foundations of Description Logics}.
\newblock In \bibinfo{booktitle}{\emph{Reasoning Web. Semantic Technologies for
  the Web of Data -- 7th International Summer School 2011}},
  \bibfield{editor}{\bibinfo{person}{Axel Polleres}, \bibinfo{person}{Claudia
  d'Amato}, \bibinfo{person}{Marcelo Arenas}, \bibinfo{person}{Siegfried
  Handschuh}, \bibinfo{person}{Paula Kroner}, \bibinfo{person}{Sascha
  Ossowski}, {and} \bibinfo{person}{Peter~F. Patel-Schneider}} (Eds.).
  \bibinfo{series}{LNCS}, Vol.~\bibinfo{volume}{6848}.
  \bibinfo{publisher}{Springer}, \bibinfo{pages}{76--136}.
\newblock


\bibitem[Rudolph et~al\mbox{.}(2008a)]%
        {DBLP:conf/dlog/RudolphKH08}
\bibfield{author}{\bibinfo{person}{Sebastian Rudolph}, \bibinfo{person}{Markus
  Kr{\"{o}}tzsch}, {and} \bibinfo{person}{Pascal Hitzler}.}
  \bibinfo{year}{2008}\natexlab{a}.
\newblock \showarticletitle{All Elephants are Bigger than All Mice}. In
  \bibinfo{booktitle}{\emph{Proc.\,of DL 2008}}.
\newblock


\bibitem[Rudolph et~al\mbox{.}(2008b)]%
        {RudolphKH08}
\bibfield{author}{\bibinfo{person}{Sebastian Rudolph}, \bibinfo{person}{Markus
  Kr{\"{o}}tzsch}, {and} \bibinfo{person}{Pascal Hitzler}.}
  \bibinfo{year}{2008}\natexlab{b}.
\newblock \showarticletitle{Cheap Boolean Role Constructors for Description
  Logics}. In \bibinfo{booktitle}{\emph{Logics in Artificial Intelligence, 11th
  European Conference, {JELIA} 2008, Dresden, Germany, September 28 - October
  1, 2008. Proceedings}} \emph{(\bibinfo{series}{Lecture Notes in Computer
  Science}, Vol.~\bibinfo{volume}{5293})},
  \bibfield{editor}{\bibinfo{person}{Steffen H{\"{o}}lldobler},
  \bibinfo{person}{Carsten Lutz}, {and} \bibinfo{person}{Heinrich Wansing}}
  (Eds.). \bibinfo{publisher}{Springer}, \bibinfo{pages}{362--374}.
\newblock
\urldef\tempurl%
\url{https://doi.org/10.1007/978-3-540-87803-2\_30}
\showDOI{\tempurl}


\bibitem[Rudolph and \v{S}imkus(2018)]%
        {RS18}
\bibfield{author}{\bibinfo{person}{Sebastian Rudolph} {and}
  \bibinfo{person}{Mantas \v{S}imkus}.} \bibinfo{year}{2018}\natexlab{}.
\newblock \showarticletitle{The Triguarded Fragment of First-Order Logic}. In
  \bibinfo{booktitle}{\emph{{LPAR}}} \emph{(\bibinfo{series}{EPiC Series in
  Computing}, Vol.~\bibinfo{volume}{57})}. \bibinfo{pages}{604--619}.
\newblock


\bibitem[Schaerf(1994)]%
        {Scha94}
\bibfield{author}{\bibinfo{person}{Andrea Schaerf}.}
  \bibinfo{year}{1994}\natexlab{}.
\newblock \showarticletitle{Reasoning with individuals in concept languages}.
\newblock \bibinfo{journal}{\emph{Data Knowledge Engineering}}
  \bibinfo{volume}{13}, \bibinfo{number}{2} (\bibinfo{year}{1994}),
  \bibinfo{pages}{141--176}.
\newblock


\bibitem[Scott(1962)]%
        {scott1962decision}
\bibfield{author}{\bibinfo{person}{Dana Scott}.}
  \bibinfo{year}{1962}\natexlab{}.
\newblock \showarticletitle{A decision method for validity of sentences in two
  variables}.
\newblock \bibinfo{journal}{\emph{Journal of Symbolic Logic}}
  \bibinfo{volume}{27}, \bibinfo{number}{377} (\bibinfo{year}{1962}),
  \bibinfo{pages}{74}.
\newblock


\bibitem[Segoufin and ten Cate(2013)]%
        {DBLP:journals/corr/SegoufinC13}
\bibfield{author}{\bibinfo{person}{Luc Segoufin} {and} \bibinfo{person}{Balder
  ten Cate}.} \bibinfo{year}{2013}\natexlab{}.
\newblock \showarticletitle{Unary negation}.
\newblock \bibinfo{journal}{\emph{Logical Methods in Computer Science}}
  \bibinfo{volume}{9}, \bibinfo{number}{3} (\bibinfo{year}{2013}).
\newblock


\bibitem[Szwast and Tendera(2004)]%
        {ST04}
\bibfield{author}{\bibinfo{person}{Wies\l{}aw Szwast} {and}
  \bibinfo{person}{Lidia Tendera}.} \bibinfo{year}{2004}\natexlab{}.
\newblock \showarticletitle{The guarded fragment with transitive guards}.
\newblock \bibinfo{journal}{\emph{Annals of Pure and Applied Logic}}
  \bibinfo{volume}{128} (\bibinfo{year}{2004}), \bibinfo{pages}{227--276}.
\newblock


\bibitem[ten Cate and Franceschet(2005)]%
        {DBLP:journals/jolli/CateF05}
\bibfield{author}{\bibinfo{person}{Balder ten Cate} {and}
  \bibinfo{person}{Massimo Franceschet}.} \bibinfo{year}{2005}\natexlab{}.
\newblock \showarticletitle{Guarded Fragments with Constants}.
\newblock \bibinfo{journal}{\emph{Journal of Logic, Language and Information}}
  \bibinfo{volume}{14}, \bibinfo{number}{3} (\bibinfo{year}{2005}),
  \bibinfo{pages}{281--288}.
\newblock
\urldef\tempurl%
\url{https://doi.org/10.1007/s10849-005-5787-x}
\showDOI{\tempurl}


\bibitem[van Benthem({[n.\,d.]})]%
        {loosely}
\bibfield{author}{\bibinfo{person}{Johan van Benthem}.}
  \bibinfo{year}{[n.\,d.]}\natexlab{}.
\newblock \bibinfo{title}{{Dynamic bits and pieces. Technical Report LP-97-01,
  ILLC, University of Amsterdam, 1997.}}
\newblock \bibinfo{howpublished}{Available at \url{http://www.illc.
  uva.nl/Publications/reportlist.php?Series=LP}}.
\newblock


\end{thebibliography}

\end{document}